\newcommand{\OPT}{\mbox{\sc OPT}}
\def\mcA{\mathcal{A}}
\def\mcB{\mathcal{B}}
\def\mcC{\mathcal{C}}
\def\mcD{\mathcal{D}}
\def\mcF{\mathcal{F}}
\def\mcJ{\mathcal{J}}
\def\mcO{\mathcal{O}}
\def\mcX{\mathcal{X}}
\def\mcY{\mathcal{Y}}
\newtheorem{theorem}{Theorem}[section]
\newtheorem{lemma}[theorem]{Lemma}
\newtheorem{remark}[theorem]{Remark}
\newenvironment{proof}{{\sc Proof. }}{\hfill$\Box$\vspace{0.2in}}
\title{Approximation algorithms for the three-machine proportionate mixed shop scheduling}
\author{Longcheng~Liu\thanks{School of Mathematical Sciences, Xiamen University. Xiamen, China.
	Email: {\tt longchengliu@xmu.edu.cn}}
	\thanks{Department of Computing Science, University of Alberta. Edmonton, AB T6G 2E8, Canada.
	Emails: {\tt \{rgoebel, guohui\}@ualberta.ca}}
\and
	Yong~Chen\thanks{Department of Mathematics, Hangzhou Dianzi University. Hangzhou, China.
	Emails: {\tt \{chenyong, anzhang\} @hdu.edu.cn}}
	$^{\dagger}$
\and
	Jianming~Dong\thanks{Department of Mathematics, Zhejiang Sci-Tech University. Hangzhou, China.
	Email: {\tt djm226@163.com}}
	$^{\dagger}$
\and
	Randy~Goebel$^{\dagger}$
\and
	Guohui~Lin$^{\dagger}$\thanks{Correspondence author. ORCID: 0000-0003-4283-3396.}
\and
	Yue~Luo$^*$
\and
	Guanqun~Ni\thanks{College of Management, Fujian Agriculture and Forestry University. Fuzhou, China.
	Email: {\tt guanqunni@163.com}}
	$^{\dagger}$
\and
	Bing~Su\thanks{School of Economics and Management, Xi'an Technological University. Xi'an, China.
	Email: {\tt subing684 @sohu.com}}
\and
	An~Zhang$^*$$^{\dagger}$}
\date{\today}
\begin{document}
\maketitle
\begin{abstract}
%==============================================================================
A mixed shop is a manufacturing infrastructure designed to process a mixture of a set of flow-shop jobs and a set of open-shop jobs.
Mixed shops are in general much more complex to schedule than flow-shops and open-shops, and have been studied since the 1980's.
We consider the three machine proportionate mixed shop problem denoted as $M3 \mid prpt \mid C_{\max}$,
in which each job has equal processing times on all three machines.
Koulamas and Kyparisis [{\it European Journal of Operational Research}, 243:70--74,2015] showed that
the problem is solvable in polynomial time in some very special cases;
for the non-solvable case, they proposed a $5/3$-approximation algorithm.
In this paper, we present an improved $4/3$-approximation algorithm and show that this ratio of $4/3$ is asymptotically tight;
when the largest job is a flow-shop job, we present a fully polynomial-time approximation scheme (FPTAS).
On the negative side, while the $F3 \mid prpt \mid C_{\max}$ problem is polynomial-time solvable,
we show an interesting hardness result that adding one open-shop job to the job set makes the problem NP-hard
if this open-shop job is larger than any flow-shop job.
%(denote this problem as $M3 \mid prpt, (n-1, 1), p_1 < q_n \mid C_{\max}$).
We are able to design an FPTAS for this special case too. %the problem $M3 \mid prpt, (n-1, 1), p_1 < q_n \mid C_{\max}$.

\paragraph{Keywords:}
Scheduling; mixed shop; proportionate; approximation algorithm; fully polynomial-time approximation scheme
\end{abstract}
%==============================================================================

\section{Introduction}
\label{sec1}
%==================================================================================================
We study the following three-machine proportionate mixed shop problem, denoted as $M3 \mid prpt \mid C_{\max}$ in the three-field notation \cite{GLL79}.
Given three machines $M_1, M_2, M_3$ and a set $\mcJ = \mcF \cup \mcO$ of jobs,
where $\mcF = \{J_1, J_2, \ldots, J_\ell\}$ and $\mcO = \{J_{\ell+1}, J_{\ell+2}, \ldots, J_n\}$,
each job $J_i \in \mcF$ needs to be processed non-preemptively through $M_1, M_2, M_3$ sequentially with a processing time $p_i$ on each machine
and each job $J_i \in \mcO$ needs to be processed non-preemptively on $M_1, M_2, M_3$ in any machine order, with a processing time $q_i$ on each machine.
The scheduling constraint is usual in that at every time point a job can be processed by at most one machine and a machine can process at most one job.
The objective is to minimize the maximum job completion time, {\it i.e.}, the makespan.

The jobs of $\mcF$ are referred to as {\em flow-shop jobs} and the jobs of $\mcO$ are called {\em open-shop jobs}.
The mixed shop problem is to process such a mixture of a set of flow-shop jobs and a set of open-shop jobs.
We assume without loss of generality that $p_1 \ge p_2 \ge \ldots \ge p_{\ell}$ and $q_{\ell + 1} \ge q_{\ell + 2} \ge \ldots \ge q_n$.

Mixed shops have many real-life applications and have been studied since the 1980's.
The scheduling of medical tests in an outpatient health care facility and the scheduling of classes/exams in an academic institution are two typical examples,
where the patients (students, respectively) must complete a number of medical tests (academic activities, respectively);
some of these activities must be done in a specified sequential order while the others can be finished in any order;
and the time-spans for all these activities should not overlap with each other.
The {\em proportionate} shops were also introduced in the 1980's \cite{OW85} and
they are one of the most specialized shops with respect to the job processing times which have received many studies \cite{PSK13}.

Masuda et al. \cite{MIN85} and Strusevich \cite{Str91} considered the two-machine mixed shop problem to minimize the makespan, {\it i.e.}, $M2 \mid \mid C_{\max}$;
they both showed that the problem is polynomial-time solvable.
Shakhlevich and Sotskov \cite{SS94} studied mixed shops for processing two jobs with an arbitrary regular objective function.
Brucker \cite{Bru95} surveyed the known results on the mixed shop problems either with two machines or for processing two jobs.
Shakhlevich et al. \cite{SSW99} studied the mixed shop problems with more than two machines for processing more than two jobs, with or without preemption.
Shakhlevich et al. \cite{SSW00} reviewed the complexity results on the mixed shop problems with three or more machines
for processing a constant number of jobs. 

When $\mcO = \emptyset$, the $M3 \mid prpt \mid C_{\max}$ problem reduces to the $F3 \mid prpt \mid C_{\max}$ problem,
which is solvable in polynomial time \cite{CT81}.
When $\mcF = \emptyset$, the problem reduces to the $O3 \mid prpt \mid C_{\max}$ problem, which is ordinary NP-hard \cite{LB87}.
It follows that the $M3 \mid prpt \mid C_{\max}$ problem is at least ordinary NP-hard.
Recently, Koulamas and Kyparisis \cite{KK15} showed that for some very special cases, the $M3 \mid prpt \mid C_{\max}$ problem is solvable in polynomial time;
for the non-solvable case, they showed an absolute performance bound of $2 \max\{p_1, q_{\ell+1}\}$ and presented a $5/3$-approximation algorithm.

In this paper, we design an improved $4/3$-approximation algorithm for (the non-solvable case of) the $M3 \mid prpt \mid C_{\max}$ problem,
and show that the performance ratio of $4/3$ is asymptotically tight.
When the largest job is a flow-shop job, that is $p_1 \ge q_{\ell + 1}$, we present a {\em fully polynomial-time approximation scheme} (FPTAS).
On the negative side, while the $F3 \mid prpt \mid C_{\max}$ problem is polynomial-time solvable,
we show an interesting hardness result that adding one single open-shop job to the job set makes the problem NP-hard
if this open-shop job is larger than any flow-shop job (that is, $\mcF = \{J_1, J_2, \ldots, J_{n-1}\}$ and $\mcO = \{J_n\}$, and $q_n > p_1$).
We construct the reduction from the well-known {\sc Partition} problem \cite{GJ79}.
Denote the special case in which $|\mcF| = n-1$ and $|\mcO| = 1$ as $M3 \mid prpt, (n-1, 1) \mid C_{\max}$.
We propose an FPTAS for this special case $M3 \mid prpt, (n-1, 1) \mid C_{\max}$.

The rest of the paper is organized as follows.
In Section 2, we introduce some notation and present a lower bound on the optimal makespan $C^*_{\max}$.
We present in Section 3 the FPTAS for the $M3 \mid prpt \mid C_{\max}$ problem when $p_1 \ge q_{\ell + 1}$.
The $4/3$-approximation algorithm for the case where $p_1 < q_{\ell + 1}$ is presented in Section 4,
and the performance ratio of $4/3$ is shown to be asymptotically tight.
We show in Section 5 that, when the open-shop job $J_n$ is the only largest, the special case $M3 \mid prpt, (n-1, 1) \mid C_{\max}$ is NP-hard,
through a reduction from the {\sc Partition} problem.
Section 6 contains an FPTAS for the special case $M3 \mid prpt, (n-1, 1) \mid C_{\max}$.
We conclude the paper with some remarks in Section 7.

\section{Preliminaries} \label{sec15}
%==================================================================================================
For any subset of jobs $\mcX \subseteq \mcF$, the \textit{total processing time} of the jobs of $\mcX$ on one machine is denoted as
\[
P(\mcX) = \sum_{J_i \in \mcX} p_i.
\]
For any subset of jobs $\mcY \subseteq \mcO$, the \textit{total processing time} of the jobs of $\mcY$ on one machine is denoted as
\[
Q(\mcY) = \sum_{J_i \in \mcY} q_i.
\]
The set minus operation $\mcJ \setminus \{J\}$ for a single job $J \in \mcJ$ is abbreviated as $\mcJ \setminus J$ throughout the paper.

In a schedule $\pi$, we use $S_j^i$ and $C_j^i$ to denote the start time and the finish time of the job $J_j$ on the machine $M_i$, respectively, 
for $i= 1, 2, 3$ and $j=1, 2, \ldots, n$.

Given that the {\em load} ({\it i.e.}, the total job processing time) of each machine is $P(\mcF) + Q(\mcO)$,
the job $J_{\ell +1}$ has to be processed by all three machines,
and one needs to process all the flow-shop jobs of $\mcF$,
the following lower bound on the optimum $C_{\max}^*$ is established \cite{CT81,KK15}:
\begin{equation}
\label{eq1}
C_{\max}^* \ge \max \{P(\mcF) + Q(\mcO), \ 3q_{\ell + 1}, \ 2 p_1 + P(\mcF)\}.
\end{equation}

\section{An FPTAS for the case where $p_1 \ge q_{\ell + 1}$} \label{sec2}
%==================================================================================================
In this section, we design an approximation algorithm $A(\epsilon)$ for the $M3 \mid prpt \mid C_{\max}$ problem when $p_1 \ge q_{\ell + 1}$,
for any given $\epsilon > 0$.
The algorithm $A(\epsilon)$ produces a schedule $\pi$ with its makespan $C_{\max}^{\pi} < (1 + \epsilon) C_{\max}^*$,
and its running time polynomial in both $n$ and ${1}/{\epsilon}$.

Consider a bipartition $\{ \mcA, \mcB \}$ of the job set $\mcO = \{J_{\ell+1}, J_{\ell+2}, \ldots, J_n\}$,
{\it i.e.}, $\mcA \cup \mcB = \mcO$ and $\mcA \cap \mcB = \emptyset$.
Throughout the paper, a part of the bipartition is allowed to be empty.
The following six-step \textit{procedure} {\sc Proc}$(\mcA, \mcB, \mcF)$ produces a schedule $\pi$:

\begin{enumerate}
\item the jobs of $\mcF$ are processed in the same {\em longest processing time} (LPT) order on all three machines, 
	and every job is processed first on $M_1$, then on $M_2$, lastly on $M_3$;

\item the jobs of $\mcA$ are processed in the same LPT order on all three machines,
	and every one is processed first on $M_2$, then on $M_3$, lastly on $M_1$;

\item the jobs of $\mcB$ are processed in the same LPT order on all three machines,
	and every one is processed first on $M_3$, then on $M_1$, lastly on $M_2$;
	and

\item the machine $M_1$ processes (the jobs of) $\mcF$ first, then $\mcB$, lastly $\mcA$, denoted as $\langle\mcF, \mcB, \mcA\rangle$;

\item the machine $M_2$ processes $\mcA$ first, then $\mcF$, lastly $\mcB$, denoted as $\langle\mcA, \mcF, \mcB\rangle$;

\item the machine $M_3$ processes $\mcB$ first, then $\mcA$, lastly $\mcF$, denoted as $\langle\mcB, \mcA, \mcF\rangle$.
\end{enumerate}
{\sc Proc}$(\mcA, \mcB, \mcF)$ runs in $O(n \log n)$ time to produce the schedule $\pi$, of which an illustration is shown in Figure~\ref{fig31}.

\begin{figure}[ht]
\centering
  \setlength{\unitlength}{0.9bp}%
  \begin{picture}(201.85, 99.05)(0,0)
  \put(20,5){\includegraphics[scale=0.9]{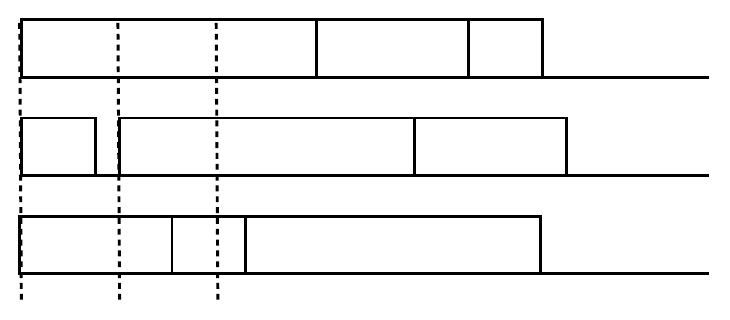}}
  \put(0.0,78.596){\fontsize{14.23}{17.07}\selectfont $M_1$}
  \put(0.0,49.73){\fontsize{14.23}{17.07}\selectfont $M_2$}
  \put(0.0,20.73){\fontsize{14.23}{17.07}\selectfont $M_3$}
  \put(23.0,0.0){\fontsize{14.23}{17.07}\selectfont $0$}
  \put(50.0,0.0){\fontsize{14.23}{17.07}\selectfont $p_1$}
  \put(70.97,0.0){\fontsize{14.23}{17.07}\selectfont $2p_1$}  
  \put(60.23,78.596){\fontsize{14.23}{17.07}\selectfont $\mcF$}
  \put(90.23,49.73){\fontsize{14.23}{17.07}\selectfont $\mcF$}
  \put(130.23,20.73){\fontsize{14.23}{17.07}\selectfont $\mcF$}
  \put(160.23,78.596){\fontsize{14.23}{17.07}\selectfont $\mcA$}
  \put(30.23,49.73){\fontsize{14.23}{17.07}\selectfont $\mcA$}
  \put(73.23,20.73){\fontsize{14.23}{17.07}\selectfont $\mcA$}
  \put(130.23,78.596){\fontsize{14.23}{17.07}\selectfont $\mcB$}
  \put(155.23,49.73){\fontsize{14.23}{17.07}\selectfont $\mcB$}
  \put(38.23,20.73){\fontsize{14.23}{17.07}\selectfont $\mcB$}
  \end{picture}%
\caption{An illustration of the schedule $\pi$ produced by {\sc Proc}$(\mcA, \mcB, \mcF)$,
	where $\{\mcA, \mcB\}$ is a bipartition of the set $\mcO$ and
	the jobs of each of $\mcA, \mcB, \mcF$ are processed in the same LPT order on all three machines.\label{fig31}}
\end{figure}

\begin{lemma}
\label{lemma31}
If both $Q(\mcA) \le p_1$ and $Q(\mcB) \le p_1$, then the schedule $\pi$ produced by {\sc Proc}$(\mcA, \mcB, \mcF)$ is optimal, 
with its makespan $C_{\max}^{\pi} = C_{\max}^* = 2p_1 + P(\mcF)$.
\end{lemma}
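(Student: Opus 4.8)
The plan is to bound $C_{\max}^{\pi}$ from above by $2p_1 + P(\mcF)$ and then invoke the lower bound in~(\ref{eq1}): since $C_{\max}^* \ge 2p_1 + P(\mcF)$ and $\pi$ is feasible, showing $C_{\max}^{\pi} \le 2p_1 + P(\mcF)$ immediately forces $C_{\max}^{\pi} = C_{\max}^* = 2p_1 + P(\mcF)$. Two facts will be used throughout. First, the hypothesis $p_1 \ge q_{\ell+1}$ makes $p_1$ the largest processing time over all $n$ jobs, and $P(\mcF) = \sum_{i=1}^{\ell} p_i \ge p_1$. Second -- call it the \emph{block fact} -- because each of $\mcF, \mcA, \mcB$ is run in the same LPT order on the three machines it visits, such a set behaves like one block: if its jobs have largest processing time $x$ and total $X$, occupy a contiguous interval $[t, t+X]$ on the machine serving as its first stage, and are then scheduled on its second-stage machine starting at a time $t' \ge t + x$ at which that machine is and stays free, then the block again occupies a contiguous interval $[t', t'+X]$ there, and likewise from the second to the third stage. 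This is the routine proportionate-flow-shop estimate: the $i$-th job of the block leaves stage one at $t + x_1 + \cdots + x_i$, and run back-to-back from $t'$ it would begin at $t' + x_1 + \cdots + x_{i-1} \ge t + x_1 + \cdots + x_i$ precisely because $t' \ge t + x \ge t + x_i$. Moreover the block cannot begin its second stage before $t+x$, since its first and largest job is released by stage one only at time $t+x$; so its actual start is $\max\{\text{machine-free time},\, t+x\} \ge t+x$, and the block fact always applies. (An empty block is vacuous.)

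With these in hand I would trace $\pi$ in three rounds. \emph{First stages:} the three blocks all start at time $0$ on distinct machines, so $\mcF$ occupies $[0, P(\mcF)]$ on $M_1$, $\mcA$ occupies $[0, Q(\mcA)]$ on $M_2$, and $\mcB$ occupies $[0, Q(\mcB)]$ on $M_3$, where $Q(\mcA), Q(\mcB) \le p_1$ by hypothesis. \emph{Second stages:} on $M_2$ block $\mcF$ follows $\mcA$ and its first job $J_1$ is released by $M_1$ at $p_1 \ge Q(\mcA)$, so $\mcF$ occupies $[p_1, p_1 + P(\mcF)]$ on $M_2$; on $M_3$ block $\mcA$ follows $\mcB$ and its largest job leaves $M_2$ by time $q_{\ell+1} \le p_1$, so $\mcA$ starts on $M_3$ at $\max\{Q(\mcB),\, q_{\ell+1}\} \le p_1$ and clears $M_3$ by $2p_1$; on $M_1$ block $\mcB$ follows $\mcF$, and since its largest job leaves $M_3$ by $q_{\ell+1} \le p_1 \le P(\mcF)$, block $\mcB$ occupies $[P(\mcF), P(\mcF) + Q(\mcB)]$ on $M_1$. \emph{Third stages:} on $M_3$ block $\mcF$ follows $\mcA$ and $J_1$ is released by $M_2$ at $2p_1$, which is at least the time $\mcA$ clears $M_3$, so $\mcF$ occupies $[2p_1,\, 2p_1 + P(\mcF)]$ on $M_3$ and completes at $2p_1 + P(\mcF)$; on $M_1$ block $\mcA$ follows $\mcB$ and starts at $\max\{P(\mcF) + Q(\mcB),\, r_{\mcA}\}$ where its release $r_{\mcA} \le 2p_1$, hence at most $\max\{P(\mcF) + p_1,\, 2p_1\} = P(\mcF) + p_1$ by $P(\mcF) \ge p_1$, so it completes by $P(\mcF) + p_1 + Q(\mcA) \le 2p_1 + P(\mcF)$; on $M_2$ block $\mcB$ follows $\mcF$ and starts at $\max\{p_1 + P(\mcF),\, P(\mcF) + q_{\ell+1}\} = p_1 + P(\mcF)$, completing by $p_1 + P(\mcF) + Q(\mcB) \le 2p_1 + P(\mcF)$. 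Thus every job finishes by $2p_1 + P(\mcF)$, and with~(\ref{eq1}) this gives the claim.

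I expect the only real work to be the nine $(\text{machine},\text{stage})$ verifications above -- at each slot confirming the machine is idle when the block arrives and that the block then runs without internal gaps via the block fact, including tracking the release time of each block's largest job from its previous stage. What makes all nine go through is exactly the pair $Q(\mcA), Q(\mcB) \le p_1$ together with $P(\mcF) \ge p_1 \ge q_{\ell+1}$: these keep the two ``wrap-around'' blocks $\mcA$ and $\mcB$ from overflowing past $2p_1 + P(\mcF)$, while $\mcF$ itself always ends exactly at $2p_1 + P(\mcF)$, matching the lower bound.
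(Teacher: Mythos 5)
Your proposal is correct and follows essentially the same route as the paper: the paper simply asserts that the schedule of Figure~\ref{fig32} is feasible because $Q(\mcA)\le p_1$ and $Q(\mcB)\le p_1$, that the makespan is attained on $M_3$ at $2p_1+P(\mcF)$, and that this matches the lower bound~(\ref{eq1}); your nine machine--stage verifications via the LPT ``block fact'' are just the explicit details the paper leaves to the figure.
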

\begin{proof}
The schedule depicted in Figure~\ref{fig32} is feasible since we have the following inequalities:
$Q(\mcA) \le p_1$, $Q(\mcB) \le p_1$.
The makespan is achieved on $M_3$ and $C_{\max}^{\pi} = 2p_1 + P(\mcF)$,
which meets the lower bound in Eq.~(\ref{eq1}).
\end{proof}

\begin{figure}[ht]
\centering
  \setlength{\unitlength}{0.9bp}%
  \begin{picture}(201.85, 99.05)(0,0)
  \put(20,4){\includegraphics[scale=0.9]{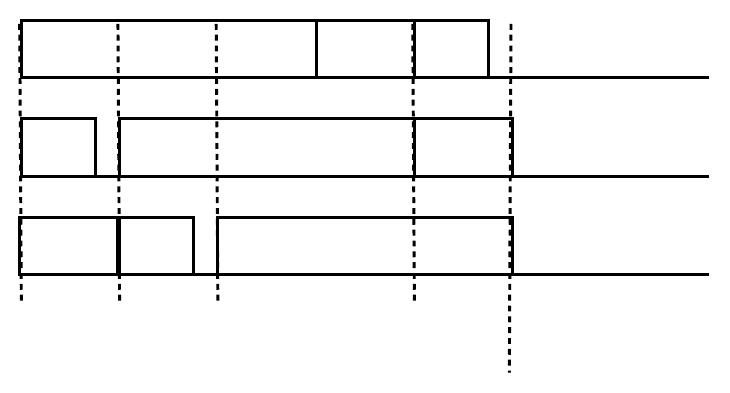}}
  \put(0.0,98.596){\fontsize{14.23}{17.07}\selectfont $M_1$}
  \put(0.0,69.73){\fontsize{14.23}{17.07}\selectfont $M_2$}
  \put(0.0,40.73){\fontsize{14.23}{17.07}\selectfont $M_3$}
  \put(23.0,20.0){\fontsize{10.23}{17.07}\selectfont $0$}
  \put(50.0,20.0){\fontsize{10.23}{17.07}\selectfont $p_1$}
  \put(76.97,20.0){\fontsize{10.23}{17.07}\selectfont $2p_1$}
  \put(110.97,20.0){\fontsize{10.23}{17.07}\selectfont $p_1 + P(\mcF)$}
  \put(155.97,0.0){\fontsize{10.23}{17.07}\selectfont $2p_1 + P(\mcF)$}  
  \put(60.23,98.596){\fontsize{14.23}{17.07}\selectfont $\mcF$}
  \put(95.23,69.73){\fontsize{14.23}{17.07}\selectfont $\mcF$}
  \put(120.23,40.73){\fontsize{14.23}{17.07}\selectfont $\mcF$}
  \put(145.23,98.596){\fontsize{14.23}{17.07}\selectfont $\mcA$}
  \put(30.23,69.73){\fontsize{14.23}{17.07}\selectfont $\mcA$}
  \put(58.23,40.73){\fontsize{14.23}{17.07}\selectfont $\mcA$}
  \put(120.23,98.596){\fontsize{14.23}{17.07}\selectfont $\mcB$}
  \put(148.23,69.73){\fontsize{14.23}{17.07}\selectfont $\mcB$}
  \put(35.23,40.73){\fontsize{14.23}{17.07}\selectfont $\mcB$}
  \end{picture}%
\caption{An illustration of the schedule $\pi$ produced by {\sc Proc}$(\mcA, \mcB, \mcF)$ when both $Q(\mcA) \le p_1$ and $Q(\mcB) \le p_1$.\label{fig32}}
\end{figure}

\begin{lemma}
\label{lemma32}
If both $Q(\mcA) \ge p_1$ and $Q(\mcB) \ge p_1$, then the schedule $\pi$ produced by {\sc Proc}$(\mcA, \mcB, \mcF)$ is optimal, 
with its makespan $C_{\max}^{\pi} = C_{\max}^* = P(\mcF) + Q(\mcO)$.
\end{lemma}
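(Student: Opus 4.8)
The plan is to mimic the proof of Lemma~\ref{lemma31}: write down the schedule $\pi$ explicitly, check that it is feasible, and observe that its makespan already equals the lower bound of Eq.~(\ref{eq1}). Packing the three blocks on each machine with no idle time, {\sc Proc}$(\mcA,\mcB,\mcF)$ places, on $M_1$ with block order $\langle\mcF,\mcB,\mcA\rangle$, the windows $[0,P(\mcF)]$, $[P(\mcF),P(\mcF)+Q(\mcB)]$, $[P(\mcF)+Q(\mcB),P(\mcF)+Q(\mcO)]$; on $M_2$ with order $\langle\mcA,\mcF,\mcB\rangle$, the windows $[0,Q(\mcA)]$, $[Q(\mcA),Q(\mcA)+P(\mcF)]$, $[Q(\mcA)+P(\mcF),P(\mcF)+Q(\mcO)]$; and on $M_3$ with order $\langle\mcB,\mcA,\mcF\rangle$, the windows $[0,Q(\mcB)]$, $[Q(\mcB),Q(\mcB)+Q(\mcA)]$, $[Q(\mcB)+Q(\mcA),P(\mcF)+Q(\mcO)]$, using $Q(\mcA)+Q(\mcB)=Q(\mcO)$. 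Since no machine is ever idle, each finishes exactly at $P(\mcF)+Q(\mcO)$; I would accompany this with a figure analogous to Figure~\ref{fig32}.

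The work is the feasibility check, in two parts. First, within each block the jobs run one after another on each machine, so no two jobs of a block overlap on a machine, and the three blocks occupy pairwise-disjoint windows on each machine; the fact that each block uses the \emph{same} LPT order on all three machines is what I will use next. Second, for the routing constraints I would take a generic job and compare, via prefix sums in the common LPT order of its block, its start on its second (resp.\ third) machine with its completion on its first (resp.\ second) machine. A short computation collapses every such inequality into one of: $Q(\mcA)\ge p_i$ and $Q(\mcB)\ge p_i$ for a flow-shop job $J_i$ routed $M_1\to M_2\to M_3$; $Q(\mcB)\ge q_j$ and $P(\mcF)\ge q_j$ for $J_j\in\mcA$ routed $M_2\to M_3\to M_1$; and $P(\mcF)\ge q_j$ and $Q(\mcA)\ge q_j$ for $J_j\in\mcB$ routed $M_3\to M_1\to M_2$. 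All of these hold: $p_i\le p_1$ and $q_j\le q_{\ell+1}\le p_1$ by the standing assumption of this section, $P(\mcF)\ge p_1$ since $J_1\in\mcF$, and $Q(\mcA)\ge p_1$, $Q(\mcB)\ge p_1$ by hypothesis.

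Putting the pieces together, $\pi$ is feasible with $C_{\max}^{\pi}=P(\mcF)+Q(\mcO)$, while Eq.~(\ref{eq1}) gives $C_{\max}^{*}\ge P(\mcF)+Q(\mcO)$; hence $\pi$ is optimal and $C_{\max}^{*}=P(\mcF)+Q(\mcO)$. The one point that deserves attention is the second feasibility part: one might hope the open-shop routing constraints follow from $Q(\mcA),Q(\mcB)\ge p_1$ alone, but the requirements that an $\mcA$-job leave $M_3$ before entering $M_1$ and that a $\mcB$-job leave $M_3$ before entering $M_1$ both reduce to $P(\mcF)\ge q_{\ell+1}$, i.e.\ to $p_1\ge q_{\ell+1}$ (and, implicitly, $\mcF\neq\emptyset$) --- which is exactly why the lemma sits in this section. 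The rest is bookkeeping.
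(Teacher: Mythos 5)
Your proposal is correct and follows essentially the same route as the paper: show that the schedule produced by {\sc Proc}$(\mcA,\mcB,\mcF)$ keeps all three machines busy throughout $[0,P(\mcF)+Q(\mcO)]$, so its makespan meets the lower bound of Eq.~(\ref{eq1}). Your job-by-job prefix-sum verification of the routing constraints (reducing them to $Q(\mcA)\ge p_1$, $Q(\mcB)\ge p_1$, and $P(\mcF)\ge p_1\ge q_{\ell+1}$) simply makes explicit the feasibility checks that the paper's proof asserts in one line per machine.
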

\begin{proof}
In this case, in the schedule $\pi$ produced by {\sc Proc}$(\mcA, \mcB, \mcF)$,
the machine $M_1$ does not idle since $p_1 \ge q_{\ell + 1}$ and the jobs of $\mcA$, $\mcB$ and $\mcF$ are processed in the LPT order.
The machine $M_2$ does not idle either since $Q(\mcA) \ge p_1$ and the jobs of $\mcA$, $\mcB$ and $\mcF$ are processed in the LPT order.
The machine $M_3$ does not idle since $Q(\mcB) \ge p_1 \ge q_{\ell + 1}$, $Q(\mcA) + Q(\mcB) \ge 2p_1$,
and the jobs of $\mcA$, $\mcB$ and $\mcF$ are processed in the LPT order.
Therefore, the makespan of this schedule is $C_{\max}^{\pi} = P(\mcF) + Q(\mcO)$,
which meets the lower bound in Eq.~(\ref{eq1}) (see for an illustration in Figure~\ref{fig33}).
This finishes the proof of the lemma.
\end{proof}

\begin{figure}[ht]
\centering
  \setlength{\unitlength}{0.9bp}%
  \begin{picture}(201.85, 89.05)(0,0)
  \put(20,5){\includegraphics[scale=0.9]{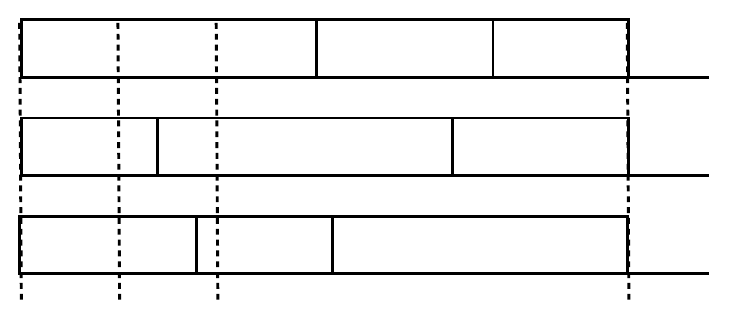}}
  \put(0.0,78.596){\fontsize{14.23}{17.07}\selectfont $M_1$}
  \put(0.0,49.73){\fontsize{14.23}{17.07}\selectfont $M_2$}
  \put(0.0,20.73){\fontsize{14.23}{17.07}\selectfont $M_3$}
  \put(23.0,0.0){\fontsize{14.23}{17.07}\selectfont $0$}
  \put(50.0,0.0){\fontsize{14.23}{17.07}\selectfont $p_1$}
  \put(70.97,0.0){\fontsize{14.23}{17.07}\selectfont $2p_1$}
  \put(160.97,0.0){\fontsize{14.23}{17.07}\selectfont $P(\mcF) + Q(\mcO)$}  
  \put(60.23,78.596){\fontsize{14.23}{17.07}\selectfont $\mcF$}
  \put(105.23,49.73){\fontsize{14.23}{17.07}\selectfont $\mcF$}
  \put(140.23,20.73){\fontsize{14.23}{17.07}\selectfont $\mcF$}
  \put(175.23,78.596){\fontsize{14.23}{17.07}\selectfont $\mcA$}
  \put(40.23,49.73){\fontsize{14.23}{17.07}\selectfont $\mcA$}
  \put(88.23,20.73){\fontsize{14.23}{17.07}\selectfont $\mcA$}
  \put(135.23,78.596){\fontsize{14.23}{17.07}\selectfont $\mcB$}
  \put(168.23,49.73){\fontsize{14.23}{17.07}\selectfont $\mcB$}
  \put(55.23,20.73){\fontsize{14.23}{17.07}\selectfont $\mcB$}
  \end{picture}%
\caption{An illustration of the schedule $\pi$ produced by {\sc Proc}$(\mcA, \mcB, \mcF)$ when both $Q(\mcA) \ge p_1$ and $Q(\mcB) \ge p_1$.\label{fig33}}
\end{figure}

Now we are ready to present the approximation algorithm $A(\epsilon)$, for any $\epsilon >0$.

In the first step, 
we check whether $Q(\mcO) \le p_1$ or not.
If $Q(\mcO) \le p_1$, then we run {\sc Proc}$(\mcO, \emptyset, \mcF)$ to construct a schedule $\pi$ and terminate the algorithm.
The schedule $\pi$ is optimal by Lemma \ref{lemma31}.

In the second step, 
the algorithm $A(\epsilon)$ constructs an instance of the {\sc Knapsack} problem \cite{GJ79},
in which there is an item corresponding to the job $J_i \in \mcO$, also denoted as $J_i$.
The item $J_i$ has a profit $q_i$ and a size $q_i$.
The capacity of the knapsack is $p_1$.
The {\sc Min-Knapsack} problem is to find a subset of items of minimum profit that \textit{cannot} be packed into the knapsack,
and it admits an FPTAS \cite{KPP04}. 
The algorithm $A(\epsilon)$ runs a $(1 + \epsilon)$-approximation algorithm for the {\sc Min-Knapsack} problem to obtain a job subset $\mcA$.
It then runs {\sc Proc}$(\mcA, \mcO \setminus \mcA, \mcF)$ to construct a schedule, denoted as $\pi^1$.

The {\sc Max-Knapsack} problem is to find a subset of items of maximum profit that can be packed into the knapsack, and it admits an FPTAS, too \cite{KP04}. 
In the third step, 
the algorithm $A(\epsilon)$ runs a $(1 - \epsilon)$-approximation algorithm for the {\sc Max-Knapsack} problem to obtain a job subset $\mcB$.
Then it runs {\sc Proc}$(\mcO \setminus \mcB, \mcB, \mcF)$ to construct a schedule, denoted as $\pi^2$.

The algorithm $A(\epsilon)$ outputs the schedule with a smaller makespan between $\pi^1$ and $\pi^2$. 
A high-level description of the algorithm $A(\epsilon)$ is provided in Figure~\ref{fig34}.

\begin{figure}[ht]
\begin{center}
\framebox{
\begin{minipage}{5.0in}
	{\sc Algorithm}  $A(\epsilon)$:
	\begin{enumerate}%[{Step} 1.]
		\item If $Q(\mcO) \le p_1$, 
		      then run {\sc Proc}($\mcO, \emptyset, \mcF$) to produce a schedule $\pi$;
		      output the schedule $\pi$.
		\item Construct an instance of {\sc Knapsack}, 
		      where an item $J_i$ corresponds to the job $J_i \in \mcO$;
		      $J_i$ has a profit $q_i$ and a size $q_i$;
		      the capacity of the knapsack is $p_1$. 
			\begin{description}
			\parskip=0pt
		    \item[2.1.] Run a $(1+\epsilon)$-approximation for {\sc Min-Knapsack} to obtain a job subset $\mcA$. 
		    \item[2.2.]	Run {\sc Proc}($\mcA, \mcO \setminus  \mcA, \mcF$) to construct a schedule $\pi^1$.
			\end{description}
        \item 
			\begin{description}
			\parskip=0pt
		    \item[3.1.] Run a $(1-\epsilon)$-approximation for {\sc Max-Knapsack} to obtain a job subset $\mcB$.
		    \item[3.2.]	Run {\sc Proc}($\mcO \setminus  \mcB, \mcB, \mcF$) to construct a schedule $\pi^2$.
			\end{description}
        \item Output the schedule with a smaller makespan between $\pi^1$ and $\pi^2$.		
	\end{enumerate}
\end{minipage}}
\end{center}
\caption{A high-level description of the algorithm $A(\epsilon)$.\label{fig34}}
\end{figure}

In the following performance analysis, we assume without of loss of generality that $Q(\mcO) > p_1$.
We have the following (in-)equalities inside the algorithm $A(\epsilon)$:

\begin{eqnarray}
\OPT^1 &= &\min \{ Q(\mcX) \mid \mcX \subseteq \mcO,~ Q(\mcX) > p_1 \};\label{eq2}\\
p_1    &< &Q(\mcA) \ \ \le \ \ (1+\epsilon) \OPT^1;\label{eq3}\\
\OPT^2 &= &\max \{ Q(\mcY) \mid \mcY \subseteq \mcO,~ Q(\mcY) \le p_1 \};\label{eq4}\\
p_1    &\ge &Q(\mcB) \ \ \ge \ \ (1-\epsilon) \OPT^2,\label{eq5}
\end{eqnarray}
where $\OPT^1$ ($\OPT^2$, respectively) is the optimum to the constructed {\sc Min-Knapsack} ({\sc Max-Knapsack}, respectively) problem.

\begin{lemma}
\label{lemma33}
In the algorithm $A(\epsilon)$,
if $Q(\mcO \setminus \mcA) \le p_1 - \epsilon \OPT^1$,
then for any  bipartition $\{\mcX, \mcY\}$ of the job set $\mcO$, $Q(\mcX) > p_1$ implies $Q(\mcY) \le p_1$.
\end{lemma}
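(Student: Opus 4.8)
The plan is to first reformulate the somewhat opaque hypothesis $Q(\mcO \setminus \mcA) \le p_1 - \epsilon \OPT^1$ as a clean upper bound on $Q(\mcO)$, and then to exploit the minimality built into the definition of $\OPT^1$ to rule out any bipartition with both parts "large".

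First I would write $Q(\mcO \setminus \mcA) = Q(\mcO) - Q(\mcA)$ and invoke Eq.~(\ref{eq3}), which gives $Q(\mcA) \le (1+\epsilon)\OPT^1$. Plugging this into the hypothesis,
\[
Q(\mcO) - (1+\epsilon)\OPT^1 \ \le\ Q(\mcO) - Q(\mcA) \ =\ Q(\mcO \setminus \mcA) \ \le\ p_1 - \epsilon\OPT^1 ,
\]
and cancelling the common term $\epsilon\OPT^1$ from both ends yields the transparent estimate $Q(\mcO) \le p_1 + \OPT^1$.

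Next, take an arbitrary bipartition $\{\mcX, \mcY\}$ of $\mcO$ with $Q(\mcX) > p_1$. Then $\mcX$ is itself a feasible solution of the constructed {\sc Min-Knapsack} instance (its total size exceeds the capacity $p_1$), so the definition of $\OPT^1$ in Eq.~(\ref{eq2}) gives $\OPT^1 \le Q(\mcX)$. Combining with the bound from the previous step,
\[
Q(\mcY) \ =\ Q(\mcO) - Q(\mcX) \ \le\ \bigl(p_1 + \OPT^1\bigr) - Q(\mcX) \ \le\ \bigl(p_1 + \OPT^1\bigr) - \OPT^1 \ =\ p_1 ,
\]
which is exactly the claimed implication.

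I do not expect a real obstacle here; the only step that takes a moment's thought is recognizing that, given Eq.~(\ref{eq3}), the hypothesis is equivalent to $Q(\mcO) \le p_1 + \OPT^1$, after which the minimality of $\OPT^1$ does all the work. It is worth a quick sanity check of the degenerate cases: if $\mcX = \emptyset$ then $Q(\mcX) = 0 \le p_1$ so the implication is vacuous, and every inequality above remains valid when $\mcY = \emptyset$; also the $\min$ defining $\OPT^1$ is well defined because we are working under the standing assumption $Q(\mcO) > p_1$, so $\mcX = \mcO$ is feasible for {\sc Min-Knapsack}.
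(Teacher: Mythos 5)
Your proof is correct and is essentially the same argument as the paper's: both rest on the approximation guarantee $Q(\mcA)\le(1+\epsilon)\OPT^1$ from Eq.~(\ref{eq3}), the minimality bound $\OPT^1\le Q(\mcX)$ from Eq.~(\ref{eq2}), and the identity $Q(\mcO)=Q(\mcA)+Q(\mcO\setminus\mcA)=Q(\mcX)+Q(\mcY)$. The only difference is cosmetic: you isolate the intermediate estimate $Q(\mcO)\le p_1+\OPT^1$ before subtracting $Q(\mcX)$, whereas the paper carries the term $\epsilon\OPT^1$ through a single chain of inequalities.
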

\begin{proof}
Note that the job subset $\mcA$ is computed in Step 2.1 of the algorithm $A(\epsilon)$, and it satisfies Eq.~(\ref{eq3}).
By the definition of $\OPT^1$ in Eq.~(\ref{eq2}) and using Eq.~(\ref{eq3}), we have $Q(\mcX) \ge \OPT^1 \ge Q(\mcA) - \epsilon \OPT^1$.
Furthermore, from the fact that $Q(\mcO) = Q(\mcX) + Q(\mcY) = Q(\mcA) + Q(\mcO \setminus \mcA)$ 
and the assumption that $Q(\mcO \setminus \mcA) \le p_1 - \epsilon \OPT^1$,
we have
\begin{eqnarray*}
Q(\mcY) & =   & Q(\mcA) + Q(\mcO \setminus \mcA) - Q(\mcX) \\
        & \le & Q(\mcA) + Q(\mcO \setminus \mcA) - (Q(\mcA) - \epsilon \OPT^1) \\
        & =   & Q(\mcO \setminus \mcA) + \epsilon \OPT^1 \\
        & \le & p_1 - \epsilon \OPT^1 + \epsilon \OPT^1 \\
        & =   & p_1.
\end{eqnarray*}
This finishes the proof of the lemma.
\end{proof}

\begin{lemma}
\label{lemma34}
In the algorithm $A(\epsilon)$,
if $Q(\mcO \setminus \mcA) \le p_1 - \epsilon \OPT^1$, then $C_{\max}^* \ge P(\mcF) + Q(\mcO) + p_1 - \OPT^2.$
\end{lemma}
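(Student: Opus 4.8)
The plan is to combine Lemma~\ref{lemma33} with a short inspection of how the flow job $J_1$ is processed in an optimal schedule, splitting into two cases according to how much open‑job work precedes $J_1$ on $M_2$. First I would extract from the hypothesis the single fact I need: applying Lemma~\ref{lemma33} with $\{\mcX,\mcO\setminus\mcX\}$ an arbitrary bipartition of $\mcO$, the assumption $Q(\mcO\setminus\mcA)\le p_1-\epsilon\OPT^1$ gives that $Q(\mcX)>p_1$ forces $Q(\mcO\setminus\mcX)\le p_1$, hence $Q(\mcO\setminus\mcX)\le\OPT^2$ by the definition~(\ref{eq4}) of $\OPT^2$. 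Then fix an optimal schedule and let $[S_1^i,C_1^i]$ be the interval in which $J_1$ is processed on $M_i$; since $J_1$ is routed $M_1\to M_2\to M_3$ we have $C_1^1\le S_1^2$ and $C_1^2\le S_1^3$, so $S_1^2\ge C_1^1\ge p_1$ and $S_1^3\ge C_1^2=S_1^2+p_1$.

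Next I would look at $M_2$ on $[0,S_1^2)$. Let $\mcF'$ be the flow jobs and $\mcA^\flat\subseteq\mcO$ the open jobs whose $M_2$‑interval ends no later than $S_1^2$; since $J_1$ starts on $M_2$ exactly at $S_1^2$, the work processed on $M_2$ in $[0,S_1^2)$ is exactly $P(\mcF')+Q(\mcA^\flat)$, so $M_2$ is idle there for $S_1^2-P(\mcF')-Q(\mcA^\flat)$. The key observation is that every $J_j\in\mcF'$ is routed through $M_1$ before $M_2$, so its $M_1$‑interval, like that of $J_1$, lies entirely in $[0,S_1^2)$; hence $S_1^2\ge p_1+P(\mcF')$, and the idle of $M_2$ on $[0,S_1^2)$ is at least $p_1-Q(\mcA^\flat)$. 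If $Q(\mcA^\flat)\le p_1$, then $Q(\mcA^\flat)\le\OPT^2$, so $M_2$ idles at least $p_1-\OPT^2$; since the last completion time on $M_2$ equals the load $P(\mcF)+Q(\mcO)$ plus the total idle time of $M_2$, we obtain $C^*_{\max}\ge P(\mcF)+Q(\mcO)+p_1-\OPT^2$.

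The remaining case is $Q(\mcA^\flat)>p_1$. Here the consequence of Lemma~\ref{lemma33} extracted above gives $Q(\mcO\setminus\mcA^\flat)\le\OPT^2$, hence $Q(\mcA^\flat)=Q(\mcO)-Q(\mcO\setminus\mcA^\flat)\ge Q(\mcO)-\OPT^2$; since the $M_2$‑intervals of the jobs of $\mcA^\flat$ are disjoint and contained in $[0,S_1^2)$, this yields $S_1^2\ge Q(\mcA^\flat)\ge Q(\mcO)-\OPT^2$, and therefore $C_1^3=S_1^3+p_1\ge S_1^2+2p_1\ge Q(\mcO)-\OPT^2+2p_1$. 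Assuming (as is standard for proportionate shops) that the flow jobs are sequenced in LPT order on each machine, $J_1$ precedes every other flow job on $M_3$, so all of the remaining flow‑job processing $P(\mcF)-p_1$ on $M_3$ lies after $C_1^3$; thus $C^*_{\max}\ge C_1^3+(P(\mcF)-p_1)\ge P(\mcF)+Q(\mcO)+p_1-\OPT^2$, which finishes the argument.

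I expect the delicate point to be this second (``heavy'') case: there the $M_2$‑idle estimate collapses and the bound has to be transferred to $M_3$, and that is precisely where Lemma~\ref{lemma33} is indispensable — it is what forces the complementary open mass $\mcO\setminus\mcA^\flat$ to be light and hence $\mcA^\flat$ to occupy almost all of $[0,S_1^2)$ on $M_2$ — and where one needs the LPT normalization of the flow jobs so that $\mcF\setminus\{J_1\}$ is guaranteed to sit entirely after $J_1$ on $M_3$. The one subsidiary fact that should be checked with care is that this LPT sequencing of the flow jobs may be assumed without loss of generality even in the presence of the open jobs.
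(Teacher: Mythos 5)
Your first case (the open-shop mass $\mcA^\flat$ preceding $J_1$ on $M_2$ is at most $p_1$) is correct and is exactly the paper's argument: the idle time of $M_2$ before $S_1^2$ is at least $p_1-Q(\mcA^\flat)\ge p_1-\OPT^2$, and the makespan is at least the load of $M_2$ plus its total idle time. The genuine gap is in your second (``heavy'') case, and it is precisely the point you yourself flagged as needing care: you assume that the flow-shop jobs may be taken in LPT order on each machine in an optimal schedule, so that $J_1$ precedes all of $\mcF\setminus J_1$ on $M_3$ and hence $C^*_{\max}\ge C_1^3+(P(\mcF)-p_1)$. This is not established anywhere, and it is false in general for the mixed shop: the paper's own optimal schedules (e.g.\ the one in the NP-hardness reduction of Section 5, and those of Lemma~\ref{lemma64}, where the largest flow job $J_1$ sits in $\mcF^2$ and is processed on every machine after all of $\mcF^1$) have the largest flow job preceded on $M_3$ by smaller flow jobs. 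Only the flow jobs scheduled \emph{after} $J_1$ on $M_2$ are forced onto $M_3$ after $C_1^2$; the flow jobs scheduled before $J_1$ on $M_2$ may well be finished on $M_3$ before $J_1$ even starts there, so the quantity $P(\mcF)-p_1$ cannot be charged after $C_1^3$.

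The repair does not need any LPT normalization, and your own quantities almost deliver it. Writing $\mcF'$ and $\mcF''$ for the flow jobs before and after $J_1$ on $M_2$, you already have $S_1^2\ge P(\mcF')+Q(\mcA^\flat)\ge P(\mcF')+Q(\mcO)-\OPT^2$ (both sets occupy disjoint subintervals of $[0,S_1^2)$ on $M_2$); combining this with the constraint that $J_1$ and every job of $\mcF''$ must still be processed on $M_3$ after time $C_1^2=S_1^2+p_1$ (the paper's Eq.~(\ref{eq7})) gives $C^*_{\max}\ge C_1^2+p_1+P(\mcF'')=S_1^2+2p_1+P(\mcF'')\ge P(\mcF)+Q(\mcO)+p_1-\OPT^2$. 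The paper phrases this same estimate as a lower bound of $p_1-\OPT^2$ on the idle time $\delta_2$ of $M_2$ \emph{after} $J_1$, symmetric to your first case, and then sums the two idle contributions along $M_2$; either formulation closes the gap, but the step from ``$\mcA^\flat$ is heavy'' to ``the tail of $M_3$ carries $P(\mcF)-p_1$'' as you wrote it does not.
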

\begin{proof}
Consider an arbitrary optimal schedule $\pi^*$ that achieves the makespan $C_{\max}^*$.
Note that the flow-shop job $J_1$ is first processed on the machine $M_1$, 
then on machine $M_2$,
and last on machine $M_3$.

%In the schedule $\pi^*$, let $S_i$ and $C_i$ be the start time and the finish time of the job $J_1$ on the machine $M_i$, respectively,
%for $i = 1, 2, 3$.
On the machine $M_2$, let $\mcJ^1 = \mcO^1 \cup \mcF^1$ denote the subset of jobs processed before $J_1$,
and $\mcJ^2 = \mcO^2 \cup \mcF^2$ denote the subset of jobs processed after $J_1$,
where $\{\mcO^1, \mcO^2\}$ is a bipartition of the job set $\mcO$ and $\{\mcF^1, \mcF^2\}$ is a bipartition of the job set $\mcF \setminus J_1$.
Also, let $\delta_1$ and $\delta_2$ denote the total amount of machine idle time for $M_2$ before processing $J_1$ and after processing $J_1$, respectively
(see Figure~\ref{fig35} for an illustration).

\begin{figure}[ht]
\centering
  \setlength{\unitlength}{0.9bp}%
  \begin{picture}(201.85, 99.05)(0,0)
  \put(20,5){\includegraphics[scale=0.9]{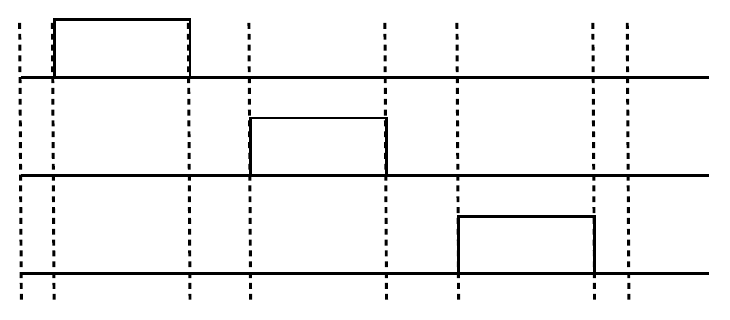}}
  \put(0.0,78.596){\fontsize{14.23}{17.07}\selectfont $M_1$}
  \put(0.0,49.73){\fontsize{14.23}{17.07}\selectfont $M_2$}
  \put(0.0,20.73){\fontsize{14.23}{17.07}\selectfont $M_3$}
  \put(23.0,0.0){\fontsize{10.23}{17.07}\selectfont $0$}
  \put(32.0,0.0){\fontsize{10.23}{17.07}\selectfont $S_1^1$}
  \put(88.0,0.0){\fontsize{10.23}{17.07}\selectfont $S_1^2$}
  \put(150.0,0.0){\fontsize{10.23}{17.07}\selectfont $S_1^3$}
  \put(70.97,0.0){\fontsize{10.23}{17.07}\selectfont $C_1^1$}
  \put(125.97,0.0){\fontsize{10.23}{17.07}\selectfont $C_1^2$}
  \put(185.97,0.0){\fontsize{10.23}{17.07}\selectfont $C_1^3$}
  \put(200.97,0.0){\fontsize{10.23}{17.07}\selectfont $C_{\max}^*$}  
  \put(50.23,78.596){\fontsize{14.23}{17.07}\selectfont $J_1$}
  \put(105.23,49.73){\fontsize{14.23}{17.07}\selectfont $J_1$}
  \put(165.23,20.73){\fontsize{14.23}{17.07}\selectfont $J_1$}
  \put(40.23,49.73){\fontsize{14.23}{17.07}\selectfont $\mcJ^1$}
  \put(80.23,49.73){\fontsize{14.23}{17.07}\selectfont $\delta_1$}
  \put(168.23,49.73){\fontsize{14.23}{17.07}\selectfont $\mcJ^2$}
  \put(140.23,49.73){\fontsize{14.23}{17.07}\selectfont $\delta_2$}
  \end{picture}%
\caption{An illustration of an optimal schedule $\pi^*$,
	in which $\mcJ^1$ and $\mcJ^2$ are the subsets of jobs processed on $M_2$ before $J_1$ and after $J_1$, respectively;
	$\delta_1$ and $\delta_2$ are the total amount of machine idle time for $M_2$ before processing $J_1$ and after processing $J_1$, respectively.\label{fig35}}
\end{figure}

Note that $\mcF = J_1 \cup \mcF^1 \cup \mcF^2$ is the set of flow-shop jobs.
The job $J_1$ and the jobs of $\mcF^1$ should be finished on the machine $M_1$ before time $S_1^2$,
and the job $J_1$ and the jobs of $\mcF^2$ can only be started on the machine $M_3$ after time $C_1^2$.
That is,
\begin{equation}
\label{eq6}
p_1 + P(\mcF^1) \le S_1^2
\end{equation}
and
\begin{equation}
\label{eq7}
p_1 + P(\mcF^2) \le C_{\max}^* - C_1^2.
\end{equation}

If $Q(\mcO^1) \le p_1$, 
then we have $Q(\mcO^1) \le \OPT^2$ by the definition of $\OPT^2$ in Eq.~(\ref{eq4}).
Combining this with Eq.~(\ref{eq6}), 
we achieve that $\delta_1 = S_1^2 - P(\mcF^1) - Q(\mcO^1) \ge p_1- \OPT^2$.

If $Q(\mcO^1) > p_1$, 
then we have $Q(\mcO^2) \le p_1$ by Lemma \ref{lemma33}.
Hence, $Q(\mcO^2) \le \OPT^2$ by the definition of $\OPT^2$ in Eq.~(\ref{eq4}).
Combining this with Eq.~(\ref{eq7}), 
we achieve that $\delta_2 = C_{\max}^* - C_1^2 - P(\mcF^2) - Q(\mcO^2) \ge p_1- \OPT^2$.
 
The last two paragraphs prove that $\delta_1 + \delta_2 \ge p_1- \OPT^2$.
Therefore,
\begin{eqnarray*}
C_{\max}^* 
        & =   & Q(\mcO^1) + P(\mcF^1) + \delta_1 + p_1 + Q(\mcO^2) + P(\mcF^2) + \delta_2 \\
        & =   & P(\mcF) + Q(\mcO) +  \delta_1 + \delta_2 \\
        & \ge & P(\mcF) + Q(\mcO) + p_1- \OPT^2.
\end{eqnarray*}
This finishes the proof of the lemma.
\end{proof}

\begin{lemma}
\label{lemma35}
In the algorithm $A(\epsilon)$, if $Q(\mcO \setminus \mcA) \le p_1 - \epsilon \OPT^1$, then $C_{\max}^{\pi^2} < (1+\epsilon) C_{\max}^*$.
\end{lemma}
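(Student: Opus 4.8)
The plan is to split into two cases according to whether $Q(\mcO \setminus \mcB) \le p_1$, where $\mcB$ is the set produced in Step~3.1 and $\pi^2 = {\sc Proc}(\mcO \setminus \mcB, \mcB, \mcF)$, so that $\mcO \setminus \mcB$ plays the role of the first argument of {\sc Proc} and $\mcB$ the second. If $Q(\mcO \setminus \mcB) \le p_1$, then by Eq.~(\ref{eq5}) we also have $Q(\mcB) \le p_1$, so the bipartition $\{\mcO \setminus \mcB, \mcB\}$ meets both hypotheses of Lemma~\ref{lemma31}; hence $\pi^2$ is already optimal and $C_{\max}^{\pi^2} = C_{\max}^* < (1+\epsilon) C_{\max}^*$. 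From now on I may assume $Q(\mcB) \le p_1 < Q(\mcO \setminus \mcB)$.

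The heart of the argument in this remaining case is the makespan identity
\[
C_{\max}^{\pi^2} = P(\mcF) + Q(\mcO) + p_1 - Q(\mcB)
\]
(only the ``$\le$'' direction is needed below). I would prove it by tracing the schedule machine by machine, using throughout that $P(\mcF) \ge p_1 \ge q_{\ell+1}$ (valid here since $J_1 \in \mcF$), hence $P(\mcF) \ge q_i$ for every open-shop job $J_i$, together with the fact that each of $\mcO \setminus \mcB$, $\mcF$, $\mcB$ is run in LPT order on all three machines. On $M_2$ the blocks $\mcO \setminus \mcB$, $\mcF$, $\mcB$ abut with no idle, because the first job of each later block has cleared its previous machine by the time $M_2$ becomes free (using $Q(\mcO \setminus \mcB) > p_1 \ge p_j$ and $Q(\mcO \setminus \mcB) + P(\mcF) \ge q_i$), so $M_2$ finishes at $P(\mcF) + Q(\mcO)$. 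On $M_3$ the block $\mcB$ starts at time $0$; the block $\mcO \setminus \mcB$ starts at $\max\{Q(\mcB),\alpha_1\} \le p_1$, where $\alpha_1 \le q_{\ell+1}$ is its largest job, and then runs with no idle, finishing by $p_1 + Q(\mcO \setminus \mcB)$; and since $J_1$ finishes on $M_2$ only at time $Q(\mcO \setminus \mcB) + p_1$, the block $\mcF$ occupies $M_3$ exactly over $[\,Q(\mcO \setminus \mcB) + p_1,\ Q(\mcO \setminus \mcB) + p_1 + P(\mcF)\,]$. As $M_3$ is the last machine of every flow-shop job, this gives the stated value. Finally on $M_1$ the blocks $\mcF$ and $\mcB$ abut, and after a short forced gap of length at most $p_1$ the block $\mcO \setminus \mcB$ runs with no further idle, so $M_1$ finishes by $2p_1 + Q(\mcO \setminus \mcB) \le P(\mcF) + p_1 + Q(\mcO \setminus \mcB)$; with $Q(\mcB) \le p_1$ this confirms that neither $M_1$ nor $M_2$ exceeds the $M_3$ finish time.

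Granting the makespan bound, the conclusion is short. From $Q(\mcB) \ge (1-\epsilon)\OPT^2$ in Eq.~(\ref{eq5}) we get $C_{\max}^{\pi^2} \le P(\mcF) + Q(\mcO) + p_1 - \OPT^2 + \epsilon\,\OPT^2$, while Lemma~\ref{lemma34} (whose hypothesis is exactly the hypothesis of the present lemma) gives $C_{\max}^* \ge P(\mcF) + Q(\mcO) + p_1 - \OPT^2$; hence $C_{\max}^{\pi^2} \le C_{\max}^* + \epsilon\,\OPT^2$. Since $\OPT^2 \le p_1$ by Eq.~(\ref{eq4}) and $p_1 < 2p_1 + P(\mcF) \le C_{\max}^*$ by Eq.~(\ref{eq1}), we have $\epsilon\,\OPT^2 < \epsilon\,C_{\max}^*$, so $C_{\max}^{\pi^2} < (1+\epsilon) C_{\max}^*$.

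The main obstacle is the middle paragraph: certifying that {\sc Proc} introduces no harmful idle time in the mixed regime $Q(\mcB) \le p_1 < Q(\mcO \setminus \mcB)$. This reduces to a small case analysis on $M_1$ and on $M_3$ according to whether $Q(\mcB) \ge \alpha_1$ or $Q(\mcB) < \alpha_1$ — i.e.\ whether the light block $\mcB$ keeps $M_3$ busy long enough to release the largest job of $\mcO \setminus \mcB$ onto $M_3$ without a gap — but in every subcase the resulting gap is at most $p_1$, and the single inequality $p_1 \le P(\mcF)$ is precisely what absorbs it. Everything else is routine arithmetic with the lower bound of Eq.~(\ref{eq1}) and the knapsack guarantees of Eqs.~(\ref{eq4})--(\ref{eq5}).
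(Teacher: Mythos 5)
Your proof is correct and follows essentially the same route as the paper's: split on whether $Q(\mcO\setminus\mcB)\le p_1$ (optimal by Lemma~\ref{lemma31}) versus $Q(\mcO\setminus\mcB)>p_1$, establish that the makespan of $\pi^2$ is attained on $M_3$ and equals $P(\mcF)+Q(\mcO)+p_1-Q(\mcB)$, and then combine Eq.~(\ref{eq5}) with Lemma~\ref{lemma34} and $\OPT^2\le p_1<C_{\max}^*$. The only difference is cosmetic: you bound $M_1$'s completion time by $\max\{P(\mcF)+Q(\mcO),\,2p_1+Q(\mcO\setminus\mcB)\}$ rather than asserting it idles not at all, which is in fact slightly more careful than the paper's wording and still yields the same conclusion.
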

\begin{proof}
Denote $\overline{\mcB} = \mcO \setminus \mcB$.
Note that the job set $\mcB$ computed in Step 3.1 of the algorithm $A(\epsilon)$ satisfies $p_1 \ge Q(\mcB) \ge (1 - \epsilon) \OPT^2$, 
and the schedule $\pi^2$ is constructed by {\sc Proc}($\overline{\mcB}, \mcB, \mcF$).
We distinguish the following two cases according to the value of $Q(\overline{\mcB})$.

{Case 1.} $Q(\overline{\mcB}) \le p_1$. 
In this case, 
the schedule $\pi^2$ is optimal by Lemma \ref{lemma31}.

{Case 2.} $Q(\overline{\mcB}) > p_1$.
The schedule $\pi^2$ constructed by {\sc Proc}($\overline{\mcB}, \mcB, \mcF$) has the following properties (see Figure~\ref{fig36} for an illustration):

\begin{figure}[ht]
\centering
  \setlength{\unitlength}{0.89bp}%
  \begin{picture}(251.85, 99.05)(0,0)
  \put(20,5){\includegraphics[scale=0.9]{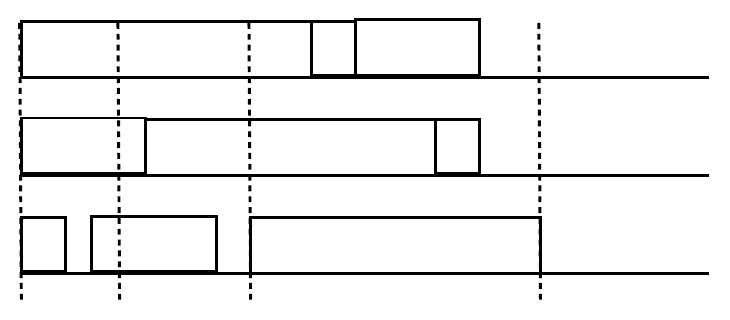}}
  \put(0.0,78.596){\fontsize{14.23}{17.07}\selectfont $M_1$}
  \put(0.0,49.73){\fontsize{14.23}{17.07}\selectfont $M_2$}
  \put(0.0,20.73){\fontsize{14.23}{17.07}\selectfont $M_3$}
  \put(23.0,0.0){\fontsize{14.23}{17.07}\selectfont $0$}
  \put(50.0,0.0){\fontsize{14.23}{17.07}\selectfont $p_1$}
  \put(80.97,0.0){\fontsize{14.23}{17.07}\selectfont $p_1 + Q(\overline{\mcB})$}
  \put(160.97,0.0){\fontsize{14.23}{17.07}\selectfont $P(\mcF) + p_1 + Q(\overline{\mcB})$}  
  \put(60.23,78.596){\fontsize{14.23}{17.07}\selectfont $\mcF$}
  \put(105.23,49.73){\fontsize{14.23}{17.07}\selectfont $\mcF$}
  \put(140.23,20.73){\fontsize{14.23}{17.07}\selectfont $\mcF$}
  \put(110.23,78.596){\fontsize{14.23}{17.07}\selectfont $\mcB$}
  \put(40.23,48.73){\fontsize{14.23}{17.07}\selectfont $\overline{\mcB}$}
  \put(27.23,20.73){\fontsize{14.23}{17.07}\selectfont $\mcB$}
  \put(135.23,77.596){\fontsize{14.23}{17.07}\selectfont $\overline{\mcB}$}
  \put(145.23,49.73){\fontsize{14.23}{17.07}\selectfont $\mcB$}
  \put(58.23,20.73){\fontsize{14.23}{17.07}\selectfont $\overline{\mcB}$}
  \end{picture}%
\caption{An illustration of the schedule $\pi^2$ constructed by {\sc Proc}($\overline{\mcB}, \mcB, \mcF$) in Case 2,
	where $Q(\mcB) \le p_1$ and $Q(\overline{\mcB}) > p_1$.
	The machines $M_1$ and $M_2$ do not idle;
	the machine $M_3$ may idle between processing the job set $\mcB$ and the job set $\overline{\mcB}$
	and may idle between processing the job set $\overline{\mcB}$ and the job set $\mcF$.
	$M_3$ starts processing the job set $\mcF$ at time $p_1 + Q(\overline{\mcB})$.\label{fig36}}
\end{figure}

\begin{enumerate}
\item The jobs are processed consecutively on the machine $M_1$ since $J_1$ is the largest job. 
	The completion time of $M_1$ is thus $C_1^{\pi^2} = Q(\mcO) + P(\mcF)$.

\item The jobs are processed consecutively on the machine $M_2$ due to $Q(\mcB) \le p_1$ and $Q(\overline{\mcB}) > p_1$. 
	The completion time of $M_2$ is thus $C_2^{\pi^2} = Q(\mcO) + P(\mcF)$.

\item The machine $M_3$ starts processing the job set $\mcF$ consecutively at time $p_1 + Q(\overline{\mcB})$ due to $Q(\mcB) \le p_1$.
	The completion time of $M_3$ is $C_3^{\pi^2} =P(\mcF) + p_1 + Q(\overline{\mcB})$.
\end{enumerate}

Note that $C_3^{\pi^2} =P(\mcF) + p_1 + Q(\overline{\mcB}) \ge P(\mcF) + Q(\mcB) + Q(\overline{\mcB}) = Q(\mcO) + P(\mcF)$,
implying $C_{\max}^{\pi^2} = P(\mcF) + p_1 + Q(\overline{\mcB})$.
Combining Eq.~(\ref{eq5}) with Lemma \ref{lemma34}, we have 
\begin{eqnarray*}
C_{\max}^{\pi^2} 
        & =   & P(\mcF) + p_1 + Q(\overline{\mcB}) \\
%        & =   & P(\mcF) + Q(\mcB) + Q(\overline{\mcB}) + p_1 - Q(\mcB)\\
        & =   & P(\mcF) + Q(\mcO)  + p_1 - Q(\mcB)\\
        & \le & P(\mcF) + Q(\mcO) + p_1- (1-\epsilon) \OPT^2\\
%        & =   & P(\mcF) + Q(\mcO) + p_1- \OPT^2 + \epsilon \OPT^2\\
        & \le & C_{\max}^* + \epsilon \OPT^2\\
        & <   & (1 + \epsilon) C_{\max}^*,
\end{eqnarray*}
where the last inequality is due to $\OPT^2 \le p_1 <C_{\max}^*$.
This finishes the proof of the lemma.
\end{proof}

\begin{lemma}
\label{lemma36}
In the algorithm $A(\epsilon)$, if $p_1 - \epsilon \OPT^1 < Q(\mcO \setminus \mcA) < p_1$, then $C_{\max}^{\pi^1} < (1+\epsilon) C_{\max}^*$.
\end{lemma}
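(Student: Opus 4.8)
\emph{Proof idea.} Write $\overline{\mcA} = \mcO\setminus\mcA$, so that $\pi^1$ is produced by {\sc Proc}$(\mcA, \overline{\mcA}, \mcF)$. By Eq.~(\ref{eq3}) we have $Q(\mcA) > p_1$, and the hypothesis gives $Q(\overline{\mcA}) < p_1$; thus the first argument of the procedure is the ``large'' set and the second the ``small'' one --- exactly the configuration of Case~2 in the proof of Lemma~\ref{lemma35} under the relabelling $\overline{\mcB}\mapsto\mcA$, $\mcB\mapsto\overline{\mcA}$. The first step is therefore to reuse that structural analysis (it rests on $p_1 \le P(\mcF)$, $q_{\ell+1}\le p_1$, the fact that $J_1$ is the largest job, and the LPT orderings): one verifies that in $\pi^1$ the machine $M_3$ processes $\overline{\mcA}$, then $\mcA$, and then begins $\mcF$ consecutively at time $p_1 + Q(\mcA)$, so that $C_3^{\pi^1} = P(\mcF) + p_1 + Q(\mcA)$; and that $C_1^{\pi^1}, C_2^{\pi^1} \le C_3^{\pi^1}$, the inequality being immediate from $Q(\overline{\mcA}) < p_1$ (which gives $P(\mcF) + Q(\mcO) \le P(\mcF) + p_1 + Q(\mcA)$). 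Hence $C_{\max}^{\pi^1} = P(\mcF) + p_1 + Q(\mcA)$.

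The rest is a short estimate. Since $Q(\mcA) + Q(\overline{\mcA}) = Q(\mcO)$, the hypothesis $Q(\overline{\mcA}) > p_1 - \epsilon \OPT^1$ yields $Q(\mcA) < Q(\mcO) - p_1 + \epsilon \OPT^1$, whence
\[
C_{\max}^{\pi^1} = P(\mcF) + p_1 + Q(\mcA) < P(\mcF) + Q(\mcO) + \epsilon \OPT^1 .
\]
Since we are in the case $Q(\mcO) > p_1$, the set $\mcO$ is feasible in the definition~(\ref{eq2}) of $\OPT^1$, hence $\OPT^1 \le Q(\mcO)$; combining this with the lower bound~(\ref{eq1}) we get $\OPT^1 \le Q(\mcO) \le P(\mcF) + Q(\mcO) \le C_{\max}^*$. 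Therefore $C_{\max}^{\pi^1} < P(\mcF) + Q(\mcO) + \epsilon \OPT^1 \le (1+\epsilon) C_{\max}^*$, as claimed.

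The main obstacle is the first step: confirming that $M_3$, and not $M_1$ or $M_2$, carries the makespan of $\pi^1$, with value exactly $P(\mcF) + p_1 + Q(\mcA)$. I expect this to require nothing essentially new beyond transporting Case~2 of Lemma~\ref{lemma35} through the relabelling above; the point that needs care is that the idling $M_1$ (or $M_3$) may incur before it starts processing $\mcA$ never pushes $C_1^{\pi^1}$ past $C_3^{\pi^1}$, which again follows from $p_1 \le P(\mcF)$ and $q_{\ell+1}\le p_1$. For the global analysis it is also worth recording that Lemma~\ref{lemma36}, together with Lemmas~\ref{lemma32} and~\ref{lemma35}, exhausts the case $Q(\mcO) > p_1$: if $Q(\overline{\mcA}) \ge p_1$ then $Q(\mcA) > p_1$ as well and $\pi^1$ is optimal by Lemma~\ref{lemma32}, while the range $Q(\overline{\mcA}) \le p_1 - \epsilon \OPT^1$ is handled by Lemma~\ref{lemma35}.
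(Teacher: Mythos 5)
Your proposal is correct and follows essentially the same route as the paper: the paper likewise transports Case~2 of the proof of Lemma~\ref{lemma35} under the relabelling $\overline{\mcB}\mapsto\mcA$, $\mcB\mapsto\overline{\mcA}$ to get $C_{\max}^{\pi^1}=P(\mcF)+Q(\mcO)+p_1-Q(\overline{\mcA})$, and then concludes via $p_1-\epsilon\OPT^1<Q(\overline{\mcA})$, Eq.~(\ref{eq1}), and $\OPT^1\le Q(\mcO)\le C_{\max}^*$. Your additional remarks on why $M_3$ carries the makespan and on how the lemmas jointly cover the case $Q(\mcO)>p_1$ are accurate but not needed beyond what the paper already records.
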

\begin{proof}
Denote $\overline{\mcA} = \mcO \setminus \mcA$. 
Note that the job set $\mcA$ computed in Step 2.1 of the algorithm $A(\epsilon)$ satisfies $p_1 < Q(\mcA) \le (1 + \epsilon) \OPT^1$,
and the schedule $\pi^1$ is constructed by {\sc Proc}($\mcA, \overline{\mcA}, \mcF$).

By a similar argument as in Case 2 in the proof of Lemma \ref{lemma35},
replacing the two job sets $\mcB, \overline{\mcB}$ by the two job sets $\overline{\mcA}, \mcA$,
we conclude that the makespan of the schedule $\pi^1$ is achieved on the machine $M_3$, $C_{\max}^{\pi^1} = P(\mcF) + Q(\mcO) + p_1 - Q(\overline{\mcA})$.
Combining Eq.~(\ref{eq1}) with the assumption that $p_1 - \epsilon \OPT^1 < Q(\overline{\mcA})$,
we have
\[
C_{\max}^{\pi^1} %= P(\mcF) + Q(\mcO) + p_1 - Q(\overline{\mcA})
< P(\mcF) + Q(\mcO) + \epsilon \OPT^1 \le C_{\max}^* + \epsilon \OPT^1
< (1 + \epsilon) C_{\max}^*,
\]  
where the last inequality follows from $\OPT^1 \le Q(\mcO) \le C_{\max}^*$.
This finishes the proof of the lemma.
\end{proof}

\begin{theorem}
\label{thm37}
The algorithm $A(\epsilon)$ is an
$O(n \min\{\log n, \log{(1/\epsilon)}\} + 1/{\epsilon}^2 \log{(1/\epsilon)} \min\{n$, $1/{\epsilon} \log{(1/\epsilon)}\})$-time
$(1 + \epsilon)$-approximation for the $M3 \mid prpt \mid C_{\max}$ problem when $p_1 \ge q_{\ell +1}$.
\end{theorem}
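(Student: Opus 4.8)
The plan is to obtain Theorem~\ref{thm37} by assembling the four structural lemmas (Lemmas~\ref{lemma31}, \ref{lemma32}, \ref{lemma35}, and~\ref{lemma36}) into an exhaustive case analysis, and then separately bounding the running time by the cost of the two Knapsack subroutines. First I would dispose of the trivial branch: if $Q(\mcO) \le p_1$, the algorithm halts in Step~1 with $\pi = $ {\sc Proc}$(\mcO, \emptyset, \mcF)$, and taking $\mcA = \mcO$, $\mcB = \emptyset$ the hypotheses $Q(\mcA) = Q(\mcO) \le p_1$ and $Q(\mcB) = 0 \le p_1$ of Lemma~\ref{lemma31} hold, so $\pi$ is optimal. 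Hence it suffices to treat $Q(\mcO) > p_1$; note this assumption makes $\OPT^1$ in Eq.~(\ref{eq2}) well-defined (witnessed by $\mcX = \mcO$) and strictly positive, so the threshold $p_1 - \epsilon \OPT^1$ is genuinely smaller than $p_1$.

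The core of the argument is a split on the value of $Q(\mcO \setminus \mcA)$, where $\mcA$ is the set returned by the Min-Knapsack FPTAS in Step~2.1 and satisfies Eq.~(\ref{eq3}), in particular $Q(\mcA) > p_1$. The three ranges $Q(\mcO \setminus \mcA) \ge p_1$, then $p_1 - \epsilon \OPT^1 < Q(\mcO \setminus \mcA) < p_1$, and finally $Q(\mcO \setminus \mcA) \le p_1 - \epsilon \OPT^1$ exhaust all possibilities. In the first range we have both $Q(\mcA) > p_1$ and $Q(\mcO \setminus \mcA) \ge p_1$, so Lemma~\ref{lemma32} applied to $\pi^1 = $ {\sc Proc}$(\mcA, \mcO \setminus \mcA, \mcF)$ shows $\pi^1$ is optimal. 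In the second range, Lemma~\ref{lemma36} gives $C_{\max}^{\pi^1} < (1 + \epsilon) C_{\max}^*$ directly. In the third range, Lemma~\ref{lemma35} gives $C_{\max}^{\pi^2} < (1 + \epsilon) C_{\max}^*$ directly. Thus in every range at least one of $\pi^1, \pi^2$ meets the target bound, and since Step~4 outputs whichever of the two has the smaller makespan, the output makespan is $< (1 + \epsilon) C_{\max}^*$ in all cases. The cited lemmas already absorb the supporting facts (Lemmas~\ref{lemma33} and~\ref{lemma34}) used inside their proofs, so nothing further is needed for correctness.

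For the running time I would note that everything besides the two Knapsack calls is cheap: building the {\sc Knapsack} instance, computing the prefix sums that evaluate $Q(\cdot)$ and $P(\cdot)$, the at most three invocations of {\sc Proc} (each producing a schedule in $O(n \log n)$ time on the already LPT-ordered job list, indeed $O(n)$ given the global order), and the final comparison of two makespans together cost $O(n \log n)$. The dominant cost is therefore the $(1 + \epsilon)$-approximation for {\sc Min-Knapsack} \cite{KPP04} and the $(1 - \epsilon)$-approximation for {\sc Max-Knapsack} \cite{KP04}, each running in $O(n \min\{\log n, \log(1/\epsilon)\} + (1/\epsilon)^2 \log(1/\epsilon) \min\{n, (1/\epsilon) \log(1/\epsilon)\})$ time, which subsumes the $O(n \log n)$ bookkeeping; summing the two yields the stated complexity. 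I do not expect a genuine obstacle here; the only point requiring attention is confirming that the three-way split on $Q(\mcO \setminus \mcA)$ is exhaustive and, in particular, that the slab $Q(\mcO \setminus \mcA) \ge p_1$ is handled by the optimality Lemma~\ref{lemma32} (applied to $\pi^1$) rather than by either approximation lemma — which relies precisely on the previously established $Q(\mcA) > p_1$ from Eq.~(\ref{eq3}).
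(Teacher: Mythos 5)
Your proposal is correct and follows essentially the same route as the paper: the trivial branch via Lemma~\ref{lemma31}, the three-way split on $Q(\mcO \setminus \mcA)$ handled by Lemmas~\ref{lemma32}, \ref{lemma36} and \ref{lemma35} respectively, and a running-time bound dominated by the two Knapsack FPTAS calls. Your version is in fact slightly more explicit than the paper's in checking that the two sub-ranges of $Q(\mcO \setminus \mcA) < p_1$ exactly match the hypotheses of Lemmas~\ref{lemma35} and~\ref{lemma36}, but the argument is the same.
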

\begin{proof}
First of all, the procedure {\sc Proc}($\mcX,  \mcY, \mcF$) on a bipartition $\{\mcX, \mcY\}$ of the job set $\mcO$ takes $O(n \log n)$ time.
Recall that the job set $\mcA$ is computed by a $(1 + \epsilon)$-approximation for the {\sc Min-Knapsack} problem, 
in $O(n \min\{\log n, \log{(1/\epsilon)}\} + 1/{\epsilon}^2 \log{(1/\epsilon)} \min\{n, 1/{\epsilon} \log{(1/\epsilon)}\})$ time;
%which takes a polynomial time in both $n$ and ${1}/{\epsilon}$;
the other job set $\mcB$ is computed by a $(1 - \epsilon)$-approximation for the {\sc Max-Knapsack} problem, 
also in $O(n \min\{\log n, \log{(1/\epsilon)}\} + 1/{\epsilon}^2 \log{(1/\epsilon)} \min\{n, 1/{\epsilon} \log{(1/\epsilon)}\})$ time.
%which also takes a polynomial time in both $n$ and ${1}/{\epsilon}$.
The total running time of the algorithm $A(\epsilon)$ is thus polynomial in both $n$ and ${1}/{\epsilon}$ too.

When $Q(\mcO) \le p_1$, or the job set $\mcO \setminus \mcA$ computed in Step 2.1 of the algorithm $A_1(\epsilon)$ has total processing time not less than $p_1$,
the schedule constructed in the algorithm $A(\epsilon)$ is optimal by Lemmas \ref{lemma31} and \ref{lemma32}.  
When $Q(\mcO \setminus \mcA) < p_1$, the smaller makespan between the two schedules $\pi^1$ and $\pi^2$ constructed by 
the algorithm $A(\epsilon)$ is less than $(1 + \epsilon)$ of the optimum by Lemmas \ref{lemma35} and \ref{lemma36}.
Therefore, the algorithm $A(\epsilon)$ has a worst-case performance ratio of $(1 + \epsilon)$.   
This finishes the proof of the theorem.
\end{proof}

\section{A ${4}/{3}$-approximation for the case where $p_1 < q_{\ell + 1}$}
\label{sec3}
%==================================================================================================
In this section, 
we present a $4/3$-approximation algorithm for the $M3 \mid prpt \mid C_{\max}$ problem when $p_1 < q_{\ell + 1}$,
and we show that this ratio of $4/3$ is asymptotically tight.

\begin{theorem}
\label{thm41}
When $p_1 < q_{\ell + 1}$, the $M3 \mid prpt \mid C_{\max}$ problem admits an $O(n \log n)$-time $4/3$-approximation algorithm.
\end{theorem}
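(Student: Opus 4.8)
The plan is to build, around the uniquely largest job $J_{\ell+1}$ (note that $q_{\ell+1} > p_1 \ge p_i$ and $q_{\ell+1} \ge q_i$ for all $i$), a feasible schedule whose makespan exceeds the lower bound of Eq.~(\ref{eq1}) by at most $q_{\ell+1}$. Since Eq.~(\ref{eq1}) already gives $C_{\max}^* \ge 3 q_{\ell+1}$, that surplus is at most $\frac13 C_{\max}^*$, which yields the ratio $\frac43$. Throughout, write $D = q_{\ell+1}$, $P = P(\mcF)$ and $R = Q(\mcO \setminus J_{\ell+1})$, so that Eq.~(\ref{eq1}) reads $C_{\max}^* \ge \max\{3D,\ P + D + R,\ 2p_1 + P\}$, with $D > p_1$ by hypothesis.

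First I would fix the three operations of $J_{\ell+1}$ as a \emph{frame}: one operation placed early on one machine, one in the middle of a second machine, and one late on the third, the rotation chosen so that the remaining machines are usable in the complementary time windows. Next, because the jobs of $\mcF$ are proportionate with each $p_i \le p_1 < D$, they can be cut in LPT order into consecutive \emph{chunks} each of total size at most $D$ (at most $\lceil P/D \rceil$ of them -- just one or two in the regimes that turn out to matter); a chunk of total size $s$ behaves as a soft block that occupies $M_1$ on an interval of length $s$, $M_2$ on an interval shifted right by at most $p_1$, and $M_3$ on an interval shifted right by at most $2p_1$, with no internal idle. These chunks are slotted into the windows opened by the frame, ahead of or behind the appropriate operation of $J_{\ell+1}$, so that every machine processes its $\mcF$-chunks together with the $J_{\ell+1}$-operation it carries with at most one idle window, of length at most $D$, created only by waiting for $J_{\ell+1}$ -- or for a preceding chunk -- to clear the previous machine. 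Finally the remaining open-shop jobs $\mcO \setminus J_{\ell+1}$ are split by a single greedy LPT pass into two groups, each inserted into one of the two ``early'' windows opened by the frame (their third operations appended at the very end); since we only want a constant factor, no knapsack subroutine is needed, in contrast with Section~3. The net effect is that the last-finishing machine completes at a time at most $\max\{3D,\ P + D + R,\ 2p_1 + P\} + D \le C_{\max}^* + D \le \frac43 C_{\max}^*$, and the running time is dominated by the three LPT sortings, all $O(n\log n)$.

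The proof proper splits into cases according to which term in Eq.~(\ref{eq1}) is binding, equivalently according to the sizes of $P$ and $R$ relative to $D$: the regime $P + R \le 2D$, where $P + D + R \le 3D$ and $C_{\max}^*$ is governed by $3D$ or by $2p_1 + P$ and the two-group split together with one- or two-chunk cutting suffices; and the regime $P + R > 2D$, where $P + D + R$ is the dominant bound and a coarser frame-plus-staircase construction already meets $C_{\max}^* + D$. I expect the main obstacle to be the bookkeeping in the balanced sub-case where $P$, $R$ and $D$ are of the same order: there the frame rotation, the break points of the $\mcF$-chunks, and the two-way split of $\mcO \setminus J_{\ell+1}$ must be chosen jointly so that their forced idle never accumulates on a single machine, i.e., so that the critical machine's total idle stays within the affordable budget $D \le \frac13 C_{\max}^*$. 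The asymptotic tightness asserted afterwards should follow from a family of instances forcing a full idle window of length tending to $q_{\ell+1}$ on whichever machine finishes last.
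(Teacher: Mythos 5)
Your overall accounting is the right one --- the goal is a schedule whose makespan exceeds $P(\mcF)+Q(\mcO)$ by at most $q_{\ell+1}$, and then $C_{\max}^{\pi}\le C_{\max}^*+q_{\ell+1}\le \frac43 C_{\max}^*$ by the $3q_{\ell+1}$ term of Eq.~(\ref{eq1}) --- but the construction that is supposed to realize this bound is never actually exhibited. The central claim, that the frame of $J_{\ell+1}$, the chunking of $\mcF$, and the two-way split of $\mcO\setminus J_{\ell+1}$ can be chosen so that the total idle accumulated on the last-finishing machine is at most $q_{\ell+1}$, is precisely the content of the theorem, and you assert it rather than prove it; indeed you flag the ``balanced'' regime where $P(\mcF)$, $Q(\mcO\setminus J_{\ell+1})$ and $q_{\ell+1}$ are comparable as an open obstacle requiring the three choices to be made ``jointly,'' without saying how. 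Some of the supporting sub-claims are also false as stated: cutting $\mcF$ greedily into blocks of total size at most $D=q_{\ell+1}$ can require strictly more than $\lceil P(\mcF)/D\rceil$ chunks (three jobs of size $0.6D$ need three chunks, not two), and an LPT two-way split of $\mcO\setminus J_{\ell+1}$ gives no useful control on the group sums when a second open-shop job is nearly as large as $J_{\ell+1}$. So as written this is a plan with a genuine gap, not a proof.

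The gap is also a sign that the approach is heavier than needed. The paper proves the theorem with a single explicit permutation schedule and no case analysis on which term of Eq.~(\ref{eq1}) binds: process $J_{\ell+3},\dots,J_n$ as an opening staircase (starting at times $0$, $q_{\ell+1}$, $2q_{\ell+1}$ on $M_1,M_2,M_3$), follow with $\mcF$ in LPT order, and rotate the two largest open-shop jobs $J_{\ell+1},J_{\ell+2}$ around this block, with $J_{\ell+1}$ occupying $M_3$ at the very start and $M_1,M_2$ at the very end. Feasibility needs only $p_1<q_{\ell+1}$ and $q_i\le q_{\ell+1}$, and the makespan comes out \emph{exactly} as $P(\mcF)+Q(\mcO)+q_{\ell+1}-q_{\ell+2}$, so the surplus bound of $q_{\ell+1}$ is immediate rather than the outcome of a delicate idle-time bookkeeping; the case $|\mcO|=1$ is handled separately by an even simpler schedule of makespan $\max\{3q_n,\,2q_n+P(\mcF)\}$. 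If you want to salvage your framework, the essential missing piece is a concrete placement rule for which you can write the critical machine's completion time in closed form, as the paper does.
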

\begin{proof}
Consider first the case where there are at least two open-shop jobs.
Construct a permutation schedule $\pi$ in which the job processing order for $M_1$ is $\langle J_{\ell+3}, \ldots, J_n, \mcF, J_{\ell+1}, J_{\ell+2}\rangle$,
where the jobs of $\mcF$ are processed in the LPT order;
the job processing order for $M_2$ is $\langle J_{\ell+2}, J_{\ell+3}, \ldots, J_n, \mcF, J_{\ell+1}\rangle$;
the job processing order for $M_3$ is $\langle J_{\ell+1}, J_{\ell+2}, J_{\ell+3}, \ldots, J_n, \mcF\rangle$.
See Figure~\ref{fig41} for an illustration, where the start time for $J_{\ell+3}$ on $M_2$ is $q_{\ell+1}$,
and the start time for $J_{\ell+3}$ on $M_3$ is $2 q_{\ell+1}$.
One can check that the schedule $\pi$ is feasible when $p_1 <q_{\ell + 1}$, and it can be constructed in $O(n \log n)$ time.

\begin{figure}[ht]
\centering
  \setlength{\unitlength}{0.9bp}%
  \begin{picture}(301.85, 99.05)(0,0)
  \put(20,5){\includegraphics[scale=0.9]{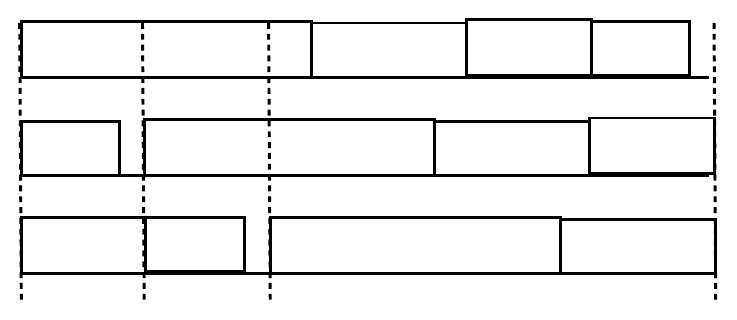}}
  \put(0.0,78.596){\fontsize{14.23}{17.07}\selectfont $M_1$}
  \put(0.0,49.73){\fontsize{14.23}{17.07}\selectfont $M_2$}
  \put(0.0,20.73){\fontsize{14.23}{17.07}\selectfont $M_3$}
  \put(23.0,0.0){\fontsize{14.23}{17.07}\selectfont $0$}
  \put(50.0,0.0){\fontsize{14.23}{17.07}\selectfont $q_{\ell + 1}$}
  \put(85.97,0.0){\fontsize{14.23}{17.07}\selectfont $2q_{\ell + 1}$}
  \put(220.97,0.0){\fontsize{14.23}{17.07}\selectfont $C_{\max}^{\pi}$}  
  \put(120.23,78.596){\fontsize{14.23}{17.07}\selectfont $\mcF$}
  \put(150.23,49.73){\fontsize{14.23}{17.07}\selectfont $\mcF$}
  \put(195.23,20.73){\fontsize{14.23}{17.07}\selectfont $\mcF$}
  \put(29.23,78.596){\fontsize{14.23}{17.07}\selectfont $J_{\ell + 3}, \ldots, J_n$}
  \put(65.23,51.73){\fontsize{14.23}{17.07}\selectfont $J_{\ell + 3}, \ldots, J_n$}
  \put(108.23,22.73){\fontsize{14.23}{17.07}\selectfont $J_{\ell + 3}, \ldots, J_n$}
  \put(193.23,78.596){\fontsize{14.23}{17.07}\selectfont $J_{\ell + 2}$}
  \put(25.23,50.73){\fontsize{14.23}{17.07}\selectfont $J_{\ell + 2}$}
  \put(61.23,22.73){\fontsize{14.23}{17.07}\selectfont $J_{\ell + 2}$}
  \put(163.23,79.596){\fontsize{14.23}{17.07}\selectfont $J_{\ell + 1}$}
  \put(195.23,49.73){\fontsize{14.23}{17.07}\selectfont $J_{\ell + 1}$}
  \put(28.23,22.73){\fontsize{14.23}{17.07}\selectfont $J_{\ell + 1}$}
  \end{picture}%
\caption{A feasible schedule $\pi$ for the $M3 \mid prpt \mid C_{\max}$ problem when there are at least two open-shop jobs and $p_1 < q_{\ell + 1}$.\label{fig41}}
\end{figure}

The makespan of the schedule $\pi$ is $C_{\max}^{\pi} = P(\mcF) + Q(\mcO) + q_{\ell + 1} - q_{\ell + 2}$.
Combining this with Eq.~(\ref{eq1}), we have
\[
C_{\max}^{\pi} \le P(\mcF) + Q(\mcO) + q_{\ell + 1} \le \frac{4}{3} C^*_{\max}.
\]

When there is only one open-shop job $J_n$ (that is, $\mcF = \{J_1, J_2, \ldots, J_{n-1}\}$ and $\mcO = \{J_n\}$),
construct a permutation schedule $\pi$ in which the job processing order for $M_1$ is $\langle \mcF, J_n\rangle$,
where the jobs of $\mcF$ are processed in the LPT order;
the job processing order for $M_2$ is $\langle \mcF, J_n\rangle$;
the job processing order for $M_3$ is $\langle J_n, \mcF\rangle$ (see for an illustration in Figure~\ref{fig42}).
If $P(\mcF) \le q_n$, which is shown in Figure~\ref{fig42}, then $\pi$ has makespan $3 q_n$ and thus is optimal.
\begin{figure}[ht]
\centering
  \setlength{\unitlength}{0.8bp}%
  \begin{picture}(305.20, 119.88)(0,0)
  \put(0,0){\includegraphics[scale=0.8]{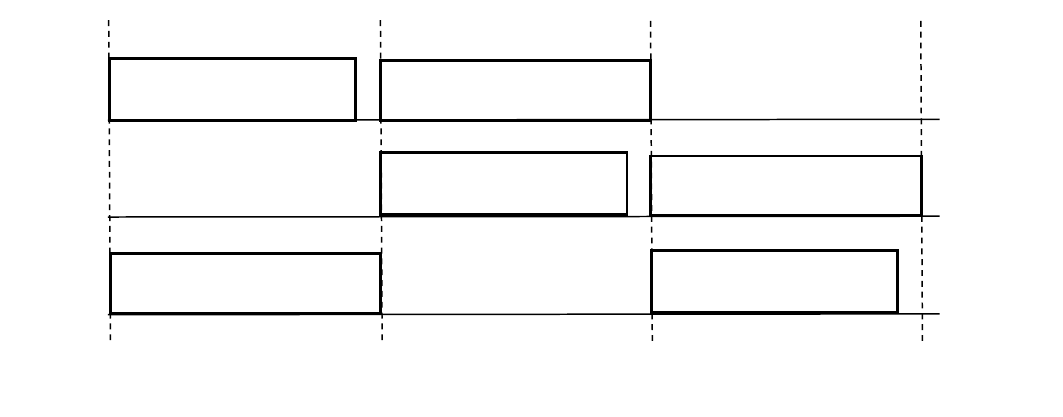}}
  \put(45.60,90.39){\fontsize{14.23}{17.07}\selectfont ${\cal F}$}
  \put(122.72,62.04){\fontsize{14.23}{17.07}\selectfont ${\cal F}$}
  \put(200.08,34.14){\fontsize{14.23}{17.07}\selectfont ${\cal F}$}
  \put(122.72,89.26){\fontsize{14.23}{17.07}\selectfont $J_n$}
  \put(200.98,61.13){\fontsize{14.23}{17.07}\selectfont $J_n$}
  \put(46.28,33.23){\fontsize{14.23}{17.07}\selectfont $J_n$}
  \put(5.67,90.17){\fontsize{14.23}{17.07}\selectfont $M_1$}
  \put(5.67,62.95){\fontsize{14.23}{17.07}\selectfont $M_2$}
  \put(5.67,35.72){\fontsize{14.23}{17.07}\selectfont $M_3$}
  \put(102.76,8.73){\fontsize{14.23}{17.07}\selectfont $q_n$}
  \put(179.89,8.73){\fontsize{14.23}{17.07}\selectfont $2q_n$}
  \put(257.01,8.73){\fontsize{14.23}{17.07}\selectfont $3q_n$}
  \put(25.64,8.73){\fontsize{14.23}{17.07}\selectfont $0$}
  \end{picture}%
\caption{A feasible schedule $\pi$ for the $M3 \mid prpt \mid C_{\max}$ problem when there is only one open-shop job $J_n$ and $p_1 < q_n$;
	the configuration shown here corresponds to $P(\mcF) \le q_n$ and thus $\pi$ is optimal.\label{fig42}}
\end{figure}
If $P(\mcF) > q_n$, then $\pi$ has makespan $C_{\max}^{\pi} \le 2 q_n + P(\mcF) \le \frac{4}{3} C^*_{\max}$ by Eq.~(\ref{eq1}).
This finishes the proof of the theorem.
\end{proof}

\begin{remark}
Construct an instance in which $p_i = \frac{1}{\ell - 1}$ for all $i = 1, 2, \ldots, \ell$,
$q_{\ell +1} = 1$ and $q_i = \frac{1}{n - \ell - 2}$ for all $i = \ell+2, \ell+3, \ldots, n$.
Then for this instance, the schedule $\pi$ constructed in the proof of Theorem \ref{thm41} has makespan $C_{\max}^{\pi} = 4 + \frac{1}{\ell - 1}$;
an optimal schedule has makespan $C_{\max}^* = 3 + \frac{1}{\ell - 1} + \frac{1}{n - \ell - 2}$ (see for an illustration in Figure~\ref{fig43}).
This suggests that the approximation ratio of $4/3$ is asymptotically tight for the algorithm presented in the proof of Theorem \ref{thm41}.
\end{remark}

\begin{figure}[ht]
\centering
  \setlength{\unitlength}{0.9bp}%
  \begin{picture}(301.85, 99.05)(0,0)
  \put(20,5){\includegraphics[scale=0.9]{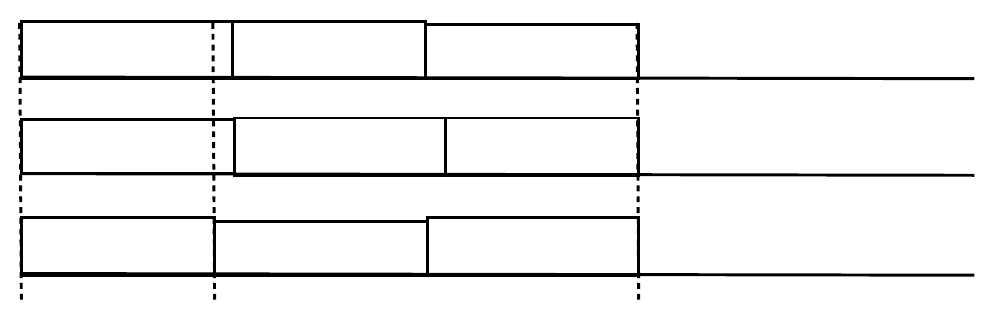}}
  \put(0.0,78.596){\fontsize{14.23}{17.07}\selectfont $M_1$}
  \put(0.0,49.73){\fontsize{14.23}{17.07}\selectfont $M_2$}
  \put(0.0,20.73){\fontsize{14.23}{17.07}\selectfont $M_3$}
  \put(23.0,0.0){\fontsize{14.23}{17.07}\selectfont $0$}
  \put(80.0,0.0){\fontsize{14.23}{17.07}\selectfont $1$}
  \put(180.97,0.0){\fontsize{14.23}{17.07}\selectfont $C_{\max}^* = 3 + \frac{1}{\ell - 1} + \frac{1}{n - \ell - 2}$}  
  \put(50.23,78.596){\fontsize{14.23}{17.07}\selectfont $\mcF$}
  \put(110.23,49.73){\fontsize{14.23}{17.07}\selectfont $\mcF$}
  \put(175.23,20.73){\fontsize{14.23}{17.07}\selectfont $\mcF$}
  \put(110.23,78.596){\fontsize{14.23}{17.07}\selectfont $J_{\ell + 1}$}
  \put(175.23,49.73){\fontsize{14.23}{17.07}\selectfont $J_{\ell + 1}$}
  \put(50.23,20.73){\fontsize{14.23}{17.07}\selectfont $J_{\ell + 1}$}
  \put(150.23,78.596){\fontsize{14.23}{17.07}\selectfont $\mcO \setminus J_{\ell + 1}$}
  \put(30.23,49.73){\fontsize{14.23}{17.07}\selectfont $\mcO \setminus J_{\ell + 1}$}
  \put(90.23,20.73){\fontsize{14.23}{17.07}\selectfont $\mcO \setminus J_{\ell + 1}$}
  \end{picture}%
\caption{An optimal schedule for the constructed instance of the $M3 \mid prpt \mid C_{\max}$ problem,
	in which $p_i = \frac{1}{\ell - 1}$ for all $i = 1, 2, \ldots, n$,
	$q_{\ell +1} = 1$ and $q_i = \frac{1}{n - \ell - 2}$ for all $i = \ell+2, \ell+3, \ldots, n$.\label{fig43}}
\end{figure}

\section{NP-hardness for $M3 \mid prpt, (n-1, 1) \mid C_{\max}$ when $p_1 < q_n$}
\label{sec4}
%==================================================================================================
Recall that we use $M3 \mid prpt, (n-1, 1) \mid C_{\max}$ to denote the special of $M3 \mid prpt \mid C_{\max}$ where there is only one open-shop job,
{\it i.e.}, $\mcF = \{J_1, J_2, \ldots, J_{n-1}\}$ and $\mcO = \{J_n\}$.
In this section,
we show that this special case $M3 \mid prpt, (n-1, 1) \mid C_{\max}$ is already NP-hard if the unique open-shop job is larger than any flow-shop job, {\it i.e.},
$p_1 < q_n$.
We prove the NP-hardness through a reduction from the {\sc Partition} problem \cite{GJ79}, which is a well-known NP-complete problem.

\begin{theorem}
\label{thm51}
The $M3 \mid prpt, (n-1, 1) \mid C_{\max}$ problem is NP-hard if the unique open-shop job is larger than any flow-shop job.
\end{theorem}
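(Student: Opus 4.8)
The plan is to reduce from {\sc Partition}: given positive integers $a_1, a_2, \ldots, a_m$ with $\sum_{i=1}^m a_i = 2B$, decide whether some subset sums to exactly $B$. I would build an instance of $M3 \mid prpt, (n-1,1) \mid C_{\max}$ with $n = m+1$ jobs: one flow-shop job $J_i$ of processing time $p_i = a_i$ for each $i = 1, \ldots, m$, plus a single open-shop job $J_n$ with $q_n$ chosen large enough that $q_n > p_1$ (say $q_n = 2B$, or a suitable multiple of $B$). The target makespan would be $T = q_n + 2B = 3B + \ldots$ adjusted so that it simultaneously equals the two natural lower bounds $q_n + P(\mcF)$-type and $2q_n$-type quantities in Eq.~(\ref{eq1}); the exact arithmetic is routine and I would not grind it here, but the point is to pick $q_n$ so that a schedule meeting $T$ has \emph{no slack} anywhere.

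\textbf{The structural heart of the argument} is to show that a schedule achieves makespan $T$ if and only if the open-shop job $J_n$ can be inserted into the flow-shop so that the flow-shop jobs split into two blocks around it. Since all $n-1$ flow-shop jobs must run in the order $M_1 \to M_2 \to M_3$, and $J_n$ occupies each machine for a long duration $q_n$, in any schedule of makespan $T$ the job $J_n$ must be processed on the three machines essentially back-to-back (on $M_1$ during some interval, then $M_2$, then $M_3$, because there is no room to do otherwise). This forces the flow-shop jobs processed on $M_1$ to be partitioned into those finishing before $J_n$ starts on $M_1$ and those starting after; a symmetric statement holds on $M_3$. A no-slack count then shows the set of flow-shop jobs completed on $M_1$ before $J_n$ must have total processing time exactly $B$ — i.e., it is a valid {\sc Partition} solution. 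Conversely, given a subset $\mcF_1$ with $P(\mcF_1) = B$, I would exhibit the explicit schedule: $M_1$ runs $\langle \mcF_1, J_n, \mcF_2 \rangle$, $M_3$ runs $\langle \mcF_1, J_n, \mcF_2\rangle$ shifted appropriately, and $M_2$ runs the flow-shop jobs in a compatible order, checking feasibility of the flow-shop precedences using $P(\mcF_1) = P(\mcF_2) = B$ and $q_n > p_1$; this parallels the construction already used in the proof of Theorem~\ref{thm41}.

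\textbf{The main obstacle} I anticipate is the ``only if'' direction: ruling out clever schedules that do \emph{not} run $J_n$ contiguously on the three machines, or that interleave flow-shop jobs with $J_n$ on a single machine in a way that still hits $T$. The argument must pin down, from the tightness of the lower bound $T$ against all three expressions in Eq.~(\ref{eq1}), that every machine is busy on $[0,T]$ except for one forced idle gap, and that $J_n$'s three operations must be non-overlapping and leave exactly $P(\mcF)$ units of flow-shop work on each side. I would handle this with an idle-time accounting argument in the same spirit as Lemma~\ref{lemma34}: total machine-time is $3T$, total work is $3(q_n + P(\mcF)) = 3T$ minus the slack, so any makespan-$T$ schedule has zero total idle time, which immediately forces the contiguous structure and hence the exact partition. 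Once that rigidity is established, the equivalence with {\sc Partition} is immediate, and since {\sc Partition} is NP-complete and the reduction is clearly polynomial, NP-hardness of $M3 \mid prpt, (n-1,1) \mid C_{\max}$ with $p_1 < q_n$ follows.
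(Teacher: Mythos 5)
Your overall strategy coincides with the paper's: reduce from {\sc Partition}, argue that in a schedule meeting the target the three operations of $J_n$ must run back-to-back with $M_2$ in the middle, and read the partition off the flow-shop jobs placed before and after $J_n$ on $M_2$. However, the instance you construct does not work, and the step you dismiss as ``routine arithmetic'' is exactly where the reduction lives or dies. Concretely: with $p_i=a_i$ and $q_n=2B$ you get $P(\mcF)=2B\le q_n$, and then makespan $3q_n$ is \emph{always} achievable (Lemma~\ref{lemma61}), whether or not a partition exists. With the target $T=q_n+P(\mcF)$ you propose instead, note that $T\ge 3q_n$ forces $q_n\le B$, and then your zero-idle accounting on $M_2$ is self-defeating: the first flow-shop job on $M_2$ cannot start at time $0$ because it must first complete on $M_1$, so $M_2$ necessarily idles and no instance whatsoever attains $T$. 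Either way the reduction fails to separate yes- from no-instances.

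The deeper obstruction is the flow-shop precedence offset that your ``no-slack count'' ignores. If $\mcF^1$ is the block scheduled before $J_n$ on $M_2$ and $J_{i_1}$ is its largest job, then $\mcF^1$ cannot finish on $M_2$ before $p_{i_1}+P(\mcF^1)$ (this is the content of Lemma~\ref{lemma64}), so hitting $3q_n$ requires $p_{i_1}+P(\mcF^1)\le q_n$ and symmetrically $p_{i_2}+P(\mcF^2)\le q_n$. With raw weights $p_i=a_i$ these two inequalities do not collapse to ``each side sums to $B$'' for any choice of $q_n$ alone; they admit unbalanced solutions. The paper repairs this with a gadget you are missing: two extra flow-shop jobs of equal size $x>B$ and $q_n=B+2x$. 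Since $x>B$, each of $\mcF^1,\mcF^2$ must contain exactly one big job, so $p_{i_1}=p_{i_2}=x$, and the two inequalities become $P(\mcF^1)\le B+x$ and $P(\mcF^2)\le B+x$ with $P(\mcF^1)+P(\mcF^2)=2B+2x$, forcing equality and hence a perfect partition of the $a_i$'s. Without this (or an equivalent) padding device, your reduction has a genuine gap.
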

\begin{proof}
An instance of the {\sc Partition} problem consists of
a set $S = \{ a_1, a_2, a_3, \ldots, a_m \}$ where each $a_i$ is a positive integer and $a_1 + a_2 + \ldots + a_m = 2B$,
and the query is whether or not $S$ can be partitioned into two parts such that each part sums to exactly $B$.

Let $x > B$, and we assume that $a_1 \ge a_2 \ge \ldots \ge a_m$.

We construct an instance of the $M3 \mid prpt \mid C_{\max}$ problem as follows:
there are in total $m+2$ flow-shop jobs, and their processing times are $p_1 = x, p_2 = x$, and $p_{i+2} = a_i$ for $i = 1, 2, \ldots, m$;
there is only one open-shop job with processing time $q_{m+3} = B + 2x$.
Note that the total number of jobs is $n = m+3$, and one sees that the open-shop job is larger than any flow-shop job.

If the set $S$ can be partitioned into two parts $S_1$ and $S_2$ such that each part sums to exactly $B$,
then we let $\mcJ^1 = J_1 \cup \{J_i \mid a_i \in S_1\}$ and $\mcJ^2 = J_2 \cup \{J_i \mid a_i \in S_2\}$.
We construct a permutation schedule $\pi$ in which the job processing order for $M_1$ is $\langle \mcJ^1, \mcJ^2, J_{m+3}\rangle$,
where the jobs of $\mcJ^1$ and the jobs of $\mcJ^2$ are processed in the LPT order, respectively;
the job processing order for $M_2$ is $\langle \mcJ^1, J_{m+3}, \mcJ^2\rangle$;
the job processing order for $M_3$ is $\langle J_{m+3}, \mcJ^1, \mcJ^2\rangle$.
See Figure~\ref{fig51} for an illustration, in which $J_1$ starts at time $0$ on $M_1$, starts at time $x$ on $M_2$, and starts at time $B+2x$ on $M_3$;
$J_2$ starts at time $B+x$ on $M_1$, starts at time $2B+4x$ on $M_2$, and starts at time $2B+5x$ on $M_3$;
$J_{m+3}$ starts at time $0$ on $M_3$, starts at time $B+2x$ on $M_2$, and starts at time $2B+4x$ on $M_1$.
The feasibility is trivial and its makespan is $C^{\pi}_{\max} = 3q_{m+3} = 3B + 6x$, suggesting the optimality.

\begin{figure}[ht]
\centering
  \setlength{\unitlength}{0.9bp}%
  \begin{picture}(301.85, 99.05)(0,0)
  \put(20,5){\includegraphics[scale=0.9]{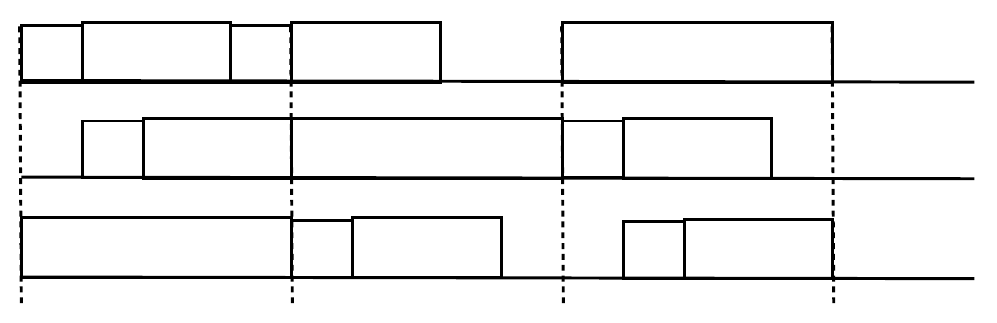}}
  \put(0.0,78.596){\fontsize{14.23}{17.07}\selectfont $M_1$}
  \put(0.0,49.73){\fontsize{14.23}{17.07}\selectfont $M_2$}
  \put(0.0,20.73){\fontsize{14.23}{17.07}\selectfont $M_3$}
  \put(23.0,0.0){\fontsize{14.23}{17.07}\selectfont $0$}
  \put(90.0,0.0){\fontsize{14.23}{17.07}\selectfont $B + 2x$}
  \put(165.97,0.0){\fontsize{14.23}{17.07}\selectfont $2B + 4x$}
  \put(250.97,0.0){\fontsize{14.23}{17.07}\selectfont $3B + 6x$}  
  \put(28.23,78.596){\fontsize{14.23}{17.07}\selectfont $J_1$}
  \put(45.23,49.73){\fontsize{14.23}{17.07}\selectfont $J_1$}
  \put(105.23,20.73){\fontsize{14.23}{17.07}\selectfont $J_1$}
  \put(88.23,78.596){\fontsize{14.23}{17.07}\selectfont $J_2$}
  \put(185.23,49.73){\fontsize{14.23}{17.07}\selectfont $J_2$}
  \put(200.23,20.73){\fontsize{14.23}{17.07}\selectfont $J_2$}
  \put(208.23,78.596){\fontsize{14.23}{17.07}\selectfont $J_{m+3}$}
  \put(135.23,49.73){\fontsize{14.23}{17.07}\selectfont $J_{m+3}$}
  \put(50.23,20.73){\fontsize{14.23}{17.07}\selectfont $J_{m+3}$} 
  \put(58.23,78.596){\fontsize{14.23}{17.07}\selectfont $\mcJ^1$}
  \put(75.23,49.73){\fontsize{14.23}{17.07}\selectfont $\mcJ^1$}
  \put(140.23,20.73){\fontsize{14.23}{17.07}\selectfont $\mcJ^1$} 
  \put(118.23,78.596){\fontsize{14.23}{17.07}\selectfont $\mcJ^2$}
  \put(215.23,49.73){\fontsize{14.23}{17.07}\selectfont $\mcJ^2$}
  \put(235.23,20.73){\fontsize{14.23}{17.07}\selectfont $\mcJ^2$} 
  \end{picture}%
\caption{A feasible schedule $\pi$ for the constructed instance of the $M3 \mid prpt \mid C_{\max}$ problem,
	when the set $S$ can be partitioned into two equal parts $S_1$ and $S_2$.
	The partition of the flow-shop jobs $\{\mcJ^1, \mcJ^2\}$ is correspondingly constructed.
	In the schedule, the jobs of $\mcJ^1$ and the jobs of $\mcJ^2$ are processed in the LPT order, respectively.\label{fig51}}
\end{figure}

Conversely, if the optimal makespan for the constructed instance is $C_{\max}^* = 3B + 6x = 3 q_{m+3}$,
then we will show next that $S$ admits a partition into two equal parts.

Firstly, we see that the second machine processing the open-shop job $J_{m+3}$ cannot be $M_1$,
since otherwise $M_1$ has to process all the jobs of $\mcF$ before $J_{m+3}$, leading to a makespan larger than $3B + 6x$;
the second machine processing the open-shop job $J_{m+3}$ cannot be $M_3$ either,
since otherwise $M_3$ has no room to process any job of $\mcF$ before $J_{m+3}$, leading to a makespan larger than $3B + 6x$ too.
Therefore, the second machine processing the open-shop job $J_{m+3}$ has to be $M_2$, see Figure~\ref{fig52} for an illustration.

\begin{figure}[ht]
\centering
  \setlength{\unitlength}{0.9bp}%
  \begin{picture}(301.85, 99.05)(0,0)
  \put(20,5){\includegraphics[scale=0.9]{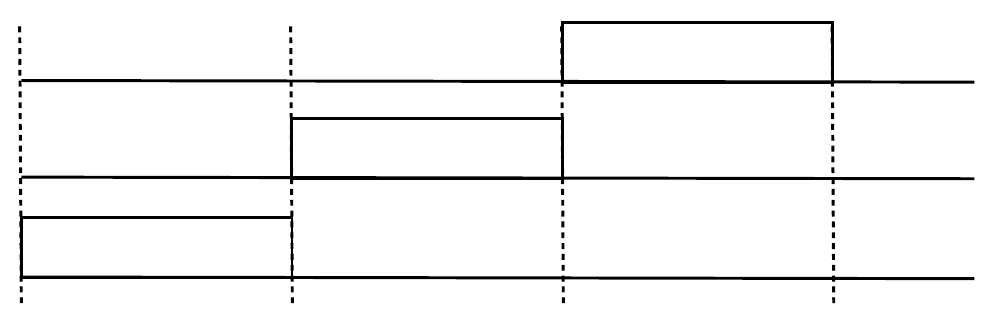}}
  \put(0.0,78.596){\fontsize{14.23}{17.07}\selectfont $M_1$}
  \put(0.0,49.73){\fontsize{14.23}{17.07}\selectfont $M_2$}
  \put(0.0,20.73){\fontsize{14.23}{17.07}\selectfont $M_3$}
  \put(23.0,0.0){\fontsize{14.23}{17.07}\selectfont $0$}
  \put(90.0,0.0){\fontsize{14.23}{17.07}\selectfont $B + 2x$}
  \put(165.97,0.0){\fontsize{14.23}{17.07}\selectfont $2B + 4x$}
  \put(250.97,0.0){\fontsize{14.23}{17.07}\selectfont $3B + 6x$}  
  \put(208.23,78.596){\fontsize{14.23}{17.07}\selectfont $J_{m+3}$}
  \put(135.23,49.73){\fontsize{14.23}{17.07}\selectfont $J_{m+3}$}
  \put(50.23,20.73){\fontsize{14.23}{17.07}\selectfont $J_{m+3}$} 
  \put(75.23,49.73){\fontsize{14.23}{17.07}\selectfont $\mcF^1$}
  \put(225.23,49.73){\fontsize{14.23}{17.07}\selectfont $\mcF^2$}
  \end{picture}%
\caption{An illustration of an optimal schedule for the constructed instance of the $M3 \mid prpt, (n-1, 1) \mid C_{\max}$ problem
	with $\mcO = \{J_{m+3}\}$ and $q_{m+3} = B + 2x$.
	Its makespan is $C_{\max}^* = 3B + 6x = 3 q_{m+3}$.\label{fig52}}
\end{figure}

Denote the job subsets processed before and after the job $J_{m+3}$ on $M_2$ as $\mcF^1$ and $\mcF^2$, respectively.
Since $x > B$, neither of $\mcF^1$ and $\mcF^2$ may contain both $J_1$ and $J_2$, which have processing times $x$.
It follows that $\mcF^1$ and $\mcF^2$ each contains exactly one of $J_1$ and $J_2$, and subsequently $P(\mcF^1) = P(\mcF^2) = B + x$.
Therefore, the jobs of $\mcF^1 \setminus \{J_1, J_2\}$ have a total processing time of exactly $B$, suggesting a subset of $S$ sums to exactly $B$.
This finishes the proof of the theorem. 
\end{proof}

\section{An FPTAS for $M3 \mid prpt, (n-1, 1) \mid C_{\max}$}
\label{sec6}
%==================================================================================================
Recall that the FPTAS designed in Theorem~\ref{thm37} designed for the general $M3 \mid prpt, (n-1, 1) \mid C_{\max}$ problem when $p_1 \ge q_n$
also works for the special case $M3 \mid prpt, (n-1, 1) \mid C_{\max}$ when $p_1 \ge q_n$.
In this section, we design another FPTAS for the $M3 \mid prpt, (n-1, 1) \mid C_{\max}$ problem when $p_1 < q_n$,
which is denoted as $M3 \mid prpt, (n-1, 1), p_1 < q_n \mid C_{\max}$ for simplicity.
That is, we design a $(1 + \epsilon)$-approximation algorithm $C(\epsilon)$ for this case, for any given $\epsilon > 0$,
and its running time is polynomial in both $n$ and $1/\epsilon$.

\subsection{A polynomial-time solvable case}
%--------------------------------------------------------------------------------------------------
The following lemma states that if the total processing time of all the flow-shop jobs is no greater than $q_n$,
then we can easily construct an optimal schedule in linear time.

\begin{lemma}
\label{lemma61}
If $P(\mcF) \le q_n$,
then the $M3 \mid prpt, (n-1, 1), p_1 < q_n \mid C_{\max}$ problem is solvable in linear time and $C_{\max}^* = 3q_n$. 
\end{lemma}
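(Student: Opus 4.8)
The plan is to exhibit an explicit schedule meeting the lower bound $3q_n$ from Eq.~(\ref{eq1}), which is already known to be valid whenever there is an open-shop job (here $J_n$). Since $q_n > p_1 \geq p_i$ for every flow-shop job and $P(\mcF) \le q_n$, there is plenty of room to interleave all of $\mcF$ around the single open-shop job on the three machines. Concretely, I would process $J_n$ on $M_3$ during $[0, q_n]$, on $M_2$ during $[q_n, 2q_n]$, and on $M_1$ during $[2q_n, 3q_n]$; that is, $J_n$ traverses the machines in the order $M_3, M_2, M_1$, one block after another with no gap, ending exactly at $3q_n$. The flow-shop jobs are all scheduled in the two windows left free by this staircase: each $J_i \in \mcF$ runs on $M_1$ first, then $M_2$, then $M_3$.

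The key step is to place $\mcF$ so that the flow-shop precedence $M_1 \to M_2 \to M_3$ is respected and nothing overlaps $J_n$. I would process the jobs of $\mcF$ consecutively in LPT order on $M_1$ during $[0, P(\mcF)]$ (this fits since $P(\mcF) \le q_n \le 2q_n$, the start time of $J_n$ on $M_1$), then consecutively on $M_2$ during $[q_n, q_n + P(\mcF)]$ — note $q_n + P(\mcF) \le 2q_n$, so this block sits exactly in the gap on $M_2$ between the end of $J_n$'s $M_3$-phase availability and the start of $J_n$ on $M_2$... more carefully, $M_2$ processes $\mcF$ on $[q_n, q_n+P(\mcF)] \subseteq [q_n, 2q_n]$, which is idle before $J_n$? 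No: $J_n$ occupies $[q_n,2q_n]$ on $M_2$. So instead put $\mcF$ on $M_2$ during $[2q_n, 2q_n + P(\mcF)] \subseteq [2q_n, 3q_n]$, after $J_n$ leaves $M_2$; and $\mcF$ on $M_3$ during $[q_n, q_n+P(\mcF)] \subseteq [q_n,2q_n]$, after $J_n$ leaves $M_3$. Then on $M_1$ put $\mcF$ during $[0,P(\mcF)]$, before $J_n$ arrives at $2q_n$. One then checks the per-job feasibility: for each $J_i \in \mcF$, its $M_1$-interval ends by $P(\mcF) \le q_n$, hence before its $M_2$-interval which starts at $\ge 2q_n$; and its $M_2$-interval ends by $3q_n$, but its $M_3$-interval ends by $q_n + P(\mcF) \le 2q_n$, which would violate $M_2 \to M_3$. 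So the windows must be reordered: $\mcF$ on $M_3$ during $[q_n, q_n+P(\mcF)]$, on $M_1$ during $[0, P(\mcF)]$, and on $M_2$ squeezed into $[P(\mcF)\text{-compatible gap}]$. The clean resolution is to note that with $P(\mcF)\le q_n$ the three windows $[0,q_n]$ (free on $M_1$ and on $M_2$), $[q_n,2q_n]$ (free on $M_1$ and $M_3$), $[2q_n,3q_n]$ (free on $M_2$ and $M_3$) each have length $q_n \ge P(\mcF)$, so I schedule $\mcF$ on $M_1$ in $[0,P(\mcF)]$, on $M_2$ in $[q_n, q_n+P(\mcF)]$ — wait, $M_2$ is busy with $J_n$ there. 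The correct assignment, matching Figure~\ref{fig42}, is $M_1: \langle \mcF, J_n\rangle$, $M_2:\langle \mcF, J_n\rangle$, $M_3:\langle J_n,\mcF\rangle$, i.e. $\mcF$ on $M_1$ during $[0,P(\mcF)]$, on $M_2$ during $[q_n, q_n+P(\mcF)]$ — here $J_n$ on $M_2$ would occupy $[?,?]$; following the figure, $J_n$ occupies $[2q_n,3q_n]$ on $M_1$, $[q_n,2q_n]$ on $M_2$, $[0,q_n]$ on $M_3$, and $\mcF$ occupies $[0,P(\mcF)]\subseteq[0,q_n]$ on $M_1$, $[2q_n,2q_n+P(\mcF)]\subseteq[2q_n,3q_n]$ on $M_2$, $[q_n,q_n+P(\mcF)]\subseteq[q_n,2q_n]$ on $M_3$. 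I would then verify the flow-shop order job-by-job: for $J_i\in\mcF$, finish on $M_1$ is $\le P(\mcF)\le q_n\le 2q_n = $ start of $\mcF$-block on $M_2$, good; finish on $M_2$ is $\le 2q_n+P(\mcF)$, and start on $M_3$ is $\ge q_n$ with finish $\le q_n+P(\mcF)\le 2q_n \le 2q_n = $ start on $M_2$ — so $M_3$ precedes $M_2$ for $\mcF$, wrong again.

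The honest main obstacle, then, is the precedence bookkeeping: the three length-$q_n$ windows must be assigned to the three machines so that, reading the $\mcF$-block start times in machine order $M_1,M_2,M_3$, they are non-decreasing and pairwise non-overlapping with the individual jobs. With $P(\mcF)\le q_n$ this is achievable — e.g. assign $\mcF$ to $[0,P(\mcF)]$ on $M_1$, to $[q_n,q_n+P(\mcF)]$ on $M_2$, to $[2q_n,2q_n+P(\mcF)]$ on $M_3$ (the staircase parallel to $J_n$'s), and route $J_n$ through the complementary slots $M_1:[P(\mcF),P(\mcF)+q_n]$? That exceeds the pattern. I expect the paper resolves this exactly as in Figure~\ref{fig42}: $J_n$ on $M_1,M_2,M_3$ in disjoint unit-length blocks forming a descending staircase and $\mcF$ forming the parallel ascending staircase in the $M_3,M_1,M_2$ slots, so that each $J_i\in\mcF$ sees $M_1$ then $M_2$ then $M_3$ in increasing time. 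Once the four blocks are placed, feasibility is immediate, the makespan is exactly $3q_n$, this matches the lower bound $C^*_{\max}\ge 3q_{\ell+1}=3q_n$ from Eq.~(\ref{eq1}), and the construction takes $O(n)$ time since only the LPT order within $\mcF$ (or indeed any fixed order) is needed. I would close by remarking that no sorting beyond a linear scan to compute $P(\mcF)$ is required, giving the claimed linear-time bound.
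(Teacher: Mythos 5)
Your overall strategy is the paper's: exhibit a four-block schedule meeting the lower bound $3q_n$ from Eq.~(\ref{eq1}), using $P(\mcF)\le q_n$ and $p_1<q_n$ to make everything fit, and observe that no sorting is needed so the construction is linear-time. But there is a genuine gap: you never actually write down a feasible schedule. Each explicit time assignment you try is one you yourself (correctly) reject, and your final ``resolution'' --- $J_n$ descending through $M_1,M_2,M_3$ over the three windows and $\mcF$ occupying the complementary $M_3,M_1,M_2$ slots --- is also infeasible: it would have every flow-shop job processed on $M_3$ during $[0,q_n]$, before its $M_1$- and $M_2$-operations, violating the flow-shop routing. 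The claim that ``each $J_i\in\mcF$ sees $M_1$ then $M_2$ then $M_3$ in increasing time'' is false for that assignment. The recurring error is in the machine order you give $J_n$: you keep trying the reversal $\langle M_3,M_2,M_1\rangle$ (or the forward order $\langle M_1,M_2,M_3\rangle$), neither of which leaves the three free slots in the order the flow-shop jobs need.

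The correct assignment, which is what Figure~\ref{fig42} and the paper's proof actually use, gives $J_n$ the \emph{cyclic} order $\langle M_3,M_1,M_2\rangle$: $J_n$ occupies $[0,q_n]$ on $M_3$, $[q_n,2q_n]$ on $M_1$, and $[2q_n,3q_n]$ on $M_2$. The complementary free slots are then $[0,q_n]$ on $M_1$, $[q_n,2q_n]$ on $M_2$, and $[2q_n,3q_n]$ on $M_3$ --- precisely an ascending staircase in the machine order $M_1,M_2,M_3$ --- so $\mcF$ is processed consecutively (in any fixed order) during $[0,P(\mcF)]$ on $M_1$, $[q_n,q_n+P(\mcF)]$ on $M_2$, and $[2q_n,2q_n+P(\mcF)]$ on $M_3$. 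Feasibility is then immediate: each flow-shop job finishes on $M_1$ by time $P(\mcF)\le q_n$, which is before the $\mcF$-block starts on $M_2$, and finishes on $M_2$ by $q_n+P(\mcF)\le 2q_n$, before the block starts on $M_3$; the three operations of $J_n$ occupy disjoint intervals; and no machine is double-booked. The makespan is $3q_n$, attained by $J_n$ on $M_2$, matching $C^*_{\max}\ge 3q_{\ell+1}=3q_n$. Until you commit to this specific routing of $J_n$, the proof is not complete.
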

\begin{proof}
In this case, we construct a permutation schedule $\pi$ in which the job processing order for $M_1$ is $\langle \mcF, J_n\rangle$,
where the jobs of $\mcF$ are processed in the given (arbitrary, no need to be sorted) order;
the job processing order for $M_2$ is $\langle \mcF, J_n\rangle$;
the job processing order for $M_3$ is $\langle J_n, \mcF\rangle$.
As depicted in Figure~\ref{fig42}, the jobs of $\mcF$ are processed consecutively on each machine,
starting at time $0$ on $M_1$,
starting at time $q_n$ on $M_2$, and
starting at time $2 q_n$ on $M_3$;
the unique open-shop job $J_n$
starts at time $0$ on $M_3$,
starts at time $q_n$ on $M_1$, and
starts at time $2q_n$ on $M_2$.
The schedule is feasible due to $P(\mcF) \le q_n$ and its makespan is $3q_n$, and thus by Eq.~(\ref{eq1}) it is an optimal schedule.
\end{proof}

\subsection{Structural properties of optimal schedules}
%--------------------------------------------------------------------------------------------------
We assume in the rest of the section that $P(\mcF) > q_n$, from which we conclude that $\mcF$ contains at least two jobs.
We explore the structural properties of optimal schedules for designing our approximation algorithm.

\begin{lemma}
\label{lemma62}
There exists an optimal schedule for the $M3 \mid prpt, (n-1, 1), p_1 < q_n \mid C_{\max}$ problem
in which the open-shop job $J_n$ is processed on $M_3$ before it is processed on $M_1$.
\end{lemma}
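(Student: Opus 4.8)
The plan is to take an arbitrary optimal schedule $\pi^*$ and show that, by an exchange argument, we can transform it into an optimal schedule in which $J_n$ visits $M_3$ before $M_1$, without increasing the makespan. First I would observe that among the three machines, $M_3$ is the one that processes only $\mcF$-jobs other than $J_n$, whereas the flow-shop jobs impose the chain constraint $M_1 \to M_2 \to M_3$ on themselves; so intuitively we want $J_n$ to occupy $M_3$ as early as possible and $M_1$ as late as possible, because the flow-shop jobs need $M_1$ first and $M_3$ last. I would consider the position of $J_n$ in the machine orders of $\pi^*$ and argue by cases on the order in which $J_n$ visits the three machines (there are $3! = 6$ possible orders, but by symmetry of $M_1$ and $M_3$ with respect to the reversal of the flow-shop chain, only a few are genuinely different).

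The key steps, in order: (i) Using Lemma~\ref{lemma61} we may assume $P(\mcF) > q_n$, so $\mcF$ has at least two jobs and all three machines are ``busy'' in any optimal schedule near the bound of Eq.~(\ref{eq1}); this rules out degenerate configurations. (ii) Show that in some optimal schedule $J_n$ is processed on $M_1$ either first or last among all jobs on $M_1$: since $J_n$ is a single job and every other job on $M_1$ is a flow-shop job that must also be processed later on $M_2$ and $M_3$, if $J_n$ sat in the ``middle'' of $M_1$'s order we could slide it to one end without creating infeasibility, by a standard adjacent-swap argument that never delays a flow-shop job's start on $M_2$. Symmetrically, $J_n$ may be assumed first or last on $M_3$. (iii) Now suppose for contradiction that in every optimal schedule $J_n$ is processed on $M_1$ before it is processed on $M_3$. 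Combined with (ii), the ``bad'' configuration is essentially: $J_n$ early on $M_1$ (so all of $\mcF$ follows it on $M_1$, finishing no earlier than $q_n + P(\mcF)$, forcing the last flow-shop job to finish on $M_3$ no earlier than $q_n + P(\mcF) + 2 p_{\min}$, which is still within the lower bound only if $q_n$ is small), or $J_n$ late on $M_3$ (so all of $\mcF$ precedes it on $M_3$). I would then exhibit the explicit re-arrangement: keep the $M_2$-order, but reverse the roles of $M_1$ and $M_3$ for $J_n$ — put $J_n$ first on $M_3$ and last on $M_1$, processing $\mcF$ in LPT order in the freed slots as in Figure~\ref{fig42}-style constructions. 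A direct computation of the three machine completion times shows the makespan does not increase, using $p_1 < q_n$ to guarantee feasibility of the shifted $J_n$ block.

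The main obstacle I expect is step (iii): carefully checking feasibility of the re-arranged schedule, in particular that moving $J_n$ to the front of $M_3$ does not collide with the flow-shop jobs that must reach $M_3$ only after clearing $M_1$ and $M_2$. The inequality $p_1 < q_n$ is exactly what is needed here — it ensures that by the time the first flow-shop job is ready for $M_3$, the job $J_n$ (of length $q_n$) has already vacated it — so the proof must isolate precisely where that hypothesis is invoked. A secondary subtlety is that ``optimal'' need not mean ``meets the lower bound of Eq.~(\ref{eq1})'': since the problem is NP-hard (Theorem~\ref{thm51}), the optimal makespan may strictly exceed that bound, so the exchange argument must be purely local (makespan-nonincreasing) rather than relying on a closed-form value of $C_{\max}^*$. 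I would therefore phrase step (ii) and (iii) entirely as ``this transformation does not increase any machine's completion time,'' which is robust to that difficulty.
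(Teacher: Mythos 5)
Your overall strategy --- an exchange argument that moves $J_n$ earlier on $M_3$ and later on $M_1$, exploiting that flow-shop jobs tolerate being shifted earlier on $M_1$ and later on $M_3$, with $p_1 < q_n$ supplying feasibility --- is the same as the paper's. But the execution has genuine gaps. First, step (ii) is not a ``standard adjacent-swap'': sliding $J_n$ to the \emph{front} of $M_1$ pushes every flow-shop job that preceded it $q_n$ units later on $M_1$, which certainly can delay their starts on $M_2$, contrary to your claim; sliding $J_n$ to the \emph{back} of $M_1$ moves $J_n$ itself later, which can collide with its own processing on $M_2$ or $M_3$. Neither direction is free, and in fact the lemma does not require $J_n$ to be first or last on either machine. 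Second, and more seriously, the re-arrangement you exhibit in step (iii) (put $J_n$ first on $M_3$ and last on $M_1$, re-process $\mcF$ in LPT order in the freed slots) is a \emph{global} reconstruction; showing its makespan does not exceed that of an arbitrary optimal schedule is not a ``direct computation,'' because there is no closed form for $C_{\max}^*$ to compare against --- a point you yourself make in your final paragraph, where you correctly insist the argument must be purely local and makespan-nonincreasing. Those two demands are in tension and the proposal never resolves it.

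The missing ingredient is the local bookkeeping that makes the swap work. Given an optimal $\pi^*$ with $C_n^1 \le S_n^3$, the paper picks two time points $t_1 \le S_n^1$ and $t_2 \ge C_n^3$ (adjusted so that no flow-shop job straddles them on $M_3$ and on $M_1$, respectively), then swaps $J_n$ on $M_1$ with the block of flow-shop jobs occupying $[C_n^1, t_2]$ (so $J_n$ ends at $t_2$ on $M_1$) and swaps $J_n$ on $M_3$ with the block occupying $[t_1, S_n^3]$ (so $J_n$ starts at $t_1$ on $M_3$). Flow-shop jobs only move earlier on $M_1$ and later on $M_3$, so their chain constraints and all machine completion times are preserved. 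The remaining issue --- that $J_n$'s three processing intervals stay pairwise disjoint, which depends on where $J_n$ sits on $M_2$ --- is exactly where $p_1 < q_n$ enters: it lets one assume no flow-shop job straddles $C_n^2$ on $M_3$ (when $J_n$ is first on $M_2$), respectively $S_n^2$ on $M_1$ (when $J_n$ is last on $M_2$), so that $t_1$ and $t_2$ can be chosen clear of $J_n$'s $M_2$-interval. Your proposal never verifies this non-overlap of $J_n$ with itself after the move, and without it the constructed schedule is not known to be feasible.
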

\begin{proof}
We prove the lemma by construction.
Suppose $\pi^*$ is an optimal schedule in which the open-shop job $J_n$ is processed on $M_3$ after it is processed on $M_1$.
(That is, the machine order for $J_n$ is $\langle -, M_1, -, M_3, - \rangle$, with exactly one of the $-$'s replaced by $M_2$.)
Recall that we use $S_j^i$ ($C_j^i$, respectively) to denote the start (finish, respectively) time of the job $J_j$ on $M_i$ in $\pi^*$.
Clearly, $C_n^1 \le S_n^3$.

We determine in the following two time points $t_1$ and $t_2$.

We see that if $J_n$ is first processed on $M_2$, {\it i.e.}, $C_n^2 \le S_n^1$,
then all the jobs of $\mcF$ processed before $J_n$ on $M_2$ can be finished on $M_3$ by time $C_n^2$, due to $p_1 < q_n$.
It follows that if $J_n$ is first processed on $M_2$, then we may assume without loss of generality that in $\pi^*$,
there is no job $J_j$ such that $S_j^3 < C_n^2 < C_j^3$ (that is, the processing periods of $J_j$ on $M_3$ and of $J_n$ on $M_2$ intersect at time $C_n^2$).
%for any job $J_j$, either $C_j^3 \le C_n^2$ or $C_n^2 \le S_j^3$.
If there is a job $J_j$ such that $S_j^3 < S_n^1 < C_j^3$ (that is, the processing periods of $J_j$ on $M_3$ and of $J_n$ on $M_1$ intersect at time $S_n^1$),
then we set $t_1 = S_j^3$, otherwise we set $t_1 = S_n^1$ (see Figure~\ref{fig61} for an illustration).

\begin{figure}[ht]
\centering
  \setlength{\unitlength}{0.8bp}%
  \begin{picture}(406.45, 120.12)(0,0)
  \put(0,0){\includegraphics[scale=0.8]{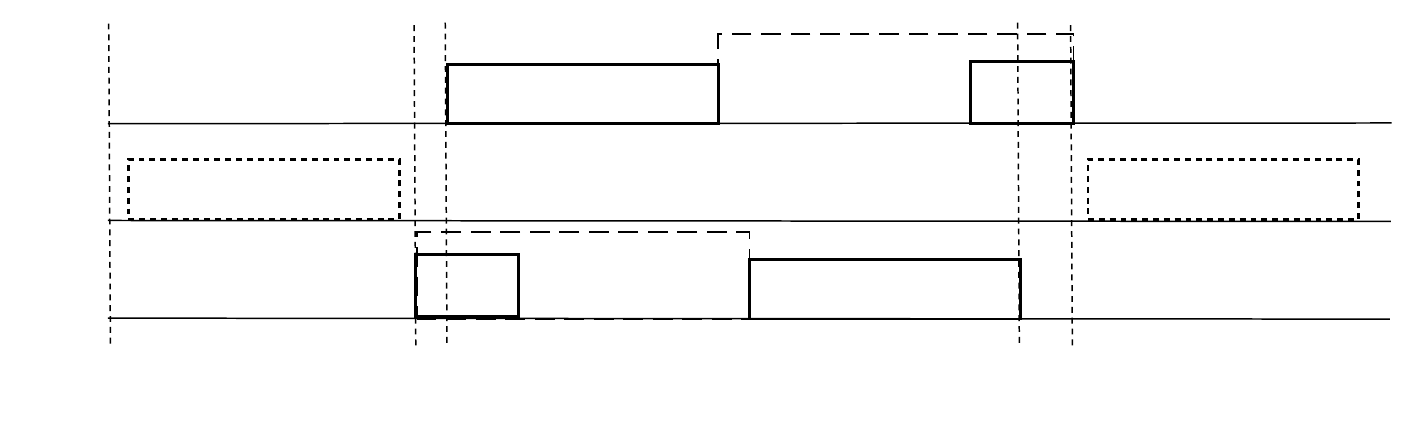}}
  \put(122.50,34.14){\fontsize{14.23}{17.07}\selectfont $J_j$}
  \put(142.12,89.26){\fontsize{14.23}{17.07}\selectfont $J_n$}
  \put(50.68,61.13){\fontsize{14.23}{17.07}\selectfont $J_n$}
  \put(5.67,90.17){\fontsize{14.23}{17.07}\selectfont $M_1$}
  \put(5.67,62.95){\fontsize{14.23}{17.07}\selectfont $M_2$}
  \put(5.67,35.72){\fontsize{14.23}{17.07}\selectfont $M_3$}
  \put(117.31,8.73){\fontsize{14.23}{17.07}\selectfont $t_1$}
  \put(25.64,8.73){\fontsize{14.23}{17.07}\selectfont $0$}
  \put(282.26,89.65){\fontsize{14.23}{17.07}\selectfont $J_k$}
  \put(229.15,33.02){\fontsize{14.23}{17.07}\selectfont $J_n$}
  \put(327.04,61.13){\fontsize{14.23}{17.07}\selectfont $J_n$}
  \put(304.89,8.73){\fontsize{14.23}{17.07}\selectfont $t_2$}
  \put(194.88,62.06){\fontsize{14.23}{17.07}\selectfont $\ldots$}
  \put(236.62,89.65){\fontsize{14.23}{17.07}\selectfont ${\cal X}$}
  \put(155.79,36.20){\fontsize{14.23}{17.07}\selectfont ${\cal Y}$}
  \end{picture}%
\caption{Two possible values for $t_1$: if the job $J_j$ as shown exists, then $t_1 = S_j^3$, otherwise $t_1 = S_n^1$;
	symmetrically, two possible values for $t_2$: if the job $J_k$ as shown exists, then $t_2 = C_k^1$, otherwise $t_2 = C_n^3$.\label{fig61}}
\end{figure}

Symmetrically, if $J_n$ is last processed on $M_2$, {\it i.e.}, $C_n^3 \le S_n^2$,
then all the jobs of $\mcF$ processed after $J_n$ on $M_2$ can be started on $M_1$ by time $S_n^2$, due to $p_1 < q_n$.
It follows that if $J_n$ is last processed on $M_2$, then we may assume without loss of generality that in $\pi^*$,
there is no job $J_k$ such that $S_k^1 < S_n^2 < C_k^1$ (that is, the processing periods of $J_k$ on $M_1$ and of $J_n$ on $M_2$ intersect at time $S_n^2$).
%for any job $J_k$, either $C_k^1 \le S_n^2$ or $S_n^2 \le S_k^1$.
If there is a job $J_k$ such that $S_k^1 < C_n^3 < C_k^1$ (that is, the processing periods of $J_k$ on $M_1$ and of $J_n$ on $M_3$ intersect at time $C_n^3$),
then we set $t_2 = C_k^1$, otherwise we set $t_2 = C_n^3$ (see Figure~\ref{fig61} for an illustration).

We note that both $t_1$ and $t_2$ are well defined.
On the machine $M_2$, the job $J_n$ is either finished before time $t_1$ ({\it i.e.}, $C_n^2 \le t_1$),
or started after time $t_2$ ({\it i.e.}, $t_2 \le S_n^2$),
or is processed in between the closed time interval $[C_n^1, S_n^3]$.
We thus obtain from $\pi^*$ another schedule $\pi$ by 
1) moving the job subset $\mcX$ originally processed on $M_1$ in between the closed time interval $[C_n^1, t_2]$ to start exactly $q_n$ time units earlier,
	while moving $J_n$ on $M_1$ to start at time $t_2 - q_n$, and
2) moving the job subset $\mcY$ originally processed on $M_3$ in between the closed time interval $[t_1, S_n^3]$ to start exactly $q_n$ time units later,
	while moving $J_n$ on $M_3$ to start at time $t_1$.
See for an illustration in Figure~\ref{fig62} and how it is obtained from the configuration in Figure~\ref{fig61}.
\begin{figure}[ht]
\centering
  \setlength{\unitlength}{0.8bp}%
  \begin{picture}(406.45, 120.12)(0,0)
  \put(0,0){\includegraphics[scale=0.8]{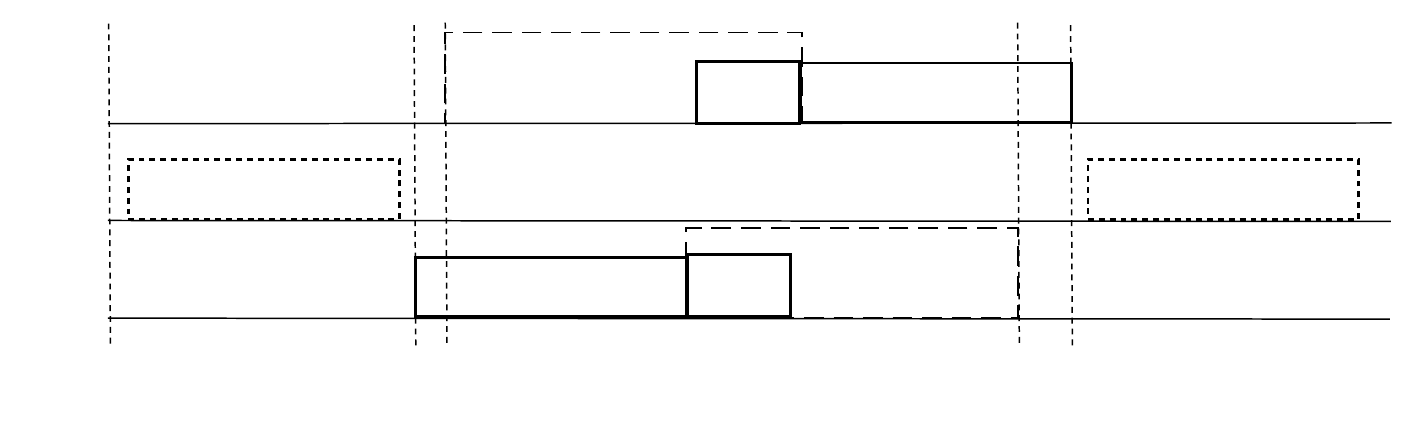}}
  \put(200.73,34.14){\fontsize{14.23}{17.07}\selectfont $J_j$}
  \put(243.81,89.58){\fontsize{14.23}{17.07}\selectfont $J_n$}
  \put(50.68,61.13){\fontsize{14.23}{17.07}\selectfont $J_n$}
  \put(5.67,90.17){\fontsize{14.23}{17.07}\selectfont $M_1$}
  \put(5.67,62.95){\fontsize{14.23}{17.07}\selectfont $M_2$}
  \put(5.67,35.72){\fontsize{14.23}{17.07}\selectfont $M_3$}
  \put(117.31,8.73){\fontsize{14.23}{17.07}\selectfont $t_1$}
  \put(25.64,8.73){\fontsize{14.23}{17.07}\selectfont $0$}
  \put(203.38,89.65){\fontsize{14.23}{17.07}\selectfont $J_k$}
  \put(132.99,33.67){\fontsize{14.23}{17.07}\selectfont $J_n$}
  \put(327.04,61.13){\fontsize{14.23}{17.07}\selectfont $J_n$}
  \put(304.89,8.73){\fontsize{14.23}{17.07}\selectfont $t_2$}
  \put(194.88,62.06){\fontsize{14.23}{17.07}\selectfont $\ldots$}
  \put(157.74,89.65){\fontsize{14.23}{17.07}\selectfont ${\cal X}$}
  \put(232.71,34.24){\fontsize{14.23}{17.07}\selectfont ${\cal Y}$}
  \end{picture}%
\caption{The schedule $\pi$ obtained from the optimal schedule $\pi^*$ by
	swapping the job subset $\mcX$ processed on $M_1$ in between the closed time interval $[C_n^1, t_2]$ with $J_n$ on $M_1$, and
	swapping the job subset $\mcY$ processed on $M_3$ in between the closed time interval $[t_1, S_n^3]$ with $J_n$ on $M_3$.\label{fig62}}
\end{figure}
The schedule $\pi$ is feasible because the processing times of $J_n$ on all three machines are still non-overlapping and
all the other jobs are flow-shop jobs which can only be processed earlier on $M_1$ and/or be processed later on $M_3$.
Since no other job is moved, one clearly sees that $\pi$ maintains the same makespan as $\pi^*$ and thus $\pi$ is also an optimal schedule,
in which $J_n$ is processed on $M_3$ before it is processed on $M_1$.
This finishes the proof of the lemma.
\end{proof}

From Lemma \ref{lemma62}, we conclude that in the optimal schedules the machine order for $J_n$ is $\langle -, M_3, -, M_1, - \rangle$,
with exactly one of the $-$'s replaced by $M_2$.
The following two lemmas discuss the optimal schedules constrained to the machine order for $J_n$.

\begin{lemma}
\label{lemma63}
If the machine order for $J_n$ is forced to be $\langle M_2, M_3, M_1\rangle$ or $\langle M_3, M_1, M_2\rangle$,
then an optimal schedule $\pi^*$ can be constructed in $O(n \log n)$ time and its makespan is $C_{\max}^* = 2q_n + P(\mcF)$.
\end{lemma}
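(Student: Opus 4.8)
The plan is to exhibit explicit permutation schedules for each of the two forced machine orders and verify that each achieves the lower bound $2q_n + P(\mcF)$ from Eq.~(\ref{eq1}), which immediately certifies optimality. Since $J_n$ must visit all three machines, its three processing intervals occupy total length $3q_n$ and cannot overlap, so on any machine the flow-shop jobs that "wrap around" $J_n$ force extra idle time unless $P(\mcF)$ is absorbed within the $2q_n$ slack; the target makespan $2q_n + P(\mcF)$ corresponds to each machine processing $J_n$ once (contributing $q_n$) plus all of $\mcF$ (contributing $P(\mcF)$) plus exactly $q_n$ of idle/wait, with no further waste. So the job of the construction is to schedule $\mcF$ tightly around $J_n$ on all three machines simultaneously.

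First I would handle the order $\langle M_2, M_3, M_1\rangle$. Put $J_n$ on $M_2$ during $[0,q_n]$, on $M_3$ during $[q_n,2q_n]$, and on $M_1$ during $[2q_n, 3q_n]$ — these are consistent and non-overlapping. On $M_1$, process all of $\mcF$ in LPT order during $[0, P(\mcF)]$ and then $J_n$ during $[2q_n, 3q_n]$; feasibility on $M_1$ requires $P(\mcF)\le 2q_n$, but if $P(\mcF) > 2q_n$ we instead run $\mcF$ first and $J_n$ last, $J_n$ on $M_1$ during $[P(\mcF), P(\mcF)+q_n]$, and shift the $M_3$ and $M_2$ copies of $J_n$ accordingly — the point is to push $\mcF$ as early as possible on $M_1$, as late as possible on $M_3$, and place $\mcF$ on $M_2$ in between, so that on each machine the order is $\langle J_n?, \mcF, J_n?\rangle$ in LPT order with $J_n$ inserted in its forced slot, and the only idle time is the forced $q_n$ gap created by $J_n$'s three disjoint visits. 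One then checks that the completion time on each of $M_1, M_2, M_3$ equals $2q_n + P(\mcF)$ (using $p_1 < q_n$ to guarantee that an $M_1$-finished flow-shop job is available on $M_2$, and an $M_2$-finished one available on $M_3$, without creating new idle periods). The order $\langle M_3, M_1, M_2\rangle$ is handled symmetrically: $J_n$ on $M_3$ during $[0,q_n]$, on $M_1$ during $[q_n,2q_n]$, on $M_2$ during $[2q_n,3q_n]$, with $\mcF$ scheduled to fill the remaining $2q_n + P(\mcF) - 3q_n$ worth of time contiguously on each machine; again $p_1 < q_n$ ensures the flow-shop precedence $M_1\to M_2\to M_3$ never forces an extra gap. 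In both cases the schedule is a fixed permutation on each machine built from one LPT sort, so it is constructed in $O(n\log n)$ time.

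The main obstacle I anticipate is the careful case analysis of where the flow-shop block falls relative to $J_n$'s forced slot on the machine where $J_n$ goes last (for $\langle M_2,M_3,M_1\rangle$ this is $M_1$, and $\mcF$ might or might not fit entirely before $J_n$'s slot at $[2q_n,3q_n]$), and the dual issue on the machine where $J_n$ goes first. Showing that in every sub-case the three machine completion times coincide at $2q_n + P(\mcF)$, and that $p_1 < q_n$ is exactly the hypothesis needed to rule out flow-shop-induced idle time, is where the real verification lies; the rest is bookkeeping. Matching the lower bound in Eq.~(\ref{eq1}) then closes the argument, since $2q_n + P(\mcF) \ge 2p_1 + P(\mcF)$ is consistent with — indeed dominated in the relevant regime by — the stated bound, and equality with the constructed makespan forces optimality.
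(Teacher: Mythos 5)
There is a genuine gap in your optimality argument. You certify optimality by claiming the constructed makespan $2q_n + P(\mcF)$ ``achieves the lower bound from Eq.~(\ref{eq1})'', but Eq.~(\ref{eq1}) only gives $C_{\max}^* \ge \max\{P(\mcF)+q_n,\ 3q_n,\ 2p_1+P(\mcF)\}$, and in the operative regime of this lemma ($p_1 < q_n$ and $P(\mcF) > q_n$, the latter being a standing assumption from Section~6.2) every one of these three quantities is \emph{strictly less} than $2q_n + P(\mcF)$. So matching $2q_n+P(\mcF)$ certifies nothing; your closing sentence essentially concedes this (the constructed value ``dominates'' the bound) and then asserts optimality anyway, which is a non sequitur. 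What is actually needed --- and what constitutes the heart of the paper's proof --- is a lower bound specific to the \emph{constrained} problem: for every feasible schedule in which $J_n$'s machine order is $\langle M_2, M_3, M_1\rangle$, the makespan is at least $2q_n+P(\mcF)$. The paper gets this by letting $\mcX$ be the flow-shop jobs preceding $J_n$ on $M_2$ and $\mcY = \mcF\setminus\mcX$, observing (using $p_1<q_n$, as in the $t_1$ argument of Lemma~\ref{lemma62}) that one may normalize so that $S_n^3=C_n^2 \ge P(\mcX)+q_n$ while all of $\mcY$ must follow $J_n$ on $M_3$, whence $M_3$ finishes no earlier than $P(\mcX)+2q_n+P(\mcY)=2q_n+P(\mcF)$. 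Your intuition that ``$q_n$ of idle is unavoidable'' is exactly the right heuristic, but it is precisely the claim that must be proved, and it is false for the unconstrained problem (and for the order $\langle M_3,M_2,M_1\rangle$ treated in Lemma~\ref{lemma64}), so it cannot follow from the global bound of Eq.~(\ref{eq1}).

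Your construction, by contrast, is essentially the paper's: take $\mcX=\emptyset$, run $\mcF$ in one LPT order on all three machines starting at times $0$, $q_n$, $2q_n$ respectively, and insert $J_n$ at $[0,q_n]$ on $M_2$, $[q_n,2q_n]$ on $M_3$, and $[\max\{P(\mcF),2q_n\},\ \max\{P(\mcF),2q_n\}+q_n]$ on $M_1$. Two small corrections there: the three machine completion times are $\max\{P(\mcF),2q_n\}+q_n$, $q_n+P(\mcF)$ and $2q_n+P(\mcF)$ --- they do not all equal $2q_n+P(\mcF)$ as you state, and you need $P(\mcF)>q_n$ to conclude that the maximum is $2q_n+P(\mcF)$ (otherwise the makespan is $3q_n$, the case already dispatched by Lemma~\ref{lemma61}).
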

\begin{proof}
We prove the lemma for the case where the machine order for $J_n$ is $\langle M_2, M_3, M_1\rangle$;
the case where the machine order for $J_n$ is $\langle M_3, M_1, M_2\rangle$ can be almost identically argued, by swapping $M_3$ with $M_1$.

Let $\pi$ be a schedule in which the machine order for $J_n$ is $\langle M_2, M_3, M_1\rangle$.
Since $J_n$ is last processed by $M_1$ while all the other jobs are flow-shop jobs,
we may swap $J_n$ with the job subset processed on $M_1$ after $J_n$, if any.
This swapping certainly does not increase the makespan and thus we may assume that, on $M_1$, all the jobs of $\mcF$ are processed before $J_n$.
We prove the lemma by showing that $C_{\max}^{\pi} \ge 2q_n + P(\mcF)$ and constructing a concrete schedule with its makespan equal to $2q_n + P(\mcF)$.
Recall that we use $S_j^i$ ($C_j^i$, respectively) to denote the start (finish, respectively) time of the job $J_j$ on $M_i$ in $\pi$.

Similar to the place where we determine the time point $t_1$ in the proof of Lemma~\ref{lemma62},
all the jobs of $\mcF$ processed before $J_n$ on $M_2$, which form the subset $\mcX$, can be finished on $M_3$ by time $C_n^2$, due to $p_1 < q_n$.
It follows that $J_n$ may immediately start on $M_3$ once it is finished on $M_2$, {\it i.e.}, $S_n^3 = C_n^2$,
while all the other jobs of $\mcF$ not processed by time $C_n^2$ on $M_3$, which form the subset $\mcY$,
can be moved to be processed after $J_n$ (in their original processing order on $M_3$ in $\pi$).
See for an illustration in Figure~\ref{fig63}.
Since $\mcX \cup \mcY = \mcF$, the completion time for the machine $M_3$ is at least $P(\mcX) + q_n + q_n + P(\mcY) \ge 2q_n + P(\mcF)$,
and consequently $C_{\max}^{\pi} \ge 2q_n + P(\mcF)$.

\begin{figure}[ht]
\centering
  \setlength{\unitlength}{0.8bp}%
  \begin{picture}(335.29, 119.88)(0,0)
  \put(0,0){\includegraphics[scale=0.8]{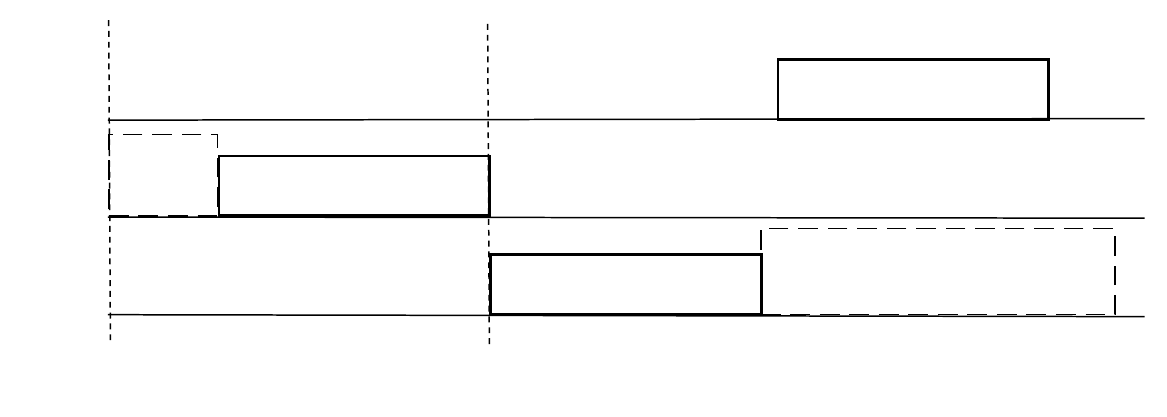}}
  \put(237.29,89.58){\fontsize{14.23}{17.07}\selectfont $J_n$}
  \put(76.76,61.13){\fontsize{14.23}{17.07}\selectfont $J_n$}
  \put(5.67,90.17){\fontsize{14.23}{17.07}\selectfont $M_1$}
  \put(5.67,62.95){\fontsize{14.23}{17.07}\selectfont $M_2$}
  \put(5.67,35.72){\fontsize{14.23}{17.07}\selectfont $M_3$}
  \put(25.64,8.73){\fontsize{14.23}{17.07}\selectfont $0$}
  \put(154.50,33.34){\fontsize{14.23}{17.07}\selectfont $J_n$}
  \put(36.49,62.27){\fontsize{14.23}{17.07}\selectfont ${\cal X}$}
  \put(236.95,35.87){\fontsize{14.23}{17.07}\selectfont ${\cal Y}$}
  \end{picture}%
\caption{The schedule $\pi$ in which the machine order for $J_n$ is $\langle M_2, M_3, M_1\rangle$.
	The job $J_n$ starts on $M_3$ immediately after it is finished on $M_2$.
	The job subset $\mcX$ is processed on $M_2$ before $J_n$ and the job subset $\mcY$ is processed on $M_3$ after $J_n$.\label{fig63}}
\end{figure}

Clearly, if $\mcX = \emptyset$ in the schedule $\pi$, and the jobs of $\mcF$ are processed on all the three machines in the same LPT order,
then the completion times for $M_1, M_2, M_3$ are $\max\{P(\mcF), 2q_n\} + q_n$, $q_n + P(\mcF)$, and $2q_n + P(\mcF)$, respectively.
From the fact that $P(\mcF) > q_n$, we conclude that the makespan of such a schedule is exactly $2q_n + P(\mcF)$.
Note that the schedule can be constructed in $O(n \log n)$ time.
This finishes the proof of the lemma.
\end{proof}

\begin{lemma}
\label{lemma64}
If the machine order for $J_n$ is forced to be $\langle M_3, M_2, M_1\rangle$,
and $\mcF^1, \mcF^2 \subseteq \mcF$ are the subsets of jobs forced to be processed before and after $J_n$ on the machine $M_2$, respectively,
then an optimal schedule $\pi^*$ can be constructed in $O(n \log n)$ time and its makespan is
$C_{\max}^* = \max\{p_{i_1} + P(\mcF^1), q_n\} + q_n + \max\{p_{i_2} + P(\mcF^2), q_n\}$,
where $J_{i_1}$ is the largest job of $\mcF^1$ and $J_{i_2}$ is the largest job of $\mcF^2$.
\end{lemma}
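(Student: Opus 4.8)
The plan is to establish that $C := A + q_n + B$, where $A := \max\{p_{i_1} + P(\mcF^1),\, q_n\}$ and $B := \max\{p_{i_2} + P(\mcF^2),\, q_n\}$, is simultaneously a lower bound for the makespan of every feasible schedule respecting the stated restrictions and the makespan of a specific schedule buildable in $O(n\log n)$ time. It is worth noting up front that the problem (with these restrictions) is invariant under reversing time and exchanging $M_1$ with $M_3$, an operation that exchanges $\mcF^1$ with $\mcF^2$; this symmetry is visible in the form of $C$, and I would use it to halve the bookkeeping.

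Lower bound. Fix a feasible schedule $\pi$ with $J_n$ having machine order $\langle M_3, M_2, M_1\rangle$, with $\mcF^1$ processed before $J_n$ on $M_2$ and $\mcF^2$ after. Since $J_n$ visits $M_3$ before $M_2$ we get $S_n^2 \ge C_n^3 \ge q_n$. Every job of $\mcF^1$ is finished on $M_2$ before $S_n^2$, so all of its operations on $M_1$ and $M_2$ lie in $[0, S_n^2]$ (no operation of $J_n$ sits there, since $J_n$ occupies $M_1$ only after $C_n^2 > S_n^2$); deleting all other jobs leaves a feasible two-machine flow-shop schedule of the proportionate set $\mcF^1$ on $M_1, M_2$, whose makespan is at least $P(\mcF^1) + p_{i_1}$ by the optimal-makespan formula for two-machine proportionate flow shops. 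Hence $S_n^2 \ge \max\{q_n,\, P(\mcF^1)+p_{i_1}\} = A$ and $C_n^2 = S_n^2 + q_n \ge A + q_n$. By the mirror argument, every job of $\mcF^2$ starts on $M_2$ after $C_n^2$, so all of its operations on $M_2$ and $M_3$ lie in $[C_n^2, C_{\max}^{\pi}]$, giving $C_{\max}^{\pi} - C_n^2 \ge P(\mcF^2) + p_{i_2}$; since also $C_{\max}^{\pi} \ge C_n^1 \ge C_n^2 + q_n$, we get $C_{\max}^{\pi} - C_n^2 \ge B$, hence $C_{\max}^{\pi} \ge C_n^2 + B \ge A + q_n + B = C$.

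Construction. Use the permutation schedule processing $\langle \mcF^1, \mcF^2, J_n\rangle$ on $M_1$, $\langle \mcF^1, J_n, \mcF^2\rangle$ on $M_2$, and $\langle J_n, \mcF^1, \mcF^2\rangle$ on $M_3$, with each of $\mcF^1$ and $\mcF^2$ in LPT order on all three machines; this costs $O(n\log n)$. Invoking $p_1 < q_n$ and that $p_{i_1}, p_{i_2}$ are the largest processing times in their blocks, I would check that within each block no machine idles, that $J_n$ starts at $0$ on $M_3$, at $A$ on $M_2$, and at $\max\{P(\mcF),\, A+q_n\}$ on $M_1$, and consequently that $M_1, M_2, M_3$ complete at $\max\{P(\mcF),\, A+q_n\}+q_n$, $A+q_n+P(\mcF^2)$, and $A+q_n+p_{i_2}+P(\mcF^2)$, respectively. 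A short case split on whether $B = p_{i_2}+P(\mcF^2)$ or $B = q_n$---using $P(\mcF^1) \le A$, $P(\mcF^2) < B$, and $P(\mcF) < A+B$---then shows the maximum of these three values equals $A+q_n+B = C$ (the binding machine being $M_3$ in the first case, $M_1$ in the second). With the lower bound this gives $C_{\max}^* = C$.

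The step I expect to be the real crux is pinning down the lower bound on $S_n^2$ (and its mirror for the tail after $C_n^2$): it has to be the ``$+p_{\max}$'' value $P(\mcF^1)+p_{i_1}$, not the crude workload bound $P(\mcF^1)$ nor the weaker $P(\mcF^1)+p_{\min}$, and getting it requires the two observations that $\mcF^1$ genuinely forms a two-machine proportionate flow shop confined to $[0, S_n^2]$ and that the two-machine proportionate flow-shop optimum equals total load plus the largest job. Everything after that---the feasibility checks for the explicit schedule and confirming it meets the bound with equality, not just ``$\le$''---is routine, since each inequality needed is just one of ``a block job is at most the block maximum'' or ``$p_1 < q_n$''; but with several competing $\max$-expressions around it does need to be carried out carefully.
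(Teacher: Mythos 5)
Your proof is correct and follows essentially the same route as the paper's: the identical lower bound obtained by confining $\mcF^1$ to a two-machine proportionate flow shop on $(M_1,M_2)$ within $[0,S_n^2]$ and $\mcF^2$ to one on $(M_2,M_3)$ within $[C_n^2, C_{\max}]$ (together with $S_n^2 \ge C_n^3 \ge q_n$ and $C_n^1 \ge C_n^2 + q_n$), matched by the same LPT permutation schedule. The only differences are presentational: you invoke the two-machine proportionate flow-shop makespan formula explicitly where the paper simply asserts $S_n^2 \ge p_{i_1} + P(\mcF^1)$, and you omit the paper's preliminary normalizations (moving $J_n$ last on $M_1$ and first on $M_3$, and dispatching $\mcF^1 = \emptyset$ or $\mcF^2 = \emptyset$ via Lemma~\ref{lemma63}), which your version of the lower bound does not actually need.
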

\begin{proof}
Consider a feasible $\pi$ in which the machine order for $J_n$ is $\langle M_3, M_2, M_1\rangle$,
and $\mcF^1, \mcF^2 \subseteq \mcF$ are the subsets of jobs processed before and after $J_n$ on $M_2$, respectively.

Similarly as in the proof of Lemma~\ref{lemma63}, since $J_n$ is last processed by $M_1$ while all the other jobs are flow-shop jobs,
we may swap $J_n$ with the job subset processed on $M_1$ after $J_n$, if any, in the schedule $\pi$.
This swapping certainly does not increase the makespan and thus we may assume that, on $M_1$, all the jobs of $\mcF$ are processed before $J_n$.
Symmetrically, we may also assume that, on $M_3$, all the jobs of $\mcF$ are processed after $J_n$.

Note that $\mcF^1, \mcF^2 \subseteq \mcF$ are the subsets of jobs processed before and after $J_n$ in the schedule $\pi$, respectively.
We can also assume that $\mcF^1 \ne \emptyset$, as otherwise we may swap to process $J_n$ first on $M_2$ and then to process $J_n$ on $M_3$
to convert $\pi$ into a feasible schedule for the case where the machine order for $J_n$ is $\langle M_2, M_3, M_1\rangle$.
It follows that the optimal schedule $\pi^*$ constructed in $O(n \log n)$ time in Lemma~\ref{lemma63} serves,
with its makespan $C_{\max}^* = 2q_n + p(\mcF)$.
For the same reason, we can assume that $\mcF^2 \ne \emptyset$.

Since all the jobs of $\mcF^1$ have to be finished by time $S_n^2$ on the two machines $M_1$ and $M_2$,
we have $S_n^2 \ge p_{i_1} + P(\mcF^1)$.
From $C_n^3 \le S_n^2$, we conclude that $S_n^2 \ge \max\{p_{i_1} + P(\mcF^1), q_n\}$.
Similarly, since all the jobs of $\mcF^2$ have to be processed on the two machines $M_2$ and $M_3$, and the earliest starting time is $C_n^2$,
the completion time for the machine $M_3$ is at least $C_n^2 + p_{i_2} + P(\mcF^2)$.
Note that the earliest starting time for processing $J_n$ on $M_1$ is $C_n^2$ too.
It follows that the makespan of the schedule $\pi$ is $C_{\max}^{\pi} \ge \max\{p_{i_1} + P(\mcF^1), q_n\} + q_n + \max\{p_{i_2} + P(\mcF^2), q_n\}$.
See for an illustration in Figure~\ref{fig64}.

\begin{figure}[ht]
\centering
  \setlength{\unitlength}{0.8bp}%
  \begin{picture}(360.56, 119.88)(0,0)
  \put(0,0){\includegraphics[scale=0.8]{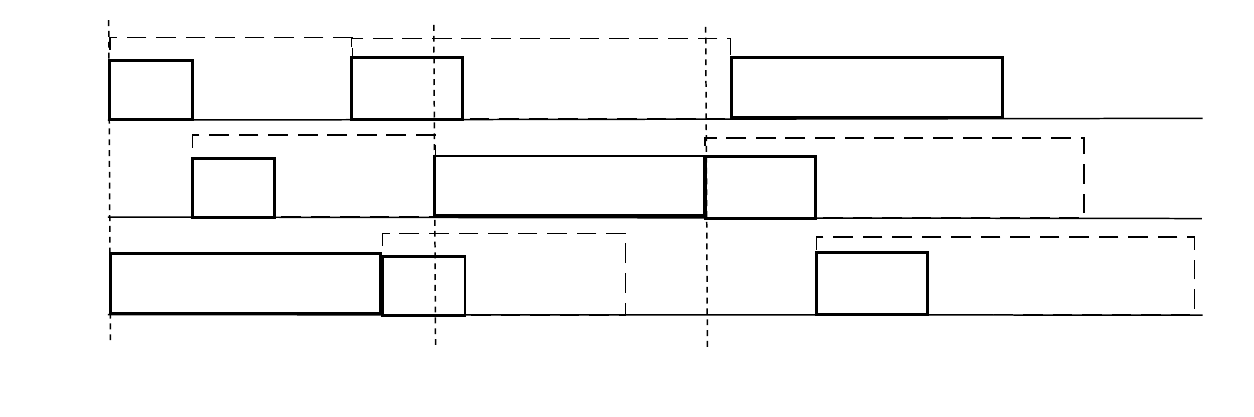}}
  \put(223.93,90.12){\fontsize{14.23}{17.07}\selectfont $J_n$}
  \put(138.76,61.13){\fontsize{14.23}{17.07}\selectfont $J_n$}
  \put(5.67,90.17){\fontsize{14.23}{17.07}\selectfont $M_1$}
  \put(5.67,62.95){\fontsize{14.23}{17.07}\selectfont $M_2$}
  \put(5.67,35.72){\fontsize{14.23}{17.07}\selectfont $M_3$}
  \put(25.64,8.73){\fontsize{14.23}{17.07}\selectfont $0$}
  \put(44.94,33.61){\fontsize{14.23}{17.07}\selectfont $J_n$}
  \put(58.14,90.06){\fontsize{14.23}{17.07}\selectfont ${\cal F}^1$}
  \put(281.04,36.40){\fontsize{14.23}{17.07}\selectfont ${\cal F}^2$}
  \put(33.20,88.92){\fontsize{14.23}{17.07}\selectfont $J_{i_1}$}
  \put(81.92,62.00){\fontsize{14.23}{17.07}\selectfont ${\cal F}^1$}
  \put(56.98,60.86){\fontsize{14.23}{17.07}\selectfont $J_{i_1}$}
  \put(238.71,34.14){\fontsize{14.23}{17.07}\selectfont $J_{i_2}$}
  \put(147.16,92.79){\fontsize{14.23}{17.07}\selectfont ${\cal F}^2$}
  \put(104.82,90.53){\fontsize{14.23}{17.07}\selectfont $J_{i_2}$}
  \put(248.98,64.20){\fontsize{14.23}{17.07}\selectfont ${\cal F}^2$}
  \put(206.64,61.93){\fontsize{14.23}{17.07}\selectfont $J_{i_2}$}
  \put(136.71,33.68){\fontsize{14.23}{17.07}\selectfont ${\cal F}^1$}
  \put(111.77,32.54){\fontsize{14.23}{17.07}\selectfont $J_{i_1}$}
  \end{picture}%
\caption{The schedule $\pi$ in which the machine order for $J_n$ is $\langle M_3, M_2, M_1\rangle$.
	The job subsets $\mcF^1$ and $\mcF^2$ are processed on $M_2$ before and after $J_n$, respectively.
	The jobs $J_{i_1}$ and $J_{i_2}$ are the largest job of $\mcF^1$ and $\mcF^2$, respectively.\label{fig64}}
\end{figure}

On the other hand, if the jobs of $\mcF^1$ are processed in the same LPT order on $M_1$ and $M_2$ in the schedule $\pi$,
then they are finished by time $S_n^2$ on $M_1$ and $M_2$.
Furthermore, due to $p_{i_1} \le p_1 < q_n$, all the jobs of $\mcF^1$ can be processed in the same LPT order on $M_3$ and finished by time $C_n^2$.
Also, if the jobs of $\mcF^2$ are processed in the LPT order on $M_1$ following the jobs of $\mcF^1$,
then by time $C_n^2$, the job $J_{i_2}$ will be finished on $M_1$ and thus can start on $M_2$ at time $C_n^2$.
It follows that if the jobs of $\mcF^2$ are processed in the same LPT order on $M_2$ and $M_3$ in the schedule $\pi$,
then they are finished by time $C_n^2 + p_{i_2} + P(\mcF^2)$ on $M_2$ and $M_3$.
This way, the makespan of such a schedule $\pi$ is $C_{\max}^{\pi} = \max\{p_{i_1} + P(\mcF^1), q_n\} + q_n + \max\{p_{i_2} + P(\mcF^2), q_n\}$.
Note that the schedule can be constructed in $O(n \log n)$ time.
This finishes the proof of the lemma.
\end{proof}

\subsection{The $(1 + \epsilon)$-approximation algorithm $C(\epsilon)$}
%--------------------------------------------------------------------------------------------------
From each job $J_i \in \mcF$ with $i > 1$ and a bipartition $\{\mcC, \overline{\mcC}\}$ of the job subset $\{J_{i+1}, J_{i+2}, \ldots, J_{n-1}\}$,
we obtain a bipartition $\{\mcF^1, \mcF^2\}$ for $\mcF$ where $\mcF^1 = \{J_i\} \cup \mcC$ and $\mcF^2 = \{J_1, J_2, \ldots, J_{i-1}\} \cup \overline{\mcC}$.
Note that $J_i$ and $J_1$ are the largest job of $\mcF^1$ and $\mcF^2$, respectively.
We may then use Lemma~\ref{lemma64} to construct in $O(n \log n)$ time an optimal schedule $\pi^*$ in which 
the machine order for $J_n$ is forced to be $\langle M_3, M_2, M_1\rangle$,
and $\mcF^1, \mcF^2 \subset \mcF$ are the subsets of jobs forced to be processed before and after $J_n$ on the machine $M_2$, respectively.
The makespan of $\pi^*$ is $C_{\max}^* = \max\{p_i + P(\mcF^1), q_n\} + q_n + \max\{p_1 + P(\mcF^2), q_n\}$ (see for an illustration in Figure~\ref{fig64}).
Apparently the schedule $\pi^*$ relies on the bipartition $\{\mcF^1, \mcF^2\}$ of $\mcF$,
and it eventually relies on the job $J_i$ and the bipartition $\{\mcC, \overline{\mcC}\}$ of $\{J_{i+1}, J_{i+2}, \ldots, J_{n-1}\}$.
We therefore re-denote this schedule as $\pi^i(\mcC, \overline{\mcC})$.

Let $\pi^i$ denote the best schedule among all $\pi^i(\mcC, \overline{\mcC})$'s,
over all possible bipartitions $\{\mcC, \overline{\mcC}\}$ of $\{J_{i+1}, J_{i+2}, \ldots, J_{n-1}\}$.
Correspondingly, its makespan is denoted as $C_{\max}^i$.

When $2 p_i \ge q_n$, we have $2 p_1 \ge q_n$ too;
consequently for any bipartition $\{\mcC, \overline{\mcC}\}$, the makespan of the schedule $\pi^i(\mcC, \overline{\mcC})$ is
$p_i + P(\mcF^1) + q_n + p_1 + P(\mcF^2) = p_i + p_1 + q_n + P(\mcF)$.
We thus have proved the following lemma.

\begin{lemma}
\label{lemma65}
When $2 p_i \ge q_n$, the best schedule $\pi^i$ can be constructed in $O(n \log n)$ time and its makespan is $C_{\max}^i = p_i + p_1 + q_n + P(\mcF)$.
\end{lemma}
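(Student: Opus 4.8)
The plan is to observe that, under the hypothesis $2p_i \ge q_n$, the makespan formula supplied by Lemma~\ref{lemma64} collapses to a quantity that does not depend on the bipartition $\{\mcC, \overline{\mcC}\}$, so that every schedule $\pi^i(\mcC, \overline{\mcC})$ already achieves the optimum for its machine-order constraint and they all share one common makespan. First I would recall the construction that precedes the lemma: for a given bipartition, $\mcF^1 = \{J_i\} \cup \mcC$ and $\mcF^2 = \{J_1, J_2, \ldots, J_{i-1}\} \cup \overline{\mcC}$, and since $p_1 \ge p_2 \ge \ldots \ge p_\ell$ the largest job of $\mcF^1$ is $J_i$ and the largest job of $\mcF^2$ is $J_1$; moreover $i > 1$ forces both $\mcF^1$ and $\mcF^2$ to be nonempty, so Lemma~\ref{lemma64} applies directly and gives
\[
C_{\max}^{\pi^i(\mcC,\overline{\mcC})} = \max\{p_i + P(\mcF^1),\, q_n\} + q_n + \max\{p_1 + P(\mcF^2),\, q_n\}.
\]

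Next I would discharge both maxima. Since $J_i \in \mcF^1$ we have $P(\mcF^1) \ge p_i$, hence $p_i + P(\mcF^1) \ge 2 p_i \ge q_n$; and since $p_1 \ge p_i$ the hypothesis also yields $2 p_1 \ge q_n$, so from $J_1 \in \mcF^2$ we get $p_1 + P(\mcF^2) \ge 2 p_1 \ge q_n$. Thus each $\max$ equals its first argument, and because $\{\mcF^1, \mcF^2\}$ partitions $\mcF$ we have $P(\mcF^1) + P(\mcF^2) = P(\mcF)$, so the displayed makespan simplifies to $p_i + p_1 + q_n + P(\mcF)$, independently of the choice of $\mcC$.

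Finally I would conclude that, since every $\pi^i(\mcC, \overline{\mcC})$ has the identical makespan $p_i + p_1 + q_n + P(\mcF)$, the best schedule $\pi^i$ among them has the same makespan $C_{\max}^i = p_i + p_1 + q_n + P(\mcF)$, and it can be produced by fixing any one bipartition (e.g.\ $\mcC = \emptyset$) and running the $O(n \log n)$ construction of Lemma~\ref{lemma64}. There is essentially no real obstacle here: the only points that require care are the implication $2 p_i \ge q_n \Rightarrow 2 p_1 \ge q_n$, which relies on the ordering $p_1 \ge p_i$, and the verification that $\mcF^1, \mcF^2 \ne \emptyset$ so that the clean form of Lemma~\ref{lemma64} (rather than its degenerate reduction to Lemma~\ref{lemma63}) is the one being invoked.
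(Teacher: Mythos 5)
Your proposal is correct and follows essentially the same route as the paper: under $2p_i \ge q_n$ (hence $2p_1 \ge q_n$), both maxima in the formula of Lemma~\ref{lemma64} collapse to their first arguments, so every $\pi^i(\mcC,\overline{\mcC})$ has makespan $p_i + P(\mcF^1) + q_n + p_1 + P(\mcF^2) = p_i + p_1 + q_n + P(\mcF)$ independently of the bipartition. Your extra remarks on $\mcF^1,\mcF^2 \ne \emptyset$ and on $p_1 \ge p_i$ are sensible elaborations of details the paper leaves implicit, but the argument is the same.
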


When $2 p_i < q_n$, the best schedule $\pi^i$ or the best bipartition $\{\mcC, \overline{\mcC}\}$ of $\{J_{i+1}, J_{i+2}, \ldots, J_{n-1}\}$ is hard to locate
(see Theorem~\ref{thm51}).
Nonetheless, we are able to design a fully polynomial time algorithm $B(\epsilon)$ that constructs a feasible schedule $\pi^{i, \epsilon}$
with its makespan $C_{\max}^{i, \epsilon} \le (1 + \epsilon) C_{\max}^i$, for any $\epsilon > 0$.

The algorithm $B(\epsilon)$ first constructs an instance of the {\sc Knapsack} problem,
in which an item $J_j$ ($j = i+1, i+2, \ldots, n-1$) corresponds to the job $J_j$, has a profit $p_j$ and has a size $p_j$,
and the capacity of the knapsack is set to $q_n - 2 p_i$.
It then calls an $O(n \min\{\log n, \log{(1/\epsilon)}\} + 1/{\epsilon}^2 \log{(1/\epsilon)} \min\{n, 1/{\epsilon} \log{(1/\epsilon)}\})$-time  
$(1 + \epsilon)$-approximation for the {\sc Min-Knapsack} problem to obtain a job subset $\mcC$;
and calls an $O(n \min\{\log n, \log{(1/\epsilon)}\} + 1/{\epsilon}^2 \log{(1/\epsilon)} \min\{n, 1/{\epsilon} \log{(1/\epsilon)}\})$-time  
$(1 - \epsilon)$-approximation for the {\sc Max-Knapsack} problem to obtain another job subset $\mcD$.
In another $O(n \log n)$ time, the algorithm constructs two schedules $\pi^i(\mcC, \overline{\mcC})$ and $\pi^i(\mcD, \overline{\mcD})$,
and returns the better one.
A high-level description of the algorithm $B(\epsilon)$ is provided in Figure~\ref{fig65}.

\begin{figure}[ht]
\begin{center}
\framebox{
\begin{minipage}{5.0in}
	{\sc Algorithm}  $B(\epsilon)$:
	\begin{enumerate}%[{Step} 1.]
		\item Construct an instance of {\sc Knapsack}, 
		      where an item $J_j$ corresponds to the job $J_j \in \{J_{i+1}, J_{i+2}, \ldots, J_{n-1}\}$;
		      $J_j$ has a profit $p_j$ and a size $p_j$;
		      the capacity of the knapsack is $q_n - 2p_i$. 
			\begin{description}
			\parskip=0pt
		    \item[1.1.] Run a $(1+\epsilon)$-approximation for {\sc Min-Knapsack} to obtain a job subset $\mcC$. 
		    \item[1.2.]	Use Lemma~\ref{lemma64} to construct a schedule $\pi^i(\mcC, \overline{\mcC})$.
			\end{description}
        \item 
			\begin{description}
			\parskip=0pt
		    \item[2.1.] Run a $(1-\epsilon)$-approximation for {\sc Max-Knapsack} to obtain a job subset $\mcD$.
		    \item[2.2.]	Use Lemma~\ref{lemma64} to construct a schedule $\pi^i(\mcD, \overline{\mcD})$.
			\end{description}
        \item Output the better schedule between $\pi^i(\mcC, \overline{\mcC})$ and $\pi^i(\mcD, \overline{\mcD})$.
	\end{enumerate}
\end{minipage}}
\end{center}
\caption{A high-level description of the algorithm $B(\epsilon)$.\label{fig65}}
\end{figure}

\begin{lemma}
\label{lemma66}
When $2 p_i < q_n$, the algorithm $B(\epsilon)$ constructs a feasible schedule $\pi^{i, \epsilon}$ in 
$O(n \min\{\log n$, $\log{(1/\epsilon)}\} + 1/{\epsilon}^2 \log{(1/\epsilon)} \min\{n, 1/{\epsilon} \log{(1/\epsilon)}\})$ time,
with its makespan $C_{\max}^{i, \epsilon} \le (1 + \epsilon) C_{\max}^i$, for any $\epsilon > 0$.
\end{lemma}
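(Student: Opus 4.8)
The plan is to bound, separately, the makespans of the two candidate schedules $\pi^i(\mcC,\overline{\mcC})$ and $\pi^i(\mcD,\overline{\mcD})$ produced by $B(\epsilon)$ against $C_{\max}^i$, and then to invoke that $B(\epsilon)$ returns whichever of the two is smaller. Feasibility of each candidate and its $O(n\log n)$ construction cost are exactly what Lemma~\ref{lemma64} provides; adding the running times of the two Knapsack approximations gives the stated time bound, so the only substantive point is the $(1+\epsilon)$ ratio.

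Write $R=\{J_{i+1},J_{i+2},\ldots,J_{n-1}\}$. A bipartition $\{\mcC,\overline{\mcC}\}$ of $R$ yields $\mcF^1=\{J_i\}\cup\mcC$ and $\mcF^2=\{J_1,\ldots,J_{i-1}\}\cup\overline{\mcC}$, whose largest jobs are $J_i$ and $J_1$, so by Lemma~\ref{lemma64}
\[
C_{\max}^i(\mcC,\overline{\mcC})=\max\{2p_i+P(\mcC),\ q_n\}+q_n+\max\{p_1+\sum_{j<i}p_j+P(R)-P(\mcC),\ q_n\}.
\]
I would first record two structural facts. First, the two quantities inside the outer maxima sum to $p_i+p_1+P(\mcF)$, a constant independent of $\mcC$; hence $\max\{\cdot,q_n\}+\max\{\cdot,q_n\}\ge p_i+p_1+P(\mcF)$ and therefore $C_{\max}^i\ge p_i+p_1+P(\mcF)+q_n$. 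Second, the first outer max equals $q_n$ exactly when $P(\mcC)\le q_n-2p_i$, i.e.\ when $\mcC$ fits in the knapsack of capacity $q_n-2p_i$ that $B(\epsilon)$ builds — so the Knapsack instance is tailored precisely to this threshold. I would also use the trivial lower bounds $q_n\le C_{\max}^i$ and $P(\mcF)<C_{\max}^i$, which follow from Eq.~(\ref{eq1}).

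Now fix a bipartition $\{\mcC^*,\overline{\mcC^*}\}$ of $R$ attaining $C_{\max}^i$ and split into two cases. If $P(\mcC^*)\le q_n-2p_i$, then $\mcC^*$ is a feasible packing, so the $(1-\epsilon)$-approximate solution $\mcD$ satisfies $P(\mcD)\le q_n-2p_i$ and $P(\mcD)\ge(1-\epsilon)P(\mcC^*)$; hence in $\pi^i(\mcD,\overline{\mcD})$ the first outer max is still $q_n$, while the second outer max exceeds its value in $C_{\max}^i$ by at most $P(\mcC^*)-P(\mcD)\le\epsilon P(\mcC^*)\le\epsilon q_n\le\epsilon C_{\max}^i$, giving $C_{\max}^{i,\epsilon}\le C_{\max}^i(\mcD,\overline{\mcD})\le(1+\epsilon)C_{\max}^i$. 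If $P(\mcC^*)>q_n-2p_i$, then $\mcC^*$ is a feasible solution of the Min-Knapsack instance, so the returned $\mcC$ satisfies $P(\mcC)>q_n-2p_i$ and $P(\mcC)\le(1+\epsilon)P(\mcC^*)$; in $\pi^i(\mcC,\overline{\mcC})$ the first outer max is $2p_i+P(\mcC)$, and either the second outer max is $q_n$, whence $C_{\max}^i(\mcC,\overline{\mcC})=2p_i+P(\mcC)+2q_n\le(2p_i+P(\mcC^*)+2q_n)+\epsilon P(\mcC^*)\le C_{\max}^i+\epsilon P(\mcF)\le(1+\epsilon)C_{\max}^i$, using $C_{\max}^i\ge 2p_i+P(\mcC^*)+2q_n$; or the second outer max exceeds $q_n$, whence by the constant-sum fact $C_{\max}^i(\mcC,\overline{\mcC})=p_i+p_1+P(\mcF)+q_n\le C_{\max}^i$. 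The degenerate situations in which one Knapsack instance has no admissible solution ($P(R)\le q_n-2p_i$, so nothing fails to fit; or no nonempty subset fits) fall into the matching case with $\mcD=R$ or $\mcC=R$ and need only the same inequalities.

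The main obstacle I anticipate is purely organizational: keeping the case split exhaustive while ensuring that every slack term is bounded by $\epsilon$ times a quantity already known to be at most $C_{\max}^i$ (here $q_n$ or $P(\mcF)$), and checking that the Min- and Max-Knapsack calls together straddle the threshold $q_n-2p_i$ so that, whichever side the optimal $P(\mcC^*)$ lies on, the schedule returned on that side already meets the bound.
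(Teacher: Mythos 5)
Your proposal is correct and follows essentially the same route as the paper: the same Min-/Max-Knapsack candidates are compared against the optimal bipartition $\{\mcC^*,\overline{\mcC^*}\}$ according to whether $P(\mcC^*)$ exceeds the capacity $q_n-2p_i$, with the slack $\epsilon P(\mcC^*)$ absorbed into $\epsilon C_{\max}^i$. The only difference is organizational --- you split first on $P(\mcC^*)$ and treat the ``both maxima active'' and ``$3q_n$'' situations as sub-cases (your constant-sum observation is the paper's Case~1 computation), whereas the paper dispatches those two degenerate cases up front --- and your handling of the case where the Min-Knapsack instance is infeasible is a welcome explicit remark.
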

\begin{proof}
The running time of the algorithm $B(\epsilon)$ is dominated by the two calls to the approximation algorithms for the {\sc Knapsack} problem,
which are in $O(n \min\{\log n, \log{(1/\epsilon)}\} + 1/{\epsilon}^2 \log{(1/\epsilon)}$ $\min\{n, 1/{\epsilon} \log{(1/\epsilon)}\})$.

Let $\OPT^3$ ($\OPT^4$, respectively) denote the optimal solution to the {\sc Min-Knapsack} ({\sc Max-Knapsack}, respectively) problem
and also abuse it to denote the total profit of the items in the solution.
We therefore have the following (in-)equalities inside the algorithm $B(\epsilon)$:
\begin{eqnarray}
\OPT^3 		&= &\min \{ P(\mcX) \mid \mcX \subseteq \{J_{i+1}, J_{i+2}, \ldots, J_{n-1}\},~ P(\mcX) > q_n - 2p_i \};\label{eq8}\\
q_n - 2p_i  &< &P(\mcC) \ \ \le \ \ (1+\epsilon) \OPT^3;\label{eq9}\\
\OPT^4 		&= &\max \{ P(\mcY) \mid \mcY \subseteq \{J_{i+1}, J_{i+2}, \ldots, J_{n-1}\},~ P(\mcY) \le q_n - 2p_i \};\label{eq10}\\
q_n - 2p_i  &\ge &P(\mcD) \ \ \ge \ \ (1-\epsilon) \OPT^4.\label{eq11}
\end{eqnarray}

Let $C_{\max}^1$ ($C_{\max}^2$, respectively) denote the makespan of the schedule $\pi^i(\mcC, \overline{\mcC})$ ($\pi^i(\mcD, \overline{\mcD})$, respectively).
That is,
\begin{eqnarray}
C_{\max}^1 &= &2p_i + P(\mcC) + q_n + \max\left\{p_1 + \sum_{j=1}^{i-1} p_j + P(\overline{\mcC}), q_n\right\};\label{eq12}\\
C_{\max}^2 &= &2q_n + \max\left\{p_1 + \sum_{j=1}^{i-1} p_j + P(\overline{\mcD}), q_n\right\}.\label{eq13}
\end{eqnarray}
We distinguish three cases:

In the first case where $p_1 + \sum_{j=1}^{i-1} p_j + P(\overline{\mcC}) \ge q_n$, Eq.~(\ref{eq12}) becomes
\[
C_{\max}^1 = 2 p_i + P(\mcC) + q_n + p_1 + \sum_{j=1}^{i-1} p_j + P(\overline{\mcC}) = p_i + p_1 + P(\mcF) + q_n \le C_{\max}^i,
\]
where the last inequality holds by Lemma~\ref{lemma64}, suggesting that the schedule $\pi^i(\mcC, \overline{\mcC})$ is optimal.

In the second case where $p_1 + \sum_{j=1}^{i-1} p_j + P(\overline{\mcD}) \le q_n$, Eq.~(\ref{eq13}) becomes
\[
C_{\max}^2 = 3q_n \le C_{\max}^i,
\]
where the last inequality holds by Lemma~\ref{lemma64} again, suggesting that the schedule $\pi^i(\mcD, \overline{\mcD})$ is optimal.

In the last case, we consider $p_1 + \sum_{j=1}^{i-1} p_j + P(\overline{\mcC}) < q_n$ and $p_1 + \sum_{j=1}^{i-1} p_j + P(\overline{\mcD}) > q_n$.
Assume in the best schedule $\pi^i$ the bipartition of $\{J_{i+1}, J_{i+2}, \ldots, J_{n-1}\}$ is $\{\mcC^*, \overline{\mcC^*}\}$.

If $P(\mcC^*) > q_n - 2p_i$, {\it i.e.}, $\mcC^*$ is a feasible solution to the constructed {\sc Min-Knapsack} instance, then we have
\begin{equation}
\label{eq14}
\OPT^3 \le P(\mcC^*).
\end{equation}
Using Eqs.~(\ref{eq9}, \ref{eq14}), Eq.~(\ref{eq12}) becomes
\begin{equation}
\label{eq15}
\begin{array}{lll}
C_{\max}^1 	&= \ 2q_n + 2p_i + P(\mcC)\\
			&\le \ 2q_n + 2p_i + (1 + \epsilon) \OPT^3\\
        	&\le \ 2q_n + 2p_i + (1 + \epsilon) P(\mcC^*)\\
		 	&\le \ (1 + \epsilon) (2q_n + 2p_i + P(\mcC^*))\\
        	&\le \ (1 + \epsilon) C_{\max}^i,
\end{array} 
\end{equation}
where the last inequality holds by Lemma~\ref{lemma64} again.

Otherwise we have $P(\mcC^*) \le q_n - 2p_i$, {\it i.e.}, $\mcC^*$ is a feasible solution to the constructed {\sc Max-Knapsack} instance;
we have
\begin{equation}
\label{eq16}
\OPT^4 \ge P(\mcC^*).
\end{equation}
Using Eqs.~(\ref{eq11}, \ref{eq16}), Eq.~(\ref{eq13}) becomes
\begin{equation}
\label{eq17}
\begin{array}{ll}
C_{\max}^2 	&= \ 2q_n + p_1 + \sum_{j=1}^{i-1} p_j + P(\overline{\mcD})\\
			&= \ 2q_n + p_1 + \sum_{j=1}^{i-1} p_j + P(\overline{\mcC^*}) + P(\overline{\mcD}) - P(\overline{\mcC^*})\\
			&= \ 2q_n + p_1 + \sum_{j=1}^{i-1} p_j + P(\overline{\mcC^*}) + P(\mcC^*) - P(\mcD)\\
			&\le \ 2q_n + p_1 + \sum_{j=1}^{i-1} p_j + P(\overline{\mcC^*}) + \epsilon P(\mcC^*)\\
			&\le \ C_{\max}^i + \epsilon P(\mcC^*)\\
        	&\le \ (1 + \epsilon) C_{\max}^i,
\end{array} 
\end{equation}
where the second last inequality holds by Lemma~\ref{lemma64} and the last inequality holds due to the trivial fact that $P(\mcC^*) \le C_{\max}^i$.

Therefore, in the last case, combining Eqs.~(\ref{eq15}, \ref{eq17}) we have
\[
C_{\max}^{i,\epsilon} = \min\{C_{\max}^1, C_{\max}^2\} \le (1 + \epsilon) C_{\max}^i.
\]
This finishes the proof of the lemma.
\end{proof}

The $(1 + \epsilon)$-approximation algorithm $C(\epsilon)$ for the $M3 \mid prpt, (n-1, 1) \mid C_{\max}$ problem takes advantage of
the $(1 + \epsilon)$-approximation algorithm $A(\epsilon)$ for the $M3 \mid prpt \mid C_{\max}$ problem when $p_1 \ge q_{\ell+1}$
(presented in Section 3, see Theorem~\ref{thm37}),
the optimal schedule constructed in Lemma~\ref{lemma61} for the $M3 \mid prpt, (n-1, 1), p_1 < q_n \mid C_{\max}$ problem when $P(\mcF) \le q_n$,
the two optimal schedules constructed in Lemma~\ref{lemma63} for the $M3 \mid prpt, (n-1, 1), p_1 < q_n \mid C_{\max}$ problem
when the machine order for $J_n$ is forced to be $\langle M_2, M_3, M_1\rangle$ or $\langle M_3, M_1, M_2\rangle$, respectively,
denoted as $\pi^{2,3,1}$ and $\pi^{3,1,2}$, respectively,
and the $n-2$ schedules $\pi^i$, $i = 2, 3, \ldots, n-1$, constructed either in Lemma~\ref{lemma65} or by the algorithm $B(\epsilon)$
(see Lemma~\ref{lemma66}).
A high-level description of $C(\epsilon)$ is depicted in Figure~\ref{fig66}.

\begin{figure}[ht]
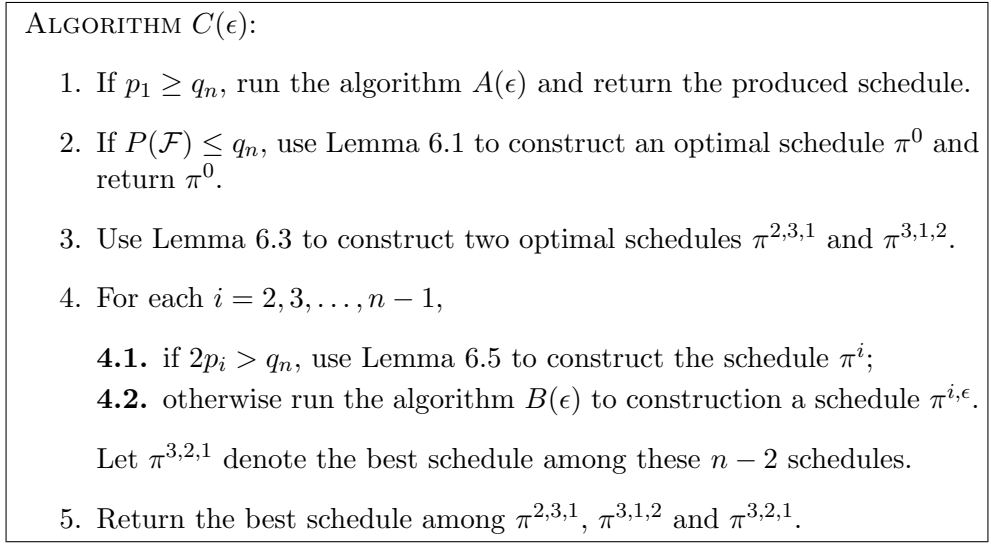

\begin{center}
\framebox{
\begin{minipage}{5.0in}
{\sc Algorithm}  $C(\epsilon)$:
\begin{enumerate}%[{Step} 1.]
\item
	If $p_1 \ge q_n$, run the algorithm $A(\epsilon)$ and return the produced schedule.
\item
	If $P(\mcF) \le q_n$, use Lemma~\ref{lemma61} to construct an optimal schedule $\pi^0$ and return $\pi^0$.
\item
	Use Lemma~\ref{lemma63} to construct two optimal schedules $\pi^{2,3,1}$ and $\pi^{3,1,2}$.
\item
	For each $i = 2, 3, \ldots, n-1$,
	\begin{description}
	\parskip=0pt
    \item[4.1.] if $2p_i > q_n$, use Lemma~\ref{lemma65} to construct the schedule $\pi^i$;
    \item[4.2.]	otherwise run the algorithm $B(\epsilon)$ to construction a schedule $\pi^{i,\epsilon}$.
	\end{description}
	Let $\pi^{3,2,1}$ denote the best schedule among these $n-2$ schedules.
\item
	Return the best schedule among $\pi^{2,3,1}$, $\pi^{3,1,2}$ and $\pi^{3,2,1}$.
\end{enumerate}
\end{minipage}}
\end{center}
\caption{A high-level description of the algorithm $C(\epsilon)$.\label{fig66}}
\end{figure}

\begin{theorem}
\label{thm67}
The algorithm $C(\epsilon)$ is an $O(n^2 \min\{\log n, \log{(1/\epsilon)}\} + n/{\epsilon}^2 \log{(1/\epsilon)} \min\{n, 1/{\epsilon} \log{(1/\epsilon)}\})$-time
$(1 + \epsilon)$-approximation for the $M3 \mid prpt, (n-1, 1) \mid C_{\max}$ problem.
\end{theorem}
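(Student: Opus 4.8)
The plan is to treat the running-time bound and the approximation guarantee separately, and to handle the guarantee by a case analysis driven by the structural lemmas of this section. For the running time I would walk through the five steps of $C(\epsilon)$: Step~1 invokes $A(\epsilon)$, whose cost is given by Theorem~\ref{thm37}; Steps~2 and~3 cost $O(n)$ and $O(n\log n)$ by Lemmas~\ref{lemma61} and~\ref{lemma63}; Step~4 is a loop of $n-2$ iterations, each costing either $O(n\log n)$ (Lemma~\ref{lemma65}) or the running time of $B(\epsilon)$ stated in Lemma~\ref{lemma66}; and Step~5 costs $O(n)$. Multiplying the dominant per-iteration cost of Step~4 by $n$, and noting that this absorbs the cost of Step~1, yields exactly the claimed bound.

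For correctness I would show the returned schedule has makespan at most $(1+\epsilon)C_{\max}^*$ by distinguishing the branches of $C(\epsilon)$. If $p_1\ge q_n$, Step~1 returns a $(1+\epsilon)$-approximation directly by Theorem~\ref{thm37}. If $p_1<q_n$ and $P(\mcF)\le q_n$, Step~2 returns an optimal schedule by Lemma~\ref{lemma61}. In the remaining case $p_1<q_n$ and $P(\mcF)>q_n$, I would first apply Lemma~\ref{lemma62} to restrict attention to an optimal schedule in which $J_n$ visits $M_3$ before $M_1$, so that the machine order for $J_n$ is one of $\langle M_2,M_3,M_1\rangle$, $\langle M_3,M_1,M_2\rangle$, or $\langle M_3,M_2,M_1\rangle$. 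In the first two sub-cases, Lemma~\ref{lemma63} gives $C_{\max}^*=2q_n+P(\mcF)$, which is exactly the makespan of $\pi^{2,3,1}$ or $\pi^{3,1,2}$ built in Step~3, so the output is optimal.

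The crux is the sub-case where the machine order for $J_n$ is $\langle M_3,M_2,M_1\rangle$. Let $\{\mcF^1,\mcF^2\}$ be the bipartition of $\mcF$ induced on $M_2$ by an optimal schedule. I would use two observations: (i) the makespan formula of Lemma~\ref{lemma64} is symmetric in $\mcF^1$ and $\mcF^2$, so we may assume $J_1\in\mcF^2$; and (ii) once $J_1\in\mcF^2$, letting $J_i$ be the largest-index job of $\mcF^1$ (so $i\ge 2$), we have $\mcF^1=\{J_i\}\cup\mcC$ and $\mcF^2=\{J_1,\dots,J_{i-1}\}\cup\overline{\mcC}$ for some bipartition $\{\mcC,\overline{\mcC}\}$ of $\{J_{i+1},\dots,J_{n-1}\}$, with $J_i$ and $J_1$ the largest jobs of $\mcF^1$ and $\mcF^2$. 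Hence the optimum over this machine order equals $\min_i C_{\max}^i$, and Step~4 approximates this within $1+\epsilon$: when $2p_i\ge q_n$ Lemma~\ref{lemma65} computes $C_{\max}^i$ exactly, and when $2p_i<q_n$ the algorithm $B(\epsilon)$ returns $\pi^{i,\epsilon}$ with $C_{\max}^{i,\epsilon}\le(1+\epsilon)C_{\max}^i$ by Lemma~\ref{lemma66}. Thus $\pi^{3,2,1}$ has makespan at most $(1+\epsilon)C_{\max}^*$, and so does the best of $\pi^{2,3,1}$, $\pi^{3,1,2}$, $\pi^{3,2,1}$ returned in Step~5.

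I expect the main obstacle to be point (ii) above: verifying that the family of bipartitions enumerated in Step~4 — indexed by a job $J_i$ with $i\ge 2$ together with a bipartition of $\{J_{i+1},\dots,J_{n-1}\}$ — really covers, up to the symmetry in (i) and possibly-empty parts, every bipartition of $\mcF$ that can arise in an optimal $\langle M_3,M_2,M_1\rangle$-schedule, so that no optimal configuration is overlooked. The degenerate cases $\mcF^1=\emptyset$ or $\mcF^2=\emptyset$ must be routed back to the $\langle M_2,M_3,M_1\rangle$ situation covered by Lemma~\ref{lemma63}, exactly as in the proof of Lemma~\ref{lemma64}.
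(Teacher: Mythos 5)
Your proposal is correct and follows essentially the same route as the paper: the same running-time accounting (the $O(n)$ calls to $B(\epsilon)$ in Step~4 dominating the single call to $A(\epsilon)$) and the same case analysis via Theorem~\ref{thm37} and Lemmas~\ref{lemma61}--\ref{lemma66}, with your point~(ii) merely making explicit the coverage argument that the paper leaves implicit in its setup of the schedules $\pi^i(\mcC,\overline{\mcC})$. One small slip: $J_i$ should be the \emph{smallest}-index (hence largest) job of $\mcF^1$, not the largest-index one, as your own subsequent sentence already assumes.
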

\begin{proof}
Recall from Theorem~\ref{thm37} that the running time of the algorithm $A(\epsilon)$ is in
$O(n \min\{\log n$, $\log{(1/\epsilon)}\} + 1/{\epsilon}^2 \log{(1/\epsilon)} \min\{n, 1/{\epsilon} \log{(1/\epsilon)}\})$;
Lemma~\ref{lemma61} construct the optimal schedule $\pi^0$ in $O(n)$ time;
Lemma~\ref{lemma63} constructs each of the two optimal schedules $\pi^{2,3,1}$ and $\pi^{3,1,2}$ in $O(n \log n)$ time;
Lemma~\ref{lemma65} construct the schedule $\pi^i$ also in $O(n \log n)$ time;
and the running time of the algorithm $B(\epsilon)$ is in
$O(n \min\{\log n, \log{(1/\epsilon)}\} + 1/{\epsilon}^2 \log{(1/\epsilon)} \min\{n, 1/{\epsilon} \log{(1/\epsilon)}\})$.
Since $A(\epsilon)$ is called only once, while $B(\epsilon)$ could be called for $O(n)$ times,
the overall time complexity of $C(\epsilon)$ is an order higher, that is,
$O(n^2 \min\{\log n, \log{(1/\epsilon)}\} + n/{\epsilon}^2 \log{(1/\epsilon)} \min\{n, 1/{\epsilon} \log{(1/\epsilon)}\})$.

The worst-case performance ratio of $1 + \epsilon$ is implied by Theorem~\ref{thm37} and Lemmas~\ref{lemma61}--\ref{lemma66}.
In more details, if $p_1 \ge q_n$, then the ratio is guaranteed by Theorem~\ref{thm37};
if $P(\mcF) \le q_n$, then the optimality is guaranteed by Lemma~\ref{lemma61};
otherwise, Lemma~\ref{lemma62} states that in the optimal schedule, the machine order for the unique open-shop job $J_n$ is
one of $\langle M_2, M_3, M_1\rangle$, $\langle M_3, M_1, M_2\rangle$, and $\langle M_3, M_2, M_1\rangle$:
in the first two cases, the optimality is guaranteed by Lemma~\ref{lemma63},
while in the last case, the performance ratio is guaranteed by Lemmas~\ref{lemma65} and \ref{lemma66} together.
This finishes the proof of the theorem.
\end{proof}

\section{Concluding remarks} \label{sec7}
%==================================================================================================
In this paper, we studied the three-machine proportionate mixed shop problem $M3 \mid prpt \mid C_{\max}$.
We presented first an FPTAS for the case where $p_1 \ge q_{\ell + 1}$;
and then proposed a $4/3$-approximation algorithm for the other case where $p_1 < q_{\ell + 1}$,
for which we also showed that the performance ratio of $4/3$ is asymptotically tight.
The $F3 \mid prpt \mid C_{\max}$ problem is polynomial-time solvable;
we showed an interesting hardness result that adding only one open-shop job to the job set makes the problem NP-hard
if the open-shop job is larger than any flow-shop job.
The special case in which there is only one open-shop job is denoted as $M3 \mid prpt, (n-1, 1) \mid C_{\max}$.
Lastly we proposed an FPTAS for this special case $M3 \mid prpt, (n-1, 1) \mid C_{\max}$.

We believe that when $p_1 < q_{\ell + 1}$, the $M3 \mid prpt \mid C_{\max}$ problem can be better approximated than $4/3$,
and an FPTAS is perhaps possible.
Our last FPTAS for the special case $M3 \mid prpt, (n-1, 1) \mid C_{\max}$ can be considered as the first successful step towards such an FPTAS.

\subsection*{Acknowledgement}
%==============================================================================
%The authors would like to thank the anonymous reviewers for their many suggestions and comments that help improve the paper presentation.
LL was supported by the China Scholarship Council Grant 201706315073 and the Fundamental Research Funds for the Central Universities Grant No. 20720160035.
YC and AZ were supported by the NSFC Grants 11771114 and 11571252;
YC was also supported by the China Scholarship Council Grant 201508330054.
JD was supported by the NSFC Grant 11501512 and the Zhejiang Provincial Natural Science Foundation Grant No. LY18A010029.
RG and GL were supported by the NSERC Canada;
LL, YC, JD, GN and AZ were all partially supported by the NSERC Canada during their visits to Edmonton.
GN was supported by the NSFC Grant 71501045, the Natural Science Foundation of Fujian Province Grant 2016J01332 and the Education Department of Fujian Province.

%\bibliography{BiBTeX/mybibfile,BiBTeX/scheduling,BiBTeX/mypapers,BiBTeX/general,BiBTeX/ppp}

\begin{thebibliography}{10}

\bibitem{Bru95}
P.~Brucker.
\newblock {\em Scheduling algorithms}.
\newblock Berlin: Springer, 1995.

\bibitem{CT81}
F.~Y. Chin and L.~L. Tsai.
\newblock On $j$-maximal and $j$-minimal flow shop schedules.
\newblock {\em Journal of the ACM}, 28:462--476, 1981.

\bibitem{GJ79}
M.~R. Garey and D.~S. Johnson.
\newblock {\em Computers and Intractability: A Guide to the Theory of
  NP-completeness}.
\newblock W. H. Freeman and Company, San Francisco, 1979.

\bibitem{GLL79}
R.~L. Graham, E.~L. Lawler, J.~K. Lenstra, and R.~Kan.
\newblock Optimization and approximation in deterministic sequencing and
  scheduling: A survey.
\newblock {\em Annuals of Discrete Mathematics}, 5:287--326, 1979.

\bibitem{KP04}
H.~Kellerer and U.~Pferschy.
\newblock Improved dynamic programming in connection with an {FPTAS} for the
  knapsack problem.
\newblock {\em Journal of Combinatorial Optimization}, 8:5--11, 2004.

\bibitem{KPP04}
H.~Kellerer, U.~Pferschy, and D.~Pisinger.
\newblock {\em Knapsack Problems}.
\newblock Springer, Berlin, 2004.

\bibitem{KK15}
C.~Koulamas and G.~J. Kyparisis.
\newblock The three-machine proportionate open shop and mixed shop minimum
  makespan problems.
\newblock {\em European Journal of Operational Research}, 243:70--74, 2015.

\bibitem{LB87}
C.~Y. Liu and R.~L. Bulfin.
\newblock Scheduling ordered open shops.
\newblock {\em Computers \& Operations Research}, 14:257--264, 1987.

\bibitem{MIN85}
T.~Masuda, H.~Ishii, and T.~Nishida.
\newblock The mixed shop scheduling problem.
\newblock {\em Discrete Applied Mathematics}, 11:175 -- 186, 1985.

\bibitem{OW85}
P.~S. Ow.
\newblock Focused scheduling in proportionate flowshops.
\newblock {\em Management Science}, 31:852--869, 1985.

\bibitem{PSK13}
S.S. Panwalkar, M.~L. Smith, and C.~Koulamas.
\newblock Review of the ordered and proportionate flow shop scheduling
  research.
\newblock {\em Naval Research Logistics}, 60:46--55, 2013.

\bibitem{SS94}
N.~V. Shakhlevich and Y.~N. Sotskov.
\newblock Scheduling two jobs with fixed and nonfixed routes.
\newblock {\em Computing}, 52:17--30, 1994.

\bibitem{SSW00}
N.~V. Shakhlevich, Y.~N. Sotskov, and F.~Werner.
\newblock Complexity of mixed shop scheduling problems: A survey.
\newblock {\em European Journal of Operational Research}, 120:343 -- 351, 2000.

\bibitem{SSW99}
N.V. Shakhlevich, Y.~N. Sotskov, and F.~Werner.
\newblock Shop-scheduling problems with fixed and non-fixed machine orders of
  the jobs.
\newblock {\em Annals of Operations Research}, 92:281--304, 1999.

\bibitem{Str91}
V.~A. Strusevich.
\newblock Two-machine super-shop scheduling problem.
\newblock {\em The Journal of the Operational Research Society}, 42:479--492,
  1991.

\end{thebibliography}

\end{document}